\numberwithin{equation}{section}
\theoremstyle{plain}
\newtheorem{theorem}{Theorem}[section]
\newaliascnt{corollary}{theorem}
\newtheorem{corollary}[corollary]{Corollary}
\newaliascnt{lemma}{theorem}
\newtheorem{lemma}[lemma]{Lemma}
\newaliascnt{proposition}{theorem}
\definecolor{black}{RGB}{0,0,0}
\definecolor{blue}{RGB}{0,0,255}
\definecolor{darkgreen}{RGB}{0,100,0}
\definecolor{red}{RGB}{255,0,0}
\definecolor{brown}{RGB}{165,42,42}
\definecolor{darkorchid4}{RGB}{104,34,139}
\definecolor{darkslategray}{RGB}{47,79,79}
\definecolor{darkolivegreen}{RGB}{85,107,47}
\definecolor{indigo}{RGB}{75,0,130}
\renewcommand{\mid}{\,\middle\arrowvert\,}
\newcommand{\midil}{\,|\,}
\begin{document}

\begin{frontmatter}

\title{Multi-sample Comparison Using Spatial Signs for Infinite Dimensional Data}
\runtitle{Multi-sample Comparison Using Spatial Signs}

\author{\fnms{Joydeep} \snm{Chowdhury}\thanksref{t1}\corref{}\ead[label=e1]{joydeepchowdhury01@gmail.com}}

\address{King Abdullah University of Science and Technology, Thuwal, Saudi Arabia, \printead{e1}}

\author{\fnms{Probal} \snm{Chaudhuri}\ead[label=e2]{probal@isical.ac.in}}

\address{Indian Statistical Institute, Kolkata, India, \printead{e2}}

\thankstext{t1}{Corresponding author}



\runauthor{J. Chowdhury and P. Chaudhuri}

\begin{abstract}
We consider an analysis of variance type problem, where the sample observations are random elements in an infinite dimensional space. This scenario covers the case, where the observations are random functions. For such a problem, we propose a test based on spatial signs.
We develop an asymptotic implementation as well as a bootstrap implementation and a permutation implementation of this test and investigate their size and power properties.
We compare the performance of our test with that of several mean based tests of analysis of variance for functional data studied in the literature.
Interestingly, our test not only outperforms the mean based tests in several non-Gaussian models with heavy tails or skewed distributions, but in some Gaussian models also.
Further, we also compare the performance of our test with the mean based tests in several models involving contaminated probability distributions.
Finally, we demonstrate the performance of these tests in three real datasets: a Canadian weather dataset, a spectrometric dataset on chemical analysis of meat samples and a dataset on orthotic measurements on volunteers.
\end{abstract}

\begin{keyword}[class=MSC]
\kwd[Primary ]{62R10}
\kwd[; secondary ]{62J10}
\end{keyword}

\begin{keyword}
\kwd{Analysis of variance}
\kwd{Bootstrap test}
\kwd{Functional data}
\kwd{Gaussian process}
\kwd{Kruskal Wallis test}
\kwd{Permutation test}
\kwd{$ t $ process}
\end{keyword}


\tableofcontents

\end{frontmatter}


\section{Introduction} \label{sec:1}
Analysis of variance based on ranks of real valued observations was studied by \cite{kruskal1952use}.
Suppose that we have $ n = \sum_{k=1}^{K} n_k $ independent real valued observations $ \{ X_{k i} : i = 1, \ldots, n_k;\, k = 1, \ldots, K \} $ from $ K $ groups.
We are interested in testing whether the distributions of the groups are same or not.
We rank all the observations together, and let $ r_{k i} $ be the rank of the $ i^\text{th} $ observation in the $ k^\text{th} $ group. Let $ \bar{r}_{k \cdot} = n_k^{-1} \sum_{i=1}^{n_k} r_{k i} $ and $ \bar{r}_{\cdot \cdot} = n^{-1} \sum_{k=1}^{K} n_k \bar{r}_{k \cdot} = 0.5 ( n + 1 ) $.
The test statistic for the Kruskal-Wallis test is
\begin{align*}
W_n = ( n - 1 ) \frac{\sum_{k=1}^{K} n_k ( \bar{r}_{k \cdot} - \bar{r}_{\cdot \cdot} )^2}{\sum_{k=1}^{K} \sum_{i=1}^{n_k} ( r_{k i} - \bar{r}_{\cdot \cdot} )^2} ,
\end{align*}
and a large value of $ W_n $ indicates that the distributions of the groups are not same.
When the groups have the same continuous distribution, the distribution of $ W_n $ is independent of the underlying distribution of the groups.
The usual mean based ANOVA test is developed assuming that the distributions of the groups are Gaussian. It is well-known that the Kruskal-Wallis test exhibits better performance than the mean based ANOVA test when the groups have non-Gaussian distributions with heavy tails.

In \cite{mottonen1995multivariate}, \cite{mottonen1997efficiency} and \cite{oja2010multivariate}, several nonparametric tests based on multivariate spatial signs and ranks were investigated.
In \cite{choi1997approach,choi2002multivariate}, spatial signs were employed to generalize the Kruskal-Wallis test in a multivariate analysis of variance setup.
Those spatial signs and ranks are intrinsically related to spatial quantiles investigated in \cite{chaudhuri1996geometric} and \cite{koltchinskii1997m}.
Consider independent multivariate observations $ \{ \mathbf{X}_{k i} : i = 1, \ldots, n_k;\, k = 1, \ldots, K \} $ from $ K $ groups, and $ n = \sum_{k=1}^{K} n_k $.
Define $ \mathbf{r}( \mathbf{x} ) = n^{-1} \sum_{k=1}^{K} \sum_{i=1}^{n_k} \mathbf{s}\left( \mathbf{x} - \mathbf{X}_{k i} \right) $ and $ \bar{\mathbf{r}}_{k \cdot} = n_k^{-1} \sum_{i=1}^{n_k} \mathbf{r}( \mathbf{X}_{k i} ) $, where $ \mathbf{s}\left( \mathbf{x} - \mathbf{y} \right) = \allowbreak \left( \mathbf{x} - \mathbf{y} \right) / \left\| \mathbf{x} - \mathbf{y} \right\| $ for $ \mathbf{x} \neq \mathbf{y} $ and $ \mathbf{s}\left( \mathbf{0} \right) = \mathbf{0} $.
Note that when the observations are univariate, $ \mathbf{r}\left( \mathbf{X}_{k i} \right) = n^{-1} ( 2 \text{rank}( \mathbf{X}_{k i} ) - ( n + 1 ) ) $, where $ \text{rank}( \mathbf{X}_{k i} ) $ is the rank of the univariate observation $ \mathbf{X}_{k i} $ within the pooled sample.
Let $ \mathbf{r}_k\left( \mathbf{x} \right) = n_k^{-1} \sum_{i=1}^{n_k} \mathbf{s}\left( \mathbf{x} - \mathbf{X}_{k i} \right) $ and $ \hat{\boldsymbol{\Sigma}}_n = ( n - K )^{-1} \sum_{k=1}^{K} \sum_{i=1}^{n_k} \mathbf{r}_k\left( \mathbf{X}_{k i} \right) \mathbf{r}_k\left( \mathbf{X}_{k i} \right)^t $.
The test statistic for the Choi-Marden one-way ANOVA test is
\begin{align*}
M_n = \sum_{k=1}^{K} n_k ( \bar{\mathbf{r}}_{k \cdot} )^t \hat{\boldsymbol{\Sigma}}_n^{-1} \bar{\mathbf{r}}_{k \cdot} ,
\end{align*}
which is used to test the equality of the distributions of the groups.
Unlike the Kruskal-Wallis test for univariate data, the Choi-Marden test is not distribution-free under the null hypothesis, and the authors implemented it using the asymptotic distribution of the test statistic.
But, like the Kruskal-Wallis test, the Choi-Marden test usually exhibits better powers and asymptotic relative efficiencies when the underlying distributions of the groups are non-Gaussian with heavy tails.
When the observations $ \mathbf{X}_{k i} $ are random functions, considered as random elements in an appropriate $ L_2 $ space, instead of finite dimensional random vectors, the Choi-Marden one-way ANOVA test cannot be directly applied for them. This is because in this case, the underlying space is an infinite dimensional Hilbert space, and the estimated covariance operator $ \hat{\boldsymbol{\Sigma}}_n $ in an infinite dimensional space is not invertible and cannot be used to standardize the test statistic. Methodology for functional data, which are infinite dimensional in nature, and also high dimensional data requires non-involvement of the inverse of the sample covariance (see, e.g., \cite{kong2020high,harrar2022recent} for some methods in high dimensional data developed without using the inverse of the sample covariance).

Analysis of variance for functional data has been investigated by several authors.
An ANOVA test for functional data based on the $ L_2 $ distance between the group means in the sample was proposed in \cite{cuevas2004anova}.
In \cite{shen2004f}, \cite{zhang2007statistical} and \cite{zhang2011statistical}, hypothesis testing in a linear model involving functional responses was investigated.
In \cite{causeur2019functional}, a similar testing procedure was described in functional linear models and ANOVA, and demonstrated its utility in investigating brain electrical activity.
In \cite{zhang2014one}, a test of ANOVA was described, where the test statistic is obtained by integrating the pointwise F-statistic for the functional observations.
In \cite{zhang2018new}, another test of ANOVA was proposed, where the test statistic is the supremum of the pointwise F-statistic.
In \cite{horvath2015introduction}, a test of ANOVA was introduced, where the functional observations are projected on a number of principal components of the sample covariance operator, rendering the test as a test for finite dimensional observations.
In \cite{gorecki2015comparison}, a permutation test of ANOVA was described for functional data. In \cite{cuesta2010simple}, a test of ANOVA for functional data was described based on random projections.
In \cite{aristizabal2019analysis}, the procedures by \cite{shen2004f} and \cite{cuesta2010simple} was adapted in an ANOVA problem for spatially correlated functional data.
In \cite{guo20192}, an $ L_2 $-norm based ANOVA procedure was used, which is similar to those proposed by \cite{shen2004f} and \cite{zhang2007statistical}, for weakly dependent functional time series.
All these tests in the literature involving functional data are based on the mean of the response, as in the classical univariate and multivariate ANOVA problems.
Recently, an ANOVA procedure was proposed in \cite{shinohara2019distance} based on the pairwise distances of the observations.

In this paper, we construct a test for one-way ANOVA for infinite dimensional data based on spatial signs, and describe implementations of this test based on the asymptotic distribution as well as a permutation procedure and a bootstrap procedure. We found that this test exhibits superior performance than the mean based tests in non-Gaussian heavy-tail processes, contaminated Gaussian and non-Gaussian processes and skewed processes. In the case of some Gaussian models also, our test outperforms the mean based tests.
We demonstrate the usefulness of our testing procedure in simulated and real datasets.

\section{A test based on spatial signs} \label{sec:2}
In the context of functional data analysis, sample observations are considered as real valued random functions defined over some domain, which may be an interval, a rectangle, or some other set. The space of such functions can be considered as an $ L_2 $ space by equipping it with an appropriate $ L_2 $-norm. It is well known that such an $ L_2 $ space will be a separable Hilbert space when it is associated with a Borel $ \sigma $-field and a $ \sigma $-finite measure. In other words, one can model those random functions as random elements in a separable Hilbert space. The advantage of this approach is that the methodology is readily applicable to diverse types of data that can be viewed as elements in a separable Hilbert space, and we develop an ANOVA procedure for such data.

We consider a sample of $ n $ independent observations in a separable Hilbert space $ \mathcal{H} $, which is divided into $ K $ groups with the $ k^\text{th} $ group having $ n_k $ observations for $ k = 1, \ldots, K $. The members of the $ k^\text{th} $ group are denoted as $ \mathbf{X}_{k i} $ for $ i = 1, \ldots, n_k $.
We assume that all the observations are independent, and for each of the groups, the observations within that group have identical distribution.
Let the probability distribution of the $ k^\text{th} $ group be denoted as $ P_k $. We are interested in testing whether all the distributions $ P_k $ are the same or not. So, the null hypothesis is
\begin{align} \label{h_0}
\text{H}_0 : \; P_1 = P_2 = \cdots = P_K \;.
\end{align}
We assume that the distributions $ P_k $ are non-atomic probability distributions on the separable Hilbert space $\mathcal{H}$ equipped with the Borel $ \sigma $-field. Also, the treatment effects are assumed to alter the underlying distributions of the groups. The literature on ANOVA for functional data described in \autoref{sec:1} is usually concerned with treatment effects which specifically alter the means of the groups. Our setup, which is inspired from the setups considered in \cite{kruskal1952use} for univariate observations and in \cite{choi1997approach} for multivariate observations, covers the case where the treatment effects alter the means of the underlying distributions of the groups, as well as cases where the treatment effects alter the underlying the distributions of the groups but not necessarily the means of the groups.

Define $ \mathbf{s}( \mathbf{x} ) = \| \mathbf{x} \|^{-1} \mathbf{x} $ for $ \mathbf{x} \neq \mathbf{0} $ and $ \mathbf{s}( \mathbf{0} ) = \mathbf{0} $.
Consider the quantity $ E[ \mathbf{s}( \mathbf{X}_{k 1} - \mathbf{X}_{l 1} ) ] $, which is the expected direction vector from the random element $ \mathbf{X}_{l 1} $ to the random element $ \mathbf{X}_{k 1} $. If the distributions $ P_k $ and $ P_l $ are same, then we have $ E[ \mathbf{s}( \mathbf{X}_{k 1} - \mathbf{X}_{l 1} ) ] = E[ \mathbf{s}( \mathbf{X}_{l 1} - \mathbf{X}_{k 1} ) ] = - E[ \mathbf{s}( \mathbf{X}_{k 1} - \mathbf{X}_{l 1} ) ] = \mathbf{0} $.
Define
$ \mathbf{R}( \mathbf{x} ) = n^{-1} \sum_{k=1}^{K} \sum_{i=1}^{n_k} \mathbf{s}( \mathbf{x} - \mathbf{X}_{k i} ) $.
Note that $ \mathbf{R}( \mathbf{x} ) $ is the average spatial sign of the point $ \mathbf{x} $ with respect to the combined sample, which is the average of all direction vectors from the data points to $ \mathbf{x} $.
$ \mathbf{R}( \mathbf{x} ) $ can also be viewed as the spatial rank of $ \mathbf{x} $ in the combined sample of all the observations (cf. \cite{choi1997approach}, \cite{mottonen1997efficiency}).
Also define $ \bar{\mathbf{R}}_k = n_k^{-1} \sum_{i = 1}^{n_k} \mathbf{R}\left( \mathbf{X}_{k i} \right) $.
Note that $ E[ \bar{\mathbf{R}}_k ] = n^{-1} \sum_{l=1}^{K} n_l E[ \mathbf{s}( \mathbf{X}_{k 1} - \mathbf{X}_{l 1} ) ] = \mathbf{0} $ for all $ k $ under $ \text{H}_0 $ in \eqref{h_0}. So, if one is interested to test $ \text{H}_0 $, the test may be carried out based on the magnitudes of $ \bar{\mathbf{R}}_k $, $ k = 1, \ldots, K $. A high magnitude will indicate that $ \text{H}_0 $ in \eqref{h_0} is not true.

Define $ \mathbf{U}_n = \left( \sqrt{n_1} \bar{\mathbf{R}}_1, \ldots, \sqrt{n_K} \bar{\mathbf{R}}_K \right) $.
Note that $ \mathbf{U}_n $ is a random element in the product Hilbert space $ \mathcal{H} \times \mathcal{H} \times \cdots \times \mathcal{H} = \mathcal{H}^K $.
Clearly, $ E[ \mathbf{U}_n ] = \mathbf{0} $ under the null hypothesis $ \text{H}_0 $ in \eqref{h_0}.
Based on this idea discussed above, we define the test statistic
\begin{align*}
SS_n = \| \mathbf{U}_n \|^2 = \sum_{k=1}^{K} n_k \left\| \bar{\mathbf{R}}_k \right\|^2 .
\end{align*}
We shall reject $ \text{H}_0 $ in \eqref{h_0} when $ SS_n $ is significantly large. We call the resulting test the SS test.
The normalization using the $ \sqrt{n_k} $ terms in the definition of $ \mathbf{U}_n $ is done so that its asymptotic distribution is non-degenerate, as described in \autoref{thm:1}.
The asymptotic distribution of $ SS_n $ under $ \text{H}_0 $ in \eqref{h_0} is obtained from this theorem in \autoref{coro:1}, which is used to derive the asymptotic implementation of the test.

$ \bar{\mathbf{R}}_k $ can be expressed as a weighted sum of sample means of spatial signs.
Define $ \boldsymbol{\nu}_{kl} = E\left[ \mathbf{s}( \mathbf{X}_{k1} - \mathbf{X}_{l1} ) \right] $ and $ \widehat{\boldsymbol{\nu}}_{kl} = (n_k n_l)^{-1} \sum_{i_k = 1}^{n_k} \sum_{i_l = 1}^{n_l} \left( \mathbf{X}_{k i_k} - \mathbf{X}_{l i_l} \right) $.
Let $ \widehat{\boldsymbol{\nu}} = \left( \widehat{\boldsymbol{\nu}}_{11}, \ldots, \widehat{\boldsymbol{\nu}}_{KK} \right)' $ be the vector containing all the sample means of spatial signs. Let $ \mathbf{0}_K $ denote the vector of $ K $ null elements, and let $ \mathbf{a}_k = ( \mathbf{z}_1', \ldots, \mathbf{z}_{k}', \ldots, \mathbf{z}_{K}' )' $, where $ \mathbf{z}_i = \mathbf{0}_K $ for all $ i \neq k $, and $ \mathbf{z}_{k} = n^{-1} ( n_1, \ldots, n_K )' $. So, for each $ k $, $ \mathbf{a}_k $ is a $ K^2 $-dimensional vector of weights whose sum is 1. Then, $ \sqrt{n_k} \bar{\mathbf{R}}_k = ( \sqrt{n_k / n} ) \mathbf{a}_k' \sqrt{n} \widehat{\boldsymbol{\nu}} $ for all $ k $.
Define the matrix of weights $ \mathbf{A}_n = \left( ( \sqrt{n_1 / n} ) \mathbf{a}_1, \ldots, ( \sqrt{n_K / n} ) \mathbf{a}_K \right)' $. We have $ \mathbf{U}_n = \mathbf{A}_n \sqrt{n} \widehat{\boldsymbol{\nu}} $. It can be verified that when $ n^{-1} n_k \to \lambda_k \in ( 0, 1 ) $ for all $ k $ as $ n \to \infty $, $ \mathbf{A}_n $ converges to a non-null matrix, say, $ \mathbf{A} $, and the covariance operator of $ \sqrt{n} \widehat{\boldsymbol{\nu}} $ also converges, say, to $ \boldsymbol{\Gamma} $. Then, the covariance operator of $ \mathbf{U}_n $ converges to $ \mathbf{A} \boldsymbol{\Gamma} \mathbf{A}' $ as $ n \to \infty $. Note that $ \mathbf{A} \boldsymbol{\Gamma} \mathbf{A}' $ is a $ K \times K $ matrix of covariance operators. The $ (k_1, k_2) $th element of this matrix is $ \boldsymbol{\sigma}_{k_1 k_2} $, which is described in \autoref{thm:1}. A similar approach for deriving the asymptotic distribution of a rank-based statistic for univariate data was developed in \cite{brunner2017rank}.

\begin{theorem} \label{thm:1}
Assume that $ n^{-1} n_k \to \lambda_k \in ( 0, 1 ) $ for all $ k $ as $ n \to \infty $.
Let $ \mathbf{X}_i $, $ \mathbf{Y}_j $ and $ \mathbf{Z}_k $ be independent random elements having the distributions of the $ i^\text{th} $, the $ j^\text{th} $ and the $ k^\text{th} $ group of the sample, respectively.
Define
$ \mathbf{C}( i, j, k ) = \text{Cov}\left( E\left[ \mathbf{s}\left( \mathbf{X}_i - \mathbf{Z}_k \right) \mid \mathbf{Z}_k \right] ,\, E\left[ \mathbf{s}\left( \mathbf{Y}_j - \mathbf{Z}_k \right) \mid \mathbf{Z}_k \right] \right) $ and $
\boldsymbol{\Sigma} = \left( \boldsymbol{\sigma}_{k_1 k_2} \right)_{K \times K} $, where
\begin{align*}
\boldsymbol{\sigma}_{k_1 k_2}
& = \sqrt{\lambda_{k_1} \lambda_{k_2}} \sum_{l=1}^{K} \lambda_l
\left[ \mathbf{C}( k_1, k_2, l ) - \mathbf{C}( l, k_2, k_1 ) - \mathbf{C}( k_1, l, k_2 ) \right] \\
& \quad + \sum_{l_1 = 1}^{K} \sum_{l_2 = 1}^{K} \lambda_{l_1} \lambda_{l_2} \mathbf{C}( l_1, l_2, k_1 ) \mathbb{I}( k_1 = k_2 ) .
\end{align*}
Then, $ \left[ \mathbf{U}_n - E\left[ \mathbf{U}_n \right] \right] \stackrel{w}{\longrightarrow} G\left( \mathbf{0}, \boldsymbol{\Sigma} \right) $ as $ n \to \infty $, where $ G\left( \mathbf{0}, \boldsymbol{\Sigma} \right) $ denotes a Gaussian random element in $ \mathcal{H}^K $ with mean $ \mathbf{0} $ and covariance operator $ \boldsymbol{\Sigma} $.
\end{theorem}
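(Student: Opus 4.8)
The plan is to reduce the statement to a joint central limit theorem for the vector $ \sqrt{n} ( \widehat{\boldsymbol{\nu}} - \boldsymbol{\nu} ) $ of two-sample $ U $-statistics and then transport that limit through the deterministic weight map. Since the excerpt establishes $ \mathbf{U}_n = \mathbf{A}_n \sqrt{n} \widehat{\boldsymbol{\nu}} $ with $ \mathbf{A}_n \to \mathbf{A} $, and $ E[ \mathbf{U}_n ] = \mathbf{A}_n \sqrt{n} \boldsymbol{\nu} $, we have $ \mathbf{U}_n - E[ \mathbf{U}_n ] = \mathbf{A} \sqrt{n} ( \widehat{\boldsymbol{\nu}} - \boldsymbol{\nu} ) + ( \mathbf{A}_n - \mathbf{A} ) \sqrt{n} ( \widehat{\boldsymbol{\nu}} - \boldsymbol{\nu} ) $. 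Thus it suffices to prove that $ \sqrt{n} ( \widehat{\boldsymbol{\nu}} - \boldsymbol{\nu} ) $ is tight and converges weakly in $ \mathcal{H}^{K^2} $ to a centered Gaussian element: the continuous mapping theorem applied to the bounded operator $ \mathbf{A} $ then handles the first summand, while $ \| \mathbf{A}_n - \mathbf{A} \| \to 0 $ together with tightness makes the second summand $ o_P(1) $ by Slutsky's argument.

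Second, I would linearize. Each block $ \widehat{\boldsymbol{\nu}}_{kl} $ with $ k \neq l $ is a two-sample $ U $-statistic of degree $ (1,1) $ with the \emph{bounded} kernel $ \mathbf{s}( \mathbf{x} - \mathbf{y} ) $, and its Hoeffding decomposition reads $ \widehat{\boldsymbol{\nu}}_{kl} = \boldsymbol{\nu}_{kl} + n_k^{-1} \sum_i g^{kl}_1( \mathbf{X}_{ki} ) + n_l^{-1} \sum_j g^{kl}_2( \mathbf{X}_{lj} ) + \mathbf{D}_{kl} $, where $ g^{kl}_1( \mathbf{x} ) = E[ \mathbf{s}( \mathbf{x} - \mathbf{X}_{l1} ) ] - \boldsymbol{\nu}_{kl} $ and $ g^{kl}_2( \mathbf{y} ) = E[ \mathbf{s}( \mathbf{X}_{k1} - \mathbf{y} ) ] - \boldsymbol{\nu}_{kl} $ are the one-sided projections and $ \mathbf{D}_{kl} $ is the completely degenerate remainder; the diagonal blocks vanish since $ \widehat{\boldsymbol{\nu}}_{kk} \equiv \mathbf{0} = \boldsymbol{\nu}_{kk} $ by antisymmetry of $ \mathbf{s} $. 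Because $ \| \mathbf{s} \| \le 1 $, the projections are bounded and $ E \| \mathbf{D}_{kl} \|^2 = O( ( n_k n_l )^{-1} ) $, so $ \sqrt{n}\, \mathbf{D}_{kl} = o_P(1) $ under $ n_k / n \to \lambda_k \in ( 0, 1 ) $. Substituting into $ \bar{\mathbf{R}}_k = n^{-1} \sum_{l \neq k} n_l \widehat{\boldsymbol{\nu}}_{kl} $ and collecting terms according to the group index $ m $ of the underlying observation yields the asymptotically linear representation
\[
\mathbf{U}_n - E[ \mathbf{U}_n ] = \sum_{m=1}^{K} n_m^{-1/2} \sum_{i=1}^{n_m} \boldsymbol{\eta}_{mi}^{(n)} + o_P(1),
\]
where, for each $ m $, the $ \{ \boldsymbol{\eta}_{mi}^{(n)} \}_{i=1}^{n_m} $ are i.i.d. mean-zero elements of $ \mathcal{H}^K $, the blocks for distinct $ m $ are independent, each $ \boldsymbol{\eta}_{mi}^{(n)} $ is uniformly bounded in norm, and its covariance operator converges as $ n \to \infty $.

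Third, I would invoke the central limit theorem in the separable Hilbert space $ \mathcal{H}^K $. For each fixed $ m $, $ n_m^{-1/2} \sum_i \boldsymbol{\eta}_{mi}^{(n)} $ is a normalized sum of i.i.d. bounded random elements, and boundedness forces the limiting covariance operator to be trace class, so the i.i.d. Hilbert-space CLT (in a Lindeberg form that accommodates the mild dependence of the law of $ \boldsymbol{\eta}_{mi}^{(n)} $ on $ n $ through the ratios $ n_k / n $) gives weak convergence of each block to a centered Gaussian $ G^{(m)} $. Independence of the $ K $ blocks then yields weak convergence of their sum to $ \sum_m G^{(m)} $, a centered Gaussian element $ G( \mathbf{0}, \boldsymbol{\Sigma} ) $, and the $ o_P(1) $ remainder leaves the limit unaffected.

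Finally, I would identify the covariance. The $ ( k_1, k_2 ) $ block of $ \boldsymbol{\Sigma} $ equals $ \sum_m \lim_n \mathrm{Cov}( \boldsymbol{\eta}_{m1}^{(n)}[k_1], \boldsymbol{\eta}_{m1}^{(n)}[k_2] ) $, writing $ [k] $ for the $ k $-th coordinate. Expanding these covariances, using independence across groups, the antisymmetry $ \mathbf{s}( -\mathbf{x} ) = -\mathbf{s}( \mathbf{x} ) $ and $ \mathbf{s}( \mathbf{0} ) = \mathbf{0} $ to cancel the within-group contributions, and recognizing each surviving covariance of conditional expectations of spatial signs as one of the quantities $ \mathbf{C}( i, j, k ) $, the accounting collapses — after tracking the $ \sqrt{\lambda_{k_1} \lambda_{k_2}} $ and $ \lambda_l $ weights from the normalizations — to exactly $ \boldsymbol{\sigma}_{k_1 k_2} $, the indicator term $ \mathbb{I}( k_1 = k_2 ) $ arising from the self-covariance of the projection of $ \bar{\mathbf{R}}_{k_1} $ onto the group-$ k_1 $ observations. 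The main obstacle I anticipate is twofold: controlling the degenerate remainders $ \mathbf{D}_{kl} $ uniformly in the infinite-dimensional setting, which is tamed by the bound $ \| \mathbf{s} \| \le 1 $ that supplies all moments for free and pins the degenerate parts to order $ ( n_k n_l )^{-1/2} $; and the covariance bookkeeping in the last step, where the antisymmetry identities and the $ \mathbf{s}( \mathbf{0} ) = \mathbf{0} $ cancellations must be tracked with care to recover the symmetric three-term form of $ \boldsymbol{\sigma}_{k_1 k_2} $.
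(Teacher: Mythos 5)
Your proposal is correct and takes essentially the same route as the paper's proof: the paper likewise performs the H\'ajek/Hoeffding projection of the two-sample spatial-sign U-statistics (its $ \mathbf{S}_k $ and $ \mathbf{V}_{kl} $ are exactly your one-sided projections and degenerate remainders), shows the degenerate part is negligible using the boundedness of $ \mathbf{s} $ and orthogonality of the fully centered kernels, and then applies a triangular-array Hilbert-space CLT (Theorem 1.1 of \cite{kundu2000central}) to the linear part before identifying the covariance as $ \boldsymbol{\sigma}_{k_1 k_2} $. The only cosmetic difference is that you package the linearization through $ \mathbf{A}_n \sqrt{n} \widehat{\boldsymbol{\nu}} $, whereas the paper decomposes $ \bar{\mathbf{R}}_k - E[ \bar{\mathbf{R}}_k ] $ directly.
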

\begin{corollary} \label{coro:1}
Assume that $ n^{-1} n_k \to \lambda_k \in ( 0, 1 ) $ for all $ k $ as $ n \to \infty $.
Under $ \text{H}_0 $ in \eqref{h_0}, we have $ E\left[ \mathbf{U}_n \right] = \mathbf{0} $.
Consequently, $ \mathbf{U}_n \stackrel{w}{\longrightarrow} G\left( \mathbf{0}, \boldsymbol{\Sigma} \right) $ and $ SS_n = \| \mathbf{U}_n \|^2 \stackrel{w}{\longrightarrow} \| \mathbf{W} \|^2 $ as $ n \to \infty $, where $ \mathbf{W} $ is a random element having distribution $ G\left( \mathbf{0}, \boldsymbol{\Sigma} \right) $.
\end{corollary}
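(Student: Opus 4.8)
My plan is to derive all three assertions directly from Theorem 1, treating the corollary as a specialization of that result to the null model together with a routine application of the continuous mapping theorem. The only computation carrying real content is the identification of the mean of $ \mathbf{U}_n $ under $ \text{H}_0 $.

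First I would verify that $ E[ \mathbf{U}_n ] = \mathbf{0} $ under $ \text{H}_0 $. Since $ \mathbf{U}_n = ( \sqrt{n_1} \bar{\mathbf{R}}_1, \ldots, \sqrt{n_K} \bar{\mathbf{R}}_K ) $, it suffices to show $ E[ \bar{\mathbf{R}}_k ] = \mathbf{0} $ for each $ k $. Expanding $ \bar{\mathbf{R}}_k = n_k^{-1} \sum_{i} \mathbf{R}( \mathbf{X}_{k i} ) $ with $ \mathbf{R}( \mathbf{X}_{k i} ) = n^{-1} \sum_{l, j} \mathbf{s}( \mathbf{X}_{k i} - \mathbf{X}_{l j} ) $, I would take expectations term by term; this is legitimate because $ \| \mathbf{s}( \cdot ) \| \le 1 $, so each summand is Bochner integrable. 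The diagonal term $ (l,j) = (k,i) $ contributes $ \mathbf{s}( \mathbf{0} ) = \mathbf{0} $, while for every off-diagonal term the two observations $ \mathbf{X}_{k i} $ and $ \mathbf{X}_{l j} $ are independent. Under $ \text{H}_0 $ they share a common non-atomic law, so the difference $ \mathbf{D} = \mathbf{X}_{k i} - \mathbf{X}_{l j} $ satisfies $ \mathbf{D} \stackrel{d}{=} -\mathbf{D} $, and non-atomicity ensures $ \mathbf{D} \neq \mathbf{0} $ almost surely so that $ \mathbf{s}( \mathbf{D} ) $ is well defined. Using the oddness $ \mathbf{s}( -\mathbf{x} ) = -\mathbf{s}( \mathbf{x} ) $ gives $ E[ \mathbf{s}( \mathbf{D} ) ] = E[ \mathbf{s}( -\mathbf{D} ) ] = -E[ \mathbf{s}( \mathbf{D} ) ] $, whence $ E[ \mathbf{s}( \mathbf{D} ) ] = \mathbf{0} $. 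Summing these vanishing contributions yields $ E[ \bar{\mathbf{R}}_k ] = \mathbf{0} $ and therefore $ E[ \mathbf{U}_n ] = \mathbf{0} $.

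Next, with the mean identified, I would invoke Theorem 1, which asserts $ [ \mathbf{U}_n - E[ \mathbf{U}_n ] ] \stackrel{w}{\longrightarrow} G( \mathbf{0}, \boldsymbol{\Sigma} ) $. Substituting $ E[ \mathbf{U}_n ] = \mathbf{0} $ immediately gives $ \mathbf{U}_n \stackrel{w}{\longrightarrow} G( \mathbf{0}, \boldsymbol{\Sigma} ) $. Here the common group distribution under $ \text{H}_0 $ renders the conditional covariances $ \mathbf{C}( \cdot, \cdot, \cdot ) $ entering $ \boldsymbol{\Sigma} $ index-free, but no such simplification is required for the statement. Finally, for the test statistic itself I would apply the continuous mapping theorem on the product Hilbert space $ \mathcal{H}^K $: the map $ \mathbf{u} \mapsto \| \mathbf{u} \|^2 $ is continuous, so weak convergence of $ \mathbf{U}_n $ to $ \mathbf{W} \sim G( \mathbf{0}, \boldsymbol{\Sigma} ) $ transfers to $ SS_n = \| \mathbf{U}_n \|^2 \stackrel{w}{\longrightarrow} \| \mathbf{W} \|^2 $.

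I do not expect a genuine obstacle, since the result is a direct consequence of Theorem 1. The only point demanding care is the mean computation: one must isolate the diagonal term (the reason $ \mathbf{s}( \mathbf{0} ) $ is defined to be $ \mathbf{0} $), and must justify the symmetrization using both the common law of the pooled sample under $ \text{H}_0 $ and the oddness of $ \mathbf{s} $, with the non-atomicity hypothesis guaranteeing that the relevant differences are almost surely nonzero so that each spatial sign is well defined.
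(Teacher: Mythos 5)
Your proposal is correct and follows essentially the same route as the paper's proof: establish $ E[ \mathbf{U}_n ] = \mathbf{0} $ under $ \text{H}_0 $ via the oddness of $ \mathbf{s} $ and the fact that independent observations with a common law yield a symmetrically distributed difference (plus the convention $ \mathbf{s}( \mathbf{0} ) = \mathbf{0} $ for the diagonal terms), then substitute into \autoref{thm:1} and apply the continuous mapping theorem to pass from $ \mathbf{U}_n $ to $ SS_n = \| \mathbf{U}_n \|^2 $. Your mean computation is somewhat more detailed than the paper's (which states the symmetrization argument once in \autoref{sec:2} and invokes it), and your appeal to non-atomicity is harmless but unnecessary since oddness of $ \mathbf{s} $ holds at $ \mathbf{0} $ as well.
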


The p-value of the SS test in the asymptotic implementation is given by $ P[ \| \mathbf{W} \|^2 \ge \text{the observed value of } SS_n ] $.
The p-value can be numerically computed using a Monte Carlo procedure. However, since $ \boldsymbol{\Sigma} $ is unknown, to implement the test, we need to estimate the covariance operator $ \boldsymbol{\Sigma} $ from the sample. Let $ \otimes $ be the outer product on $ \mathcal{H} $ such that for every $ \mathbf{x}, \mathbf{y} \in \mathcal{H} $, $ \mathbf{x} \otimes \mathbf{y} : \mathcal{H} \to \mathcal{H} $ is a linear operator defined by $ ( \mathbf{x} \otimes \mathbf{y} ) ( \mathbf{w} ) = \langle \mathbf{w}, \mathbf{x} \rangle \mathbf{y} $.
Note that for $ i, j, k \in \{ 1, \ldots, K \} $,
\begin{align*}
\mathbf{C}( i, j, k )
& = E\left[ E\left[ \mathbf{s}( \mathbf{X}_i - \mathbf{Z}_k ) \midil \mathbf{Z}_k \right] \otimes E\left[ \mathbf{s}( \mathbf{Y}_j - \mathbf{Z}_k ) \midil \mathbf{Z}_k \right] \right] \\
& \quad
- E\left[ \mathbf{s}( \mathbf{X}_i - \mathbf{Z}_k ) \right] \otimes E\left[ \mathbf{s}( \mathbf{Y}_j - \mathbf{Z}_k ) \right] ,
\end{align*}
where $ \mathbf{X}_i $, $ \mathbf{Y}_j $ and $ \mathbf{Z}_k $ are independent random elements having the distributions of the $ i^\text{th} $, the $ j^\text{th} $ and the $ k^\text{th} $ group, respectively.
Let
\begin{align*}
& \mathbf{C}_n( i, j, k ) \\
& =
\frac{1}{n_k - 1} \sum_{l_k = 1}^{n_k} \left[ \left( \frac{1}{n_{i}} \sum_{l_{i} = 1}^{n_{i}} \mathbf{s}\left( \mathbf{X}_{i l_{i}} - \mathbf{X}_{k l_k} \right) \right) \otimes \left( \frac{1}{n_{j}} \sum_{l_{j} = 1}^{n_{j}} \mathbf{s}\left( \mathbf{X}_{j l_{j}} - \mathbf{X}_{k l_k} \right) \right) \right] \nonumber\\
& \quad
- \left( \frac{1}{n_{i} n_k} \sum_{l_{i} = 1}^{n_{i}} \sum_{l_k = 1}^{n_k} \mathbf{s}\left( \mathbf{X}_{i l_{i}} - \mathbf{X}_{k l_k} \right) \right) \otimes \left( \frac{1}{n_{j} n_k} \sum_{l_{j} = 1}^{n_{j}} \sum_{l_k = 1}^{n_k} \mathbf{s}\left( \mathbf{X}_{j l_{j}} - \mathbf{X}_{k l_k} \right) \right) ,
\end{align*}
where $ i, j, k \in \{ 1, \ldots, K \} $.
The estimate of $ \boldsymbol{\Sigma} $ is defined as
\begin{align*}
\widehat{\boldsymbol{\Sigma}}_n & = \left( \boldsymbol{\sigma}^{(n)}_{k_1 k_2} \right)_{K \times K} ,
\text{ where} \\
\boldsymbol{\sigma}^{(n)}_{k_1 k_2}
& = \frac{\sqrt{n_{k_1} n_{k_2}}}{n} \sum_{l=1}^{K} \frac{n_l}{n}
\left[ \mathbf{C}_n( k_1, k_2, l ) - \mathbf{C}_n( l, k_2, k_1 ) - \mathbf{C}_n( k_1, l, k_2 ) \right] \\
& \quad
+ \sum_{l_1 = 1}^{K} \sum_{l_2 = 1}^{K} \frac{n_{l_1} n_{l_2}}{n^2} \mathbf{C}_n( l_1, l_2, k_1 ) \mathbb{I}( k_1 = k_2 ) .
\end{align*}

Using the estimated covariance operator $ \widehat{\boldsymbol{\Sigma}}_n $, we generate independent observations from the distribution $ G\left( \mathbf{0}, \widehat{\boldsymbol{\Sigma}}_n \right) $. Let us denote the generated observations as $ \mathbf{W}_1, \ldots, \mathbf{W}_N $, where $ N $ is a suitably large number. Then the approximate p-value of the SS test is given by the proportion of $ \| \mathbf{W}_i \|^2 $ values greater than or equal to the observed value of $ SS_n $. A similar estimation procedure for the p-value is described in \cite{cuevas2004anova} for their test of ANOVA for functional data.

For the asymptotic validity of this procedure, the estimated covariance operator needs to asymptotically consistent, which is established in the following theorem.
\begin{theorem} \label{thm:cov}
Assume that $ n^{-1} n_k \to \lambda_k \in ( 0, 1 ) $ for all $ k $ as $ n \to \infty $. Then, $ \widehat{\boldsymbol{\Sigma}}_n \longrightarrow \boldsymbol{\Sigma} $ almost surely as $ n \to \infty $ in the operator norm.
\end{theorem}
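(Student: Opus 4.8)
The plan is to reduce the statement to the almost sure convergence of the operator-valued building blocks $\mathbf{C}_n(i,j,k)$ and then propagate this through the finite linear combinations that define $\widehat{\boldsymbol{\Sigma}}_n$. Since $K$ is fixed and finite, $\widehat{\boldsymbol{\Sigma}}_n$ and $\boldsymbol{\Sigma}$ are $K\times K$ arrays of operators, and the operator norm of such an array is bounded above by the sum of the operator norms of its $K^2$ blocks. Hence it suffices to show $\boldsymbol{\sigma}^{(n)}_{k_1 k_2}\to\boldsymbol{\sigma}_{k_1 k_2}$ in operator norm almost surely for each pair $(k_1,k_2)$. Each $\boldsymbol{\sigma}^{(n)}_{k_1 k_2}$ is a finite linear combination of the operators $\mathbf{C}_n(i,j,k)$ with scalar weights of the form $n^{-1}\sqrt{n_{k_1}n_{k_2}}\,(n_l/n)$ and $n_{l_1}n_{l_2}/n^2$; under the assumption $n^{-1}n_k\to\lambda_k$ these weights converge to $\sqrt{\lambda_{k_1}\lambda_{k_2}}\,\lambda_l$ and $\lambda_{l_1}\lambda_{l_2}$, the corresponding weights in $\boldsymbol{\sigma}_{k_1 k_2}$. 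By the algebra of limits, the whole statement therefore follows once I establish that $\mathbf{C}_n(i,j,k)\to\mathbf{C}(i,j,k)$ almost surely for every triple $(i,j,k)$.

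To get convergence in operator norm I would in fact prove the stronger convergence in the Hilbert--Schmidt norm $\|\cdot\|_{\mathrm{HS}}$, which dominates the operator norm; this is convenient because the Hilbert--Schmidt operators on the separable Hilbert space $\mathcal{H}$ themselves form a separable Hilbert space in which standard limit theorems apply. The decisive structural fact is that $\|\mathbf{s}(\mathbf{x})\|\le 1$ for all $\mathbf{x}$, so every outer product $\mathbf{s}(\cdot)\otimes\mathbf{s}(\cdot)$ has Hilbert--Schmidt norm at most $1$ and all the conditional expectations appearing in $\mathbf{C}(i,j,k)$ are well defined and bounded. I write $\mathbf{C}_n(i,j,k)=T_n^{(1)}-T_n^{(2)}$ for its two displayed terms and treat them separately.

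For the first term, after expanding the two inner averages and the outer product I obtain
\begin{align*}
T_n^{(1)} = \frac{1}{(n_k-1)n_i n_j}\sum_{l_k=1}^{n_k}\sum_{l_i=1}^{n_i}\sum_{l_j=1}^{n_j} \mathbf{s}(\mathbf{X}_{i l_i}-\mathbf{X}_{k l_k})\otimes\mathbf{s}(\mathbf{X}_{j l_j}-\mathbf{X}_{k l_k}),
\end{align*}
which I recognize as a generalized $V$-statistic across the groups indexed by $i$, $j$ and $k$, with the bounded Hilbert--Schmidt operator valued kernel $h(\mathbf{x},\mathbf{y},\mathbf{z})=\mathbf{s}(\mathbf{x}-\mathbf{z})\otimes\mathbf{s}(\mathbf{y}-\mathbf{z})$. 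I would invoke the strong law of large numbers for $U$-statistics with values in a separable Hilbert space (applied coordinatewise against an orthonormal basis, or directly from a vector-valued $U$-statistic SLLN), noting that boundedness of the kernel makes the diagonal terms in which sample indices coincide a vanishing fraction of the sum, and that $(n_k-1)^{-1}$ differs from $n_k^{-1}$ by a factor tending to $1$. These remarks also dispose of the cases where two or three of the labels $i,j,k$ coincide, since the coinciding-index contributions (in particular the $\mathbf{s}(\mathbf{0})=\mathbf{0}$ terms) are asymptotically negligible. The limit is $E[h(\mathbf{X}_i,\mathbf{Y}_j,\mathbf{Z}_k)]$ with $\mathbf{X}_i,\mathbf{Y}_j,\mathbf{Z}_k$ independent, and conditioning on $\mathbf{Z}_k$ identifies this with $E[\,E[\mathbf{s}(\mathbf{X}_i-\mathbf{Z}_k)\midil\mathbf{Z}_k]\otimes E[\mathbf{s}(\mathbf{Y}_j-\mathbf{Z}_k)\midil\mathbf{Z}_k]\,]$, the first term of $\mathbf{C}(i,j,k)$.

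For the second term, each of the two factors is a two-sample $V$-statistic with the bounded $\mathcal{H}$-valued kernel $\mathbf{s}(\mathbf{x}-\mathbf{z})$, so by the same SLLN they converge almost surely in $\mathcal{H}$ to $E[\mathbf{s}(\mathbf{X}_i-\mathbf{Z}_k)]$ and $E[\mathbf{s}(\mathbf{Y}_j-\mathbf{Z}_k)]$ respectively. Because the outer product is bilinear with $\|\mathbf{u}\otimes\mathbf{v}\|_{\mathrm{HS}}=\|\mathbf{u}\|\,\|\mathbf{v}\|$, it is jointly continuous on bounded sets, so $T_n^{(2)}$ converges to $E[\mathbf{s}(\mathbf{X}_i-\mathbf{Z}_k)]\otimes E[\mathbf{s}(\mathbf{Y}_j-\mathbf{Z}_k)]$, the second term of $\mathbf{C}(i,j,k)$. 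Subtracting gives $\mathbf{C}_n(i,j,k)\to\mathbf{C}(i,j,k)$ almost surely in Hilbert--Schmidt norm, hence in operator norm, and substituting back into the finite linear combination with its convergent weights yields $\widehat{\boldsymbol{\Sigma}}_n\to\boldsymbol{\Sigma}$. I expect the main obstacle to be the careful justification of the SLLN for the operator-valued $V$-statistic in the first term, in particular the uniform handling of all the index-coincidence cases and of the fact that the inner and outer averages reuse the same sample when some of $i,j,k$ agree; once that law of large numbers is secured, the remainder is continuity of the outer product and the algebra of limits.
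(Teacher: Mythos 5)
Your proposal is correct and follows essentially the same route as the paper's proof: the paper likewise reduces the claim to almost sure convergence of each $\mathbf{C}_n(i,j,k)$, splits it into the triple-average term and the outer product of two-sample sign averages, applies a strong law for multi-sample U-statistics with bounded operator-valued kernels (Theorems 3.1.1 and 3.2.1 of Borovskikh, stated in the separable Banach space of compact operators rather than in your Hilbert--Schmidt setting), and handles the index-coincidence cases exactly as you suggest, by isolating the diagonal contributions with explicitly vanishing $O(1/n_k)$ weights before passing to the limit in the weights $n_k/n \to \lambda_k$. Your Hilbert--Schmidt framing and the paper's compact-operator framing are interchangeable here, since the rank-one kernels $\mathbf{s}(\cdot)\otimes\mathbf{s}(\cdot)$ are bounded by $1$ in both norms and both ambient spaces are separable.
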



When the number of observations in any of the groups is small, the asymptotic null distribution of the test statistic described in \autoref{coro:1} may not be a good approximation of its finite sample distribution. 
For such cases, we can implement the test based on a bootstrap procedure (see, e.g., \cite{efron1994introduction}) or a permutation procedure.

When the null hypothesis is true, the distributions of all the groups are same, and the natural estimator of that common underlying distribution is the empirical distribution of the pooled sample of all the groups.
So, in our bootstrap procedure, we draw a random sample of size $ n $ with replacement from the pooled sample and assign them to the groups so that given the original sample $ \{ \mathbf{X}_{k i} \midil k = 1, \ldots, K; i = 1, \ldots, n_k \} $, the observations in the bootstrap sample $ \{ \mathbf{X}_{k i}^* \midil k = 1, \ldots, K; i = 1, \ldots, n_k \} $ are independent and identically distributed, and for every $ k $ and $ i $, we have $ P[ \mathbf{X}_{k i}^* = \mathbf{X}_{l j} ] = n^{-1} $ for all $ l $ and $ j $. From the bootstrap sample, we compute the bootstrap value of the statistic $ SS_n^* $. By repeating this procedure $ M_b $ times independently, we get $ M_b $ values of $ SS_n^* $. The p-value of this testing procedure is given by the proportion of the $ M_b $ values of $ SS_n^* $ higher than the actual value of $ SS_n $.
When an observation is repeated in the bootstrap sample, we use the convention of $ \mathbf{s}( \mathbf{0} ) = \mathbf{0} $ while computing the statistic $ SS_n^* $.

Next, we describe the permutation implementation.
Since the underlying distributions of all the groups in the sample are identical when the null hypothesis is true, we generate a random permutation of the pooled sample, then assign the first $ n_1 $ elements in that permuted sample to the first group, the next $ n_2 $ elements to the next group, and so on. From this permuted sample, we compute the value of the test statistic, which is denoted as $ SS_n^\# $. We get $ M_p $ values of $ SS_n^\# $ by repeating this procedure $ M_p $ times independently. The p-value of the permutation implementation of the test is given by the proportion of the $ M_p $ values of $ SS_n^\# $ higher than the actual value of $ SS_n $.

The advantage of the permutation implementation is that no observation is repeated in the generated sample unlike the bootstrap implementation.
However, in our simulation analysis, we did not find any discernible difference between the performances of the two implementations.
In principle, the permutation implementation can also be developed based on all possible permutations of the sample instead of a fixed number of random permutations. However, the number of all permutations of $ ( 1, \ldots, n ) $ increases very rapidly with $ n $, which makes the permutation implementation computationally very expensive even for a moderate $ n $. The test based on random permutations is prescribed to cover for such situations.
We use the permutation implementation of the test for small sample sizes.

The validity of the implementations of our test is established in the following theorem.
\begin{theorem} \label{thm:bootstrapperm}
Let $ n^{-1} n_k \to \lambda_k \in ( 0, 1 ) $ for all $ k $ as $ n \to \infty $.
Under $ \text{H}_0 $ in \eqref{h_0}, for every $ 0 < \alpha < 1 $, the size of a level $ \alpha $ test based on the asymptotic procedure or the bootstrap procedure or the permutation procedure described above converges to $ \alpha $ as $ n \to \infty $ and $ M_b, M_p \to \infty $.
\end{theorem}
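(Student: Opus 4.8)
The plan is to handle the three implementations through a common template. For each procedure the level-$ \alpha $ test rejects $ \text{H}_0 $ precisely when $ SS_n $ exceeds a data-dependent critical value $ c_n $ that approximates the $ (1-\alpha) $-quantile of a reference law, so that the size equals $ P_{\text{H}_0}[ SS_n > c_n ] $. The key is to show that in each case $ c_n $ converges, almost surely or in probability, to $ q_{1-\alpha} $, the $ (1-\alpha) $-quantile of $ \| \mathbf{W} \|^2 $ with $ \mathbf{W} \sim G( \mathbf{0}, \boldsymbol{\Sigma} ) $. Since $ SS_n \stackrel{w}{\longrightarrow} \| \mathbf{W} \|^2 $ under $ \text{H}_0 $ by \autoref{coro:1}, a Slutsky-type argument then gives $ P_{\text{H}_0}[ SS_n > c_n ] \to P[ \| \mathbf{W} \|^2 > q_{1-\alpha} ] = \alpha $, provided the limit law is continuous at $ q_{1-\alpha} $. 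To secure this, I would first record that, writing $ \boldsymbol{\Sigma} $ in its spectral decomposition with ordered eigenvalues $ \gamma_1 \ge \gamma_2 \ge \cdots \ge 0 $, one has $ \| \mathbf{W} \|^2 \stackrel{d}{=} \sum_j \gamma_j Z_j^2 $ for i.i.d.\ standard normal $ Z_j $, whose distribution function is continuous and strictly increasing on its support; hence $ q_{1-\alpha} $ is well defined and a continuity point.

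For the asymptotic procedure both ingredients are essentially available. By \autoref{thm:cov}, $ \widehat{\boldsymbol{\Sigma}}_n \to \boldsymbol{\Sigma} $ almost surely in operator norm; since operator-norm convergence of self-adjoint compact operators forces convergence of the ordered eigenvalues, and since the boundedness $ \| \mathbf{s}( \cdot ) \| \le 1 $ keeps the traces of $ \widehat{\boldsymbol{\Sigma}}_n $ uniformly controlled, the reference law $ \| G( \mathbf{0}, \widehat{\boldsymbol{\Sigma}}_n ) \|^2 $ converges weakly to $ \| \mathbf{W} \|^2 $ almost surely, and its $ (1-\alpha) $-quantile converges to $ q_{1-\alpha} $. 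Letting the Monte Carlo size $ N \to \infty $ replaces this quantile by its consistent empirical version, so $ c_n \to q_{1-\alpha} $ and the size converges to $ \alpha $.

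For the bootstrap I would condition on the sample and establish a conditional central limit theorem for the resampled statistic $ \mathbf{U}_n^* $. Under $ \text{H}_0 $ the pooled empirical measure consistently estimates the common law $ P $, so I would replay the Hájek-projection argument behind \autoref{thm:1} in the bootstrap world: show that the conditional covariance operator of $ \mathbf{U}_n^* $ converges to $ \boldsymbol{\Sigma} $ and that the degenerate remainder terms of the two-sample U-statistics defining $ \widehat{\boldsymbol{\nu}} $ are conditionally negligible. The bound $ \| \mathbf{s}( \cdot ) \| \le 1 $ makes every moment condition automatic and, via uniform control of the tail eigenmass of the bootstrap covariance, yields conditional tightness in $ \mathcal{H}^K $. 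This gives $ \mathbf{U}_n^* \stackrel{w}{\longrightarrow} G( \mathbf{0}, \boldsymbol{\Sigma} ) $ conditionally in probability, hence $ SS_n^* \stackrel{w}{\longrightarrow} \| \mathbf{W} \|^2 $ conditionally, so the bootstrap critical value, with $ M_b \to \infty $, converges in probability to $ q_{1-\alpha} $.

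For the permutation procedure I would use that under $ \text{H}_0 $ the pooled sample is i.i.d.\ and therefore exchangeable, so that a permutation (exchangeable) central limit theorem for Hilbert-space-valued sums yields the analogous conditional convergence $ \mathbf{U}_n^\# \stackrel{w}{\longrightarrow} G( \mathbf{0}, \boldsymbol{\Sigma} ) $, whence the permutation critical value converges to $ q_{1-\alpha} $ as $ M_p \to \infty $. The main obstacle, common to the bootstrap and permutation cases, is precisely this passage to conditional weak limits in the infinite-dimensional space $ \mathcal{H}^K $: unlike in finite dimensions one must separately verify conditional tightness and the convergence of the random limiting covariance operators, and must control the non-Hájek (degenerate) part of the two-sample U-statistic structure conditionally on the observed sample. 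I expect the boundedness of the spatial signs to be the decisive technical lever that tames all the relevant moments and renders these tightness arguments manageable.
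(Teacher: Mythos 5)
Your overall template (show that each procedure's critical value converges to the $(1-\alpha)$-quantile of $\|\mathbf{W}\|^2$, then use continuity of the limit law $\sum_j \gamma_j Z_j^2$) is sound, and your treatment of the asymptotic and bootstrap procedures is essentially the paper's own proof. For the bootstrap, the paper does exactly what you outline: it defines $\mathbf{U}_n^*$, splits $\bar{\mathbf{R}}_k^* - E^*[\bar{\mathbf{R}}_k^*]$ into a H\'ajek projection $\mathbf{S}_k^*$ plus degenerate remainders $\mathbf{V}_{kl}^*$, shows the remainders are conditionally negligible, computes the conditional covariance $\text{Cov}^*\left( \sqrt{n_{k_1}} \mathbf{S}_{k_1}^* , \sqrt{n_{k_2}} \mathbf{S}_{k_2}^* \right) = \left( \mathbb{I}(k_1 = k_2) - \sqrt{n_{k_1} n_{k_2}}/n \right) \mathbf{E}_n$ with $\mathbf{E}_n \to \mathbf{E}$ almost surely, and applies the Hilbert-space CLT of \cite{kundu2000central} conditionally; under $\text{H}_0$ the resulting limit covariance $\boldsymbol{\Sigma}^*$ coincides with $\boldsymbol{\Sigma}$, which is your "conditional covariance converges to $\boldsymbol{\Sigma}$" step. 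One quibble on the asymptotic part: for the Monte Carlo reference law $\|G(\mathbf{0},\widehat{\boldsymbol{\Sigma}}_n)\|^2$ to converge weakly you need trace-norm (not just operator-norm) convergence of $\widehat{\boldsymbol{\Sigma}}_n$, and uniform boundedness of the traces does not bridge that gap; the paper is even terser here, disposing of this case by citing \autoref{coro:1} alone.

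Where you genuinely diverge, and where your plan has its only real gap, is the permutation procedure. You propose to establish a conditional CLT for $\mathbf{U}_n^\#$ via an exchangeable (permutation) CLT for Hilbert-space-valued sums, and you correctly flag conditional tightness and covariance convergence as the hard part — but you then leave that ingredient unproven, and it is not an off-the-shelf result. The paper avoids this machinery entirely with an elementary observation: under $\text{H}_0$ the pooled sample is i.i.d., so applying a random permutation leaves the joint distribution of the sample unchanged; hence $SS_n^\#$ has exactly the same unconditional distribution as $SS_n$ for every $n$, and its asymptotic law is the one given by \autoref{coro:1} — no permutation CLT is needed. To be fair, your route, if completed, would deliver a stronger conclusion (convergence of the conditional permutation distribution given the data, which is what rigorously controls the data-dependent critical value), whereas the paper's argument is looser in passing from the marginal law of $SS_n^\#$ to the empirical distribution of the $M_p$ permuted values. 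But as written your proposal defers precisely the step that carries all the difficulty; either supply the Hilbert-space permutation CLT (e.g., by redoing the H\'ajek-projection decomposition under sampling without replacement from the pooled sample), or replace that step with the paper's equidistribution argument.
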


Note that the testing procedure is based on the magnitudes of the quantities $ \bar{\mathbf{R}}_k $'s, and the distance between $ \bar{\mathbf{R}}_k $ and $ E[ \bar{\mathbf{R}}_k ] $ converges to $ 0 $ in probability for all $ k $ as $ n \to \infty $ irrespective of $ \text{H}_0 $ in \eqref{h_0}. Also, for all $ k $, $ E[ \bar{\mathbf{R}}_k ] = n^{-1} \sum_{l=1}^{K} n_l E[ \mathbf{s}( \mathbf{X}_{k 1} - \mathbf{X}_{l 1} ) ] \longrightarrow \sum_{l=1}^{K} \lambda_l E[ \mathbf{s}( \mathbf{X}_{k 1} - \mathbf{X}_{l 1} ) ] $ as $ n \to \infty $.
Recall that when the underlying distributions of the $ K $ groups are identical, i.e., when $ \text{H}_0 $ in \eqref{h_0} holds, $ E[ \bar{\mathbf{R}}_k ] = \mathbf{0} $ for all $ k $.
When the distributions of the groups are not identical such that the asymptotic limit of $ E[ \bar{\mathbf{R}}_k ] $ is nonzero for at least one $ k $, the power of all the implementations of the test converges to 1 as the sample size increases, which we state in the next theorem.
\begin{theorem} \label{thm:2}
Assume that $ n^{-1} n_k \to \lambda_k \in ( 0, 1 ) $ for all $ k $ as $ n \to \infty $.
When $ \text{H}_0 $ in \eqref{h_0} is not true such that $ \sum_{l=1}^{K} \lambda_l E[ \mathbf{s}( \mathbf{X}_{k 1} - \mathbf{X}_{l 1} ) ] \neq \mathbf{0} $ for any $ k $, we have $ \left\| E\left[ \mathbf{U}_n \right] \right\| \to \infty $ as $ n \to \infty $, and $ SS_n \stackrel{P}{\longrightarrow} \infty $ as $ n \to \infty $.
As a consequence, for every $ 0 < \alpha < 1 $, the power of a level $ \alpha $ test based on each of the asymptotic procedure, the bootstrap procedure and the permutation procedure converges to $ 1 $ as $ n \to \infty $ and $ M_b, M_p \to \infty $.
\end{theorem}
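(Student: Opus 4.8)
The plan is to establish the three assertions in order: first the divergence of the population quantity $\|E[\mathbf{U}_n]\|$, then the divergence in probability of $SS_n$, and finally the convergence of the power to $1$ for each of the three implementations.

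\emph{Divergence of $\|E[\mathbf{U}_n]\|$.} Write $\boldsymbol{\mu}_k = \sum_{l=1}^{K} \lambda_l E[\mathbf{s}(\mathbf{X}_{k1} - \mathbf{X}_{l1})]$, and recall from the discussion preceding the theorem that $E[\bar{\mathbf{R}}_k] = n^{-1}\sum_{l=1}^{K} n_l E[\mathbf{s}(\mathbf{X}_{k1} - \mathbf{X}_{l1})] \longrightarrow \boldsymbol{\mu}_k$ as $n\to\infty$. Since $\|E[\mathbf{U}_n]\|^2 = \sum_{k=1}^{K} n_k \|E[\bar{\mathbf{R}}_k]\|^2$ and $n_k\to\infty$ (because $n^{-1}n_k \to \lambda_k > 0$), the term indexed by some $k_0$ with $\boldsymbol{\mu}_{k_0}\neq\mathbf{0}$ alone forces $n_{k_0}\|E[\bar{\mathbf{R}}_{k_0}]\|^2 \to \infty$, whence $\|E[\mathbf{U}_n]\|^2\to\infty$, i.e. $\|E[\mathbf{U}_n]\|\to\infty$. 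This step is straightforward.

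\emph{Divergence in probability of $SS_n$.} By \autoref{thm:1}, $\mathbf{V}_n := \mathbf{U}_n - E[\mathbf{U}_n]$ converges weakly to the tight Gaussian limit $G(\mathbf{0},\boldsymbol{\Sigma})$, so the sequence is uniformly tight and $\|\mathbf{V}_n\| = O_P(1)$. The triangle inequality in $\mathcal{H}^K$ gives $\|\mathbf{U}_n\| \ge \|E[\mathbf{U}_n]\| - \|\mathbf{V}_n\|$; combining this with the previous step yields $\|\mathbf{U}_n\|\stackrel{P}{\longrightarrow}\infty$, and therefore $SS_n = \|\mathbf{U}_n\|^2 \stackrel{P}{\longrightarrow}\infty$.

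\emph{Power converging to $1$.} It suffices to show that, for each implementation, the data-dependent critical value $c_n(\alpha)$ against which $SS_n$ is compared stays stochastically bounded, since then $P[SS_n \ge c_n(\alpha)] \to 1$ follows from $SS_n \stackrel{P}{\longrightarrow}\infty$. For the asymptotic procedure the relevant quantile is that of $\|\mathbf{W}\|^2$ with $\mathbf{W}\sim G(\mathbf{0},\widehat{\boldsymbol{\Sigma}}_n)$; by \autoref{thm:cov} we have $\widehat{\boldsymbol{\Sigma}}_n \to \boldsymbol{\Sigma}$ almost surely in operator norm, so the simulated $(1-\alpha)$-quantile converges to the finite $(1-\alpha)$-quantile of $\|G(\mathbf{0},\boldsymbol{\Sigma})\|^2$, which is bounded. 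For the bootstrap and the permutation procedures the key observation is that both resampling schemes draw the group labels from the pooled sample, and hence mimic the common null law $\sum_{k=1}^{K} \lambda_k P_k$ regardless of whether $\text{H}_0$ holds; consequently, conditionally on the data, $SS_n^*$ and $SS_n^\#$ converge in distribution to tight limits of the form $\|\mathbf{W}^\circ\|^2$ for a Gaussian $\mathbf{W}^\circ$ whose covariance operator is determined by the pooled distribution, so their $(1-\alpha)$-quantiles remain bounded as $n, M_b, M_p\to\infty$. The main obstacle is this last point for the resampling-based critical values: one must verify, under the alternative, that the bootstrap and permutation statistics remain $O_P(1)$, i.e. that resampling from the pooled empirical distribution genuinely reproduces the null behaviour. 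This is precisely the tightness and weak-convergence analysis underlying \autoref{thm:bootstrapperm}, and I would reuse that machinery, now applied to the pooled empirical measure, which converges to the fixed non-atomic law $\sum_{k=1}^{K}\lambda_k P_k$; the only additional remark is that this limit being a fixed, non-degenerate distribution guarantees that the limiting resampled covariance operators, and hence the associated quantiles, are finite.
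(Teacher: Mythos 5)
Your first two steps (divergence of $\|E[\mathbf{U}_n]\|$ and then of $SS_n$ via \autoref{thm:1} and the triangle inequality) coincide with the paper's argument, and your treatment of the asymptotic and bootstrap critical values is also essentially the paper's: in particular, the paper likewise exploits the fact that the weak convergence $\mathbf{U}_n^* \stackrel{w}{\longrightarrow} G\left( \mathbf{0}, \boldsymbol{\Sigma}^* \right)$ established in the proof of \autoref{thm:bootstrapperm} never used $\text{H}_0$, so $SS_n^*$ is stochastically bounded under alternatives as well.

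The gap is in the permutation case. You propose to ``reuse the machinery underlying \autoref{thm:bootstrapperm}, now applied to the pooled empirical measure,'' but for the permutation procedure that theorem contains no tightness or weak-convergence analysis to reuse: its permutation argument is purely that, \emph{under} $\text{H}_0$, the permuted sample has the same joint distribution as the original sample --- an exchangeability argument that collapses under the alternative. Nor does the bootstrap machinery transfer: the decomposition there (the terms $\mathbf{S}_k^*$ and $\mathbf{V}_{kl}^*$) rests on the bootstrap draws being conditionally i.i.d., whereas a random permutation samples the pooled data \emph{without} replacement, so the resampled observations are conditionally dependent and one would need a combinatorial or finite-population central limit theorem, which you neither invoke nor supply. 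The paper sidesteps weak convergence entirely at this point by a direct moment computation: it expands $n_k \left\| \bar{\mathbf{R}}_k^\# \right\|^2$ into four sums, evaluates the conditional expectation $E^\#[\cdot]$ over the random permutation term by term (the cross terms reduce to the permutation averages $R_{n,1}$ and $R_{n,2}$ over ordered triples and pairs of pooled observations), bounds $|R_{n,1}| \le 1$ and $|R_{n,2}| \le 1$ by the Cauchy--Schwarz inequality, and concludes $E\left[ n_k \left\| \bar{\mathbf{R}}_k^\# \right\|^2 \right] \le 2K$, whence $SS_n^\#$ is stochastically bounded under null and alternative alike. Some argument of this kind (or a genuine permutation CLT valid under the alternative) is needed to close your proof; as written, the stochastic boundedness of the permutation critical value is asserted rather than proved.
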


The standard setup of ANOVA with location shift is a special case of our general setup.
In the literature on ANOVA for functional data, most of the authors concentrated on this setup with certain additional assumptions required for their tests.
In this setup, it is assumed that
\begin{align} \label{setup}
\mathbf{X}_{k i} = \boldsymbol{\mu}_k + \boldsymbol{\epsilon}_{k i} ,
\end{align}
where $ \boldsymbol{\epsilon}_{k i} $'s are independent random elements in $ \mathcal{H} $ having identical distribution $ P_0 $, and $ \boldsymbol{\mu}_k $'s are fixed elements in $ \mathcal{H} $. We are interested in testing whether the $ \boldsymbol{\mu}_k $'s are identical for all $ k $ or not, i.e.,
\begin{align} \label{h_0l}
\text{H}_0' : \; \boldsymbol{\mu}_1 = \boldsymbol{\mu}_2 = \cdots = \boldsymbol{\mu}_K \;.
\end{align}
When $ \text{H}_0' $ in \eqref{h_0l} is true, all the underlying distributions of the $ K $ groups are same in the setup \eqref{setup}, and hence $ \text{H}_0 $ in \eqref{h_0} holds.
Note that the existence of the mean or covariance or any other moment of the error distribution $ P_0 $ is not assumed in the setup \eqref{setup}. This relaxation enables the methodology developed here to be applicable in heavy-tailed processes which have no moments. In the ANOVA with location shift setup for functional data considered in
\cite{cuevas2004anova}, \cite{shen2004f}, \cite{zhang2007statistical}, \cite{zhang2011statistical}, \cite{zhang2014one}, \cite{zhang2018new}, \cite{horvath2015introduction} and \cite{gorecki2015comparison},
it is additionally assumed that the error distribution $ P_0 $ has mean $ \mathbf{0} $ and a covariance operator. Thus our setup \eqref{setup} covers the ANOVA with location shift setup considered by the other authors, but does not assume the existence of the mean or the covariance operator of $ \boldsymbol{\epsilon}_{k i} $'s.

The validity and the asymptotic consistency of the different implementations of the SS test in the ANOVA with location shift setup follow from the result below.
\begin{theorem} \label{thm:2linear}
Let $ n^{-1} n_k \to \lambda_k \in ( 0, 1 ) $ for all $ k $ as $ n \to \infty $.
Under $ \text{H}_0' $ in \eqref{h_0l}, for every $ 0 < \alpha < 1 $, the sizes of a level $ \alpha $ test based on the asymptotic procedure, the bootstrap procedure and the permutation procedure described above converge to $ \alpha $ as $ n \to \infty $ and $ M_b, M_p \to \infty $.
Further, when $ \text{H}_0' $ in \eqref{h_0l} is not true and when the support of $ P_0 $ is not contained in a straight line in $ \mathcal{H} $, we have $ \sum_{l=1}^{K} \lambda_l E[ \mathbf{s}( \mathbf{X}_{k 1} - \mathbf{X}_{l 1} ) ] \neq \mathbf{0} $ for every $ k $, and hence for every $ 0 < \alpha < 1 $, the power of a level $ \alpha $ test based on each of the procedures converges to $ 1 $ as $ n \to \infty $ and $ M_b, M_p \to \infty $.
\end{theorem}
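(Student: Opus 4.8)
The plan is to obtain both halves of the statement as consequences of the general results already established, so that the only genuinely new work concerns the behaviour of the population quantity $\boldsymbol{\nu}_k := \sum_{l=1}^{K}\lambda_l\,E[\mathbf{s}(\mathbf{X}_{k1}-\mathbf{X}_{l1})]$ in the location-shift model \eqref{setup}. For the size statement I would first observe that under $\text{H}_0'$ in \eqref{h_0l} all the $\boldsymbol{\mu}_k$ coincide, so every group shares the common distribution of $\boldsymbol{\mu}_1+\boldsymbol{\epsilon}_{11}$; hence $\text{H}_0$ in \eqref{h_0} holds, and \autoref{thm:bootstrapperm} applies verbatim to give that the sizes of all three implementations converge to $\alpha$. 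Thus the first part requires no new estimates. For the power statement, \autoref{thm:2} already reduces matters to showing that the asymptotic rank mean is non-degenerate, so the task is precisely to verify that $\boldsymbol{\nu}_k\neq\mathbf{0}$ under the stated alternative.

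To analyse $\boldsymbol{\nu}_k$ I would exploit the convex-analytic structure of spatial signs. In the model \eqref{setup} we may write $\mathbf{X}_{k1}-\mathbf{X}_{l1}=(\boldsymbol{\mu}_k-\boldsymbol{\mu}_l)+\mathbf{V}_{kl}$, where $\mathbf{V}_{kl}=\boldsymbol{\epsilon}_{k1}-\boldsymbol{\epsilon}_{l1}$ for $k\neq l$ all share the single distribution of a difference of two independent copies of $P_0$, which is symmetric about $\mathbf{0}$. Writing $\mathbf{V}$ for a generic such difference and defining $\mathbf{g}(\boldsymbol{\delta})=E[\mathbf{s}(\boldsymbol{\delta}+\mathbf{V})]$, the quantity of interest becomes $\boldsymbol{\nu}_k=\sum_{l}\lambda_l\,\mathbf{g}(\boldsymbol{\mu}_k-\boldsymbol{\mu}_l)$. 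The key observation is that $\mathbf{g}$ is the gradient of the recentred convex functional $\phi(\boldsymbol{\delta})=E\big[\|\boldsymbol{\delta}+\mathbf{V}\|-\|\mathbf{V}\|\big]$, which is finite without any moment assumption by the reverse triangle inequality, and that $\mathbf{g}$ is odd with $\mathbf{g}(\mathbf{0})=\mathbf{0}$ by the symmetry of $\mathbf{V}$.

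The core step is a strict-monotonicity argument. First I would show that if the support of $P_0$ is not contained in a line, then the support of $\mathbf{V}$ is not contained in a line through the origin (otherwise differencing a fixed support point of $P_0$ would confine the whole support of $P_0$ to a line), and that this forces $\phi$ to be strictly convex, so that $\langle \mathbf{g}(\boldsymbol{\delta}_1)-\mathbf{g}(\boldsymbol{\delta}_2),\,\boldsymbol{\delta}_1-\boldsymbol{\delta}_2\rangle>0$ whenever $\boldsymbol{\delta}_1\neq\boldsymbol{\delta}_2$. Taking $\boldsymbol{\delta}_2=\mathbf{0}$ yields $\langle\mathbf{g}(\boldsymbol{\delta}),\boldsymbol{\delta}\rangle>0$ for every $\boldsymbol{\delta}\neq\mathbf{0}$. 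I would then symmetrise the weighted aggregate using the oddness of $\mathbf{g}$:
\begin{align*}
\sum_{k=1}^{K}\lambda_k\,\langle\boldsymbol{\nu}_k,\boldsymbol{\mu}_k\rangle
=\tfrac{1}{2}\sum_{k=1}^{K}\sum_{l=1}^{K}\lambda_k\lambda_l\,
\big\langle \mathbf{g}(\boldsymbol{\mu}_k-\boldsymbol{\mu}_l),\,\boldsymbol{\mu}_k-\boldsymbol{\mu}_l\big\rangle .
\end{align*}
When $\text{H}_0'$ fails there is a pair with $\boldsymbol{\mu}_k\neq\boldsymbol{\mu}_l$ and $\lambda_k\lambda_l>0$, so every summand is non-negative and at least one is strictly positive; hence the left-hand side is strictly positive and $\boldsymbol{\nu}_k\neq\mathbf{0}$ for at least one $k$, which is exactly the hypothesis needed to invoke \autoref{thm:2} and conclude that the power of each implementation tends to $1$.

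The step I expect to be the main obstacle is establishing the strict convexity of $\phi$, equivalently the strict monotonicity of $\mathbf{g}$, in the infinite-dimensional setting and without moment assumptions. The finite-dimensional uniqueness theory of the spatial median (in the spirit of Milasevic--Ducharme and Kemperman) must be transported to a separable Hilbert space; the argument localises to the two-dimensional plane spanned by a perturbation direction and $\boldsymbol{\delta}+\mathbf{V}$, where strict convexity of the norm away from the radial direction can be used, but one must rule out degeneracy carefully using that $\mathbf{V}$ is not concentrated on a line and that $P_0$ is non-atomic (so $\boldsymbol{\delta}+\mathbf{V}=\mathbf{0}$ has probability zero and differentiation under the expectation is justified by dominated convergence). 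A secondary subtlety worth flagging is that this monotonicity argument is naturally tailored to deliver $\boldsymbol{\nu}_k\neq\mathbf{0}$ for at least one $k$, i.e. $\|E[\mathbf{U}_n]\|\to\infty$, which is precisely what the power conclusion requires; since individual $\boldsymbol{\nu}_k$ can vanish in symmetric configurations of the $\boldsymbol{\mu}_k$, it is the aggregate inner-product identity above, rather than any term-by-term bound, that does the essential work.
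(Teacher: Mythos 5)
Your proposal is correct, and it reaches the crucial non-degeneracy claim by a genuinely different route than the paper. The size part and the final reduction to \autoref{thm:2} are exactly the paper's: under $\text{H}_0'$ all groups share one distribution, so \autoref{thm:bootstrapperm} gives the level claim, and the power claim needs only that $\boldsymbol{\nu}_k = \sum_{l}\lambda_l E[\mathbf{s}(\mathbf{X}_{k1}-\mathbf{X}_{l1})]$ is nonzero for at least one $k$. For that step the paper does not argue via convexity at all: it identifies $\boldsymbol{\nu}_k$ as the spatial distribution function $\mathbf{S}(\boldsymbol{\mu}_k)$ of the mixture random element $\mathbf{Z}$ equal to $(\mathbf{X}+\boldsymbol{\mu}_k)-\mathbf{X}'$ with probability $\lambda_k$, and cites the injectivity of spatial distribution functions (proof of Theorem 3.1 in Chakraborty and Chaudhuri, 2014) for laws whose support is not contained in a line; injectivity means $\mathbf{S}$ can vanish at no more than one point, so if the $\boldsymbol{\mu}_k$ are not all equal then $\mathbf{S}(\boldsymbol{\mu}_k)\neq\mathbf{0}$ for some $k$. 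You instead rebuild the needed ingredient from scratch: strict convexity of $\phi(\boldsymbol{\delta})=E\left[\|\boldsymbol{\delta}+\mathbf{V}\|-\|\mathbf{V}\|\right]$, strict monotonicity of $\mathbf{g}$, and the symmetrization identity $\sum_k\lambda_k\langle\boldsymbol{\nu}_k,\boldsymbol{\mu}_k\rangle=\tfrac12\sum_{k,l}\lambda_k\lambda_l\langle\mathbf{g}(\boldsymbol{\mu}_k-\boldsymbol{\mu}_l),\boldsymbol{\mu}_k-\boldsymbol{\mu}_l\rangle$, which I checked is valid (oddness of $\mathbf{g}$ plus index swapping). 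The step you flag as the obstacle is genuine but completable by an argument cleaner than the planar localisation you sketch: along a segment $[\boldsymbol{\delta}_1,\boldsymbol{\delta}_2]$, strictness in the triangle inequality for $\|\boldsymbol{\delta}_i+\mathbf{V}\|$ can fail only when $\boldsymbol{\delta}_1+\mathbf{V}$ and $\boldsymbol{\delta}_2+\mathbf{V}$ are nonnegatively proportional, i.e.\ only when $\mathbf{V}$ lies on the line through $-\boldsymbol{\delta}_1$ and $-\boldsymbol{\delta}_2$, an event of probability less than one by your (correct) observation that $\mathrm{supp}(\mathbf{V})$ inherits the not-in-a-line property from $P_0$; moreover you do not even need differentiability, since $E[\mathbf{s}(\boldsymbol{\delta}+\mathbf{V})]$ is a subgradient of $\phi$ at $\boldsymbol{\delta}$ (using $P[\boldsymbol{\delta}+\mathbf{V}=\mathbf{0}]=0$ from non-atomicity) and strict convexity already forces strict monotonicity of subgradients. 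What each approach buys: the paper's is shorter because the hard work is outsourced to a citation, while yours is self-contained and yields the quantitative positivity $\sum_k\lambda_k\langle\boldsymbol{\nu}_k,\boldsymbol{\mu}_k\rangle>0$; in substance your convexity argument re-proves the core of the cited uniqueness result. Finally, your closing caveat is well taken and in fact exposes an overstatement in the theorem as written: both your argument and the paper's deliver $\boldsymbol{\nu}_k\neq\mathbf{0}$ only for \emph{some} $k$, not for every $k$ --- for example, with equal $\lambda_k$ and $\boldsymbol{\mu}_1=\mathbf{0}$, $\boldsymbol{\mu}_2=-\boldsymbol{\mu}_3=\boldsymbol{\delta}$, one has $\boldsymbol{\nu}_1=\mathbf{0}$ by oddness of $\mathbf{g}$ --- and ``some $k$'' is precisely what the proof of \autoref{thm:2} uses, so the power conclusion is unaffected.
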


The condition that the support of $ P_0 $ is not contained in a straight line in $ \mathcal{H} $ essentially means that $ P_0 $ is not degenerate in any direction in $ \mathcal{H} $, and this condition holds for all the common non-degenerate distributions like the Gaussian processes, the $ t $ processes, etc.

We have framed the null hypothesis as the equality of the distributions underlying the groups. However, the test statistic $ SS_n $ is based on spatial signs, and can be used in a heteroscedastic situation also. The methodology developed here covers heteroscedastic situations, where the null hypothesis is
\begin{align*}
\text{H}_0^* : \; \sum_{l=1}^{K} \lambda_l E[ \mathbf{s}( \mathbf{X}_{k 1} - \mathbf{X}_{l 1} ) ] = \mathbf{0} \;\text{for all}\; k ,
\end{align*}
where $ \lambda_k = \lim_{n \to \infty} n^{-1} n_k \in ( 0, 1 ) $ for all $ k $. If $ \sqrt{n} ( n^{-1} n_k - \lambda_k ) \to 0 $ as $ n \to \infty $ for all $ k $, then using arguments similar to those in the proof of \autoref{coro:1}, it can be verified that $ SS_n \stackrel{w}{\longrightarrow} \| \mathbf{W} \|^2 $ as $ n \to \infty $ under $ \text{H}_0^* $, where $ \mathbf{W} $ is as described in \autoref{coro:1}. This indicates that results analogous to \autoref{thm:bootstrapperm} and \autoref{thm:2} can be established under $ \text{H}_0^* $, which guarantee the validity of the test procedures for testing $ \text{H}_0^* $.

The null hypothesis $ \text{H}_0^* $ holds in many heteroscedastic setups. For example, consider any distribution $ P $, which is symmetric with respect to the origin $ \mathbf{0} $, i.e., for $ P $ being the distribution of the random element $ \mathbf{X} $, it is also the distribution of $ - \mathbf{X} $. Let $ c_1, \ldots, c_K $ be positive constants, $ \boldsymbol{\mu} $ be a fixed element, and $ P_k $ be the distribution of the random element $ \boldsymbol{\mu} + c_k \mathbf{X} $ for $ k = 1, \ldots, K $. Let $ \mathbf{X}' $ be an independent copy of $ \mathbf{X} $. Then,
\begin{align*}
\mathbf{s}( \mathbf{X}_{k 1} - \mathbf{X}_{l 1} ) \stackrel{d}{=} \mathbf{s}( c_{k} \mathbf{X} - c_{l} \mathbf{X}' ) \stackrel{d}{=} \mathbf{s}( - c_{k} \mathbf{X} + c_{l} \mathbf{X}' ) \stackrel{d}{=} - \mathbf{s}( c_{k} \mathbf{X} - c_{l} \mathbf{X}' )
\end{align*}
for all $ k $ and $ l $,
which implies that $ E[ \mathbf{s}( \mathbf{X}_{k 1} - \mathbf{X}_{l 1} ) ] = \mathbf{0} $ for all $ k $ and $ l $, and $ \text{H}_0^* $ is satisfied. Examples of such symmetric distributions $ P $ include all centered Gaussian processes and centered $ t $ processes.

\section{Comparison of different tests} \label{sec:3}
In this section, we shall compare the asymptotic and finite sample performance of the SS test with those of several other tests in the literature. \autoref{thm:2linear} in the preceding section implies that our SS test is consistent under any fixed alternative. Similar consistency under fixed alternative holds for several of the mean based tests. Hence, in order to compare the asymptotic performance of these tests, we consider a class of shrinking alternatives:
\begin{align}
\boldsymbol{\mu}_k = ( 1 / \sqrt{n} ) \boldsymbol{\delta}_k \text{ for } k = 1, \ldots, K, \text{ where } \boldsymbol{\delta}_1, \ldots, \boldsymbol{\delta}_K \in \mathcal{H} \text{ are fixed}.
\label{asympower1}
\end{align}
Similar shrinking alternatives were considered in \cite{zhang2014one} and \cite{zhang2018new}.
Note that the Frech\'{e}t derivative of $ \mathbf{s}( \mathbf{x} ) = \| \mathbf{x} \|^{-1} \mathbf{x} $ exists at all $ \mathbf{x} \neq \mathbf{0} $, and we denote it as $ \mathbf{s}^{(1)}( \mathbf{x} )( \cdot ) $. The expression for $ \mathbf{s}^{(1)}( \mathbf{x} )( \cdot ) $ is given in the proof of \autoref{lemma:lemma1}. From the following theorem, we derive the asymptotic power of the SS test under the class of shrinking alternatives in \eqref{asympower1}.
\begin{theorem} \label{thm:3}
Let $ \mathbf{X} $ and $ \mathbf{X}' $ be independent random elements having identical distribution $ P_0 $.
Let $ n^{-1} n_k \to \lambda_k \in ( 0, 1 ) $ for all $ k $ as $ n \to \infty $,
and $ \bar{\boldsymbol{\delta}} = \sum_{l = 1}^{K} \lambda_l \boldsymbol{\delta}_l $.
Assume that $ P_0 $ is non-atomic and not contained in any straight line in $ \mathcal{H} $,
and $ E\left[ \left\| \mathbf{X} - \mathbf{X}' \right\|^{-1} \right] < \infty $.
Then, under \eqref{asympower1}, $ E\left[ \mathbf{U}_n \right] \to \mathbf{U}_0 = \left( \mathbf{u}_1, \ldots, \mathbf{u}_K \right) $ as $ n \to \infty $, where
$ \mathbf{u}_k = E\left[ \mathbf{s}^{(1)}\left( \mathbf{X} - \mathbf{X}' \right) \right] \left( \sqrt{\lambda_k} \left( \boldsymbol{\delta}_k - \bar{\boldsymbol{\delta}} \right) \right) $
for all $ k = 1, \ldots, K $. Consequently,
$ \mathbf{U}_n \stackrel{w}{\longrightarrow} G\left( \mathbf{U}_0, \boldsymbol{\Sigma} \right) $ as $ n \to \infty $, and
$ SS_n = \| \mathbf{U}_n \|^2 \stackrel{w}{\longrightarrow} \left\| \mathbf{U}_0 - \sum_{i=1}^{\infty} \langle \mathbf{U}_0, \boldsymbol{\beta}_i \rangle \right\|^2 + \sum_{i=1}^{\infty} \left( \langle \mathbf{U}_0, \boldsymbol{\beta}_i \rangle + \sqrt{\alpha_i} \mathbf{Z}_i \right)^2 $,
where
$ \alpha_1, \alpha_2, \ldots $ is the decreasing sequence of eigenvalues of $ \boldsymbol{\Sigma} $, $ \boldsymbol{\beta}_i $ is the corresponding eigenvector of $ \alpha_i $ for all $ i $, and $ \mathbf{Z}_1, \mathbf{Z}_2, \ldots $ are independent standard normal random variables.
Further, $ \mathbf{U}_0 \neq \mathbf{0} $ if $ \boldsymbol{\delta}_k \neq \boldsymbol{\delta}_l $ for at least one pair of $ k $ and $ l $.
\end{theorem}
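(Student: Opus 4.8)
The plan is to establish the four assertions in sequence: the convergence of the mean $ E[\mathbf{U}_n] \to \mathbf{U}_0 $, the weak convergence of $ \mathbf{U}_n $, the limiting law of $ SS_n $, and finally the non-degeneracy $ \mathbf{U}_0 \neq \mathbf{0} $. The heart of the argument is a first-order Fr\'{e}chet expansion of the expected spatial sign under the vanishing shift, after which the remaining pieces follow from \autoref{thm:1}, the continuous mapping theorem, and a positive-definiteness argument.

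For the mean I would write each pairwise difference under the shrinking alternative as $ \mathbf{X}_{ki} - \mathbf{X}_{lj} = n^{-1/2}( \boldsymbol{\delta}_k - \boldsymbol{\delta}_l ) + ( \boldsymbol{\epsilon}_{ki} - \boldsymbol{\epsilon}_{lj} ) $, where $ \boldsymbol{\epsilon}_{ki}, \boldsymbol{\epsilon}_{lj} $ are independent with distribution $ P_0 $, and expand $ \mathbf{s} $ about $ \boldsymbol{\epsilon}_{ki} - \boldsymbol{\epsilon}_{lj} $ to get $ E[ \mathbf{s}( \mathbf{X}_{ki} - \mathbf{X}_{lj} ) ] = E[ \mathbf{s}( \mathbf{X} - \mathbf{X}' ) ] + n^{-1/2} E[ \mathbf{s}^{(1)}( \mathbf{X} - \mathbf{X}' ) ]( \boldsymbol{\delta}_k - \boldsymbol{\delta}_l ) + o( n^{-1/2} ) $. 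Since $ \mathbf{X} - \mathbf{X}' $ is symmetric about $ \mathbf{0} $, the leading term vanishes. Taking expectations in $ \sqrt{n_k} \bar{\mathbf{R}}_k = \sqrt{n_k}\, n^{-1} \sum_l n_l \boldsymbol{\nu}_{kl} $ (the within-group contribution being $ \mathbf{0} $), using $ \sum_l \lambda_l = 1 $, $ \sum_l \lambda_l \boldsymbol{\delta}_l = \bar{\boldsymbol{\delta}} $, and the linearity of the operator $ \mathbf{B} := E[ \mathbf{s}^{(1)}( \mathbf{X} - \mathbf{X}' ) ] $, the limit collapses to $ \mathbf{u}_k = \sqrt{\lambda_k}\, \mathbf{B}( \boldsymbol{\delta}_k - \bar{\boldsymbol{\delta}} ) $. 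I expect this mean computation to be the main obstacle: the expansion must be justified rigorously, dominating the difference quotients uniformly despite the singularity of $ \mathbf{s} $ at the origin. This is exactly where the hypothesis $ E[ \| \mathbf{X} - \mathbf{X}' \|^{-1} ] < \infty $ enters, since $ \mathbf{s}^{(1)}( \mathbf{x} ) $ scales like $ \| \mathbf{x} \|^{-1} $; I would invoke the derivative formula and the integrability bookkeeping established in \autoref{lemma:lemma1} to pass to the limit under the expectation.

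Given $ E[ \mathbf{U}_n ] \to \mathbf{U}_0 $, the weak convergence $ \mathbf{U}_n \stackrel{w}{\longrightarrow} G( \mathbf{U}_0, \boldsymbol{\Sigma} ) $ follows by applying \autoref{thm:1} to the centered statistic $ \mathbf{U}_n - E[ \mathbf{U}_n ] $ and adding back the deterministic mean via Slutsky's theorem. Here I would note that the argument of \autoref{thm:1} carries over to the triangular array generated by the vanishing shifts, because the spatial signs are uniformly bounded in norm by $ 1 $ and the group covariance operators converge to those computed under the common limit $ P_0 $, so the limiting operator is the same $ \boldsymbol{\Sigma} $. The continuous mapping theorem then gives $ SS_n = \| \mathbf{U}_n \|^2 \stackrel{w}{\longrightarrow} \| \mathbf{W} \|^2 $ with $ \mathbf{W} \sim G( \mathbf{U}_0, \boldsymbol{\Sigma} ) $. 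To reach the stated series form I would use the Karhunen--Lo\`{e}ve expansion $ \mathbf{W} = \mathbf{U}_0 + \sum_i \sqrt{\alpha_i} \mathbf{Z}_i \boldsymbol{\beta}_i $, split $ \mathbf{U}_0 $ into its projection onto the closed range of $ \boldsymbol{\Sigma} $ (spanned by the $ \boldsymbol{\beta}_i $) and its orthogonal component in the kernel, and apply Pythagoras: the kernel part contributes $ \| \mathbf{U}_0 - \sum_i \langle \mathbf{U}_0, \boldsymbol{\beta}_i \rangle \boldsymbol{\beta}_i \|^2 $ and the coordinates along each $ \boldsymbol{\beta}_i $ contribute $ ( \langle \mathbf{U}_0, \boldsymbol{\beta}_i \rangle + \sqrt{\alpha_i} \mathbf{Z}_i )^2 $, which is the displayed limit.

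Finally, for $ \mathbf{U}_0 \neq \mathbf{0} $ I would show that $ \mathbf{B} $ is strictly positive definite. Using $ \mathbf{s}^{(1)}( \mathbf{x} )( \mathbf{h} ) = \| \mathbf{x} \|^{-1}( \mathbf{h} - \langle \mathbf{s}( \mathbf{x} ), \mathbf{h} \rangle \mathbf{s}( \mathbf{x} ) ) $, one has $ \langle \mathbf{B} \mathbf{h}, \mathbf{h} \rangle = E[ \| \mathbf{X} - \mathbf{X}' \|^{-1}( \| \mathbf{h} \|^2 - \langle \mathbf{s}( \mathbf{X} - \mathbf{X}' ), \mathbf{h} \rangle^2 ) ] \geq 0 $ by Cauchy--Schwarz, with equality only if $ \mathbf{X} - \mathbf{X}' $ is almost surely parallel to $ \mathbf{h} $, which would confine the support of $ P_0 $ to an affine line; since $ P_0 $ is not contained in any straight line, this is impossible for $ \mathbf{h} \neq \mathbf{0} $, so $ \mathbf{B} $ is injective. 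If $ \boldsymbol{\delta}_k \neq \boldsymbol{\delta}_l $ for some pair, then the $ \boldsymbol{\delta}_k $ are not all equal to $ \bar{\boldsymbol{\delta}} $, so $ \boldsymbol{\delta}_{k_0} - \bar{\boldsymbol{\delta}} \neq \mathbf{0} $ for some $ k_0 $, whence $ \mathbf{u}_{k_0} = \sqrt{\lambda_{k_0}}\, \mathbf{B}( \boldsymbol{\delta}_{k_0} - \bar{\boldsymbol{\delta}} ) \neq \mathbf{0} $ and therefore $ \mathbf{U}_0 \neq \mathbf{0} $.
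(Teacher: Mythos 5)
Your proposal is correct, and its skeleton coincides with the paper's proof: the mean limit comes from the first-order Fr\'{e}chet expansion of $ \mathbf{a} \mapsto E\left[ \mathbf{s}\left( \mathbf{X} - \mathbf{X}' + \mathbf{a} \right) \right] $ at $ \mathbf{0} $ (the paper packages exactly this, the formula for $ \mathbf{s}^{(1)} $, and the integrability bookkeeping in \autoref{lemma:lemma1}), the weak convergence is \autoref{thm:1} plus Slutsky, and the series limit is Karhunen--Lo\`{e}ve plus the mapping theorem, with your Pythagoras split matching the paper's (and making explicit the $ \boldsymbol{\beta}_i $ factor that the displayed limit in the statement omits). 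The genuine divergence is in the non-degeneracy claim. The paper gets it by citing the second conclusion of \autoref{lemma:lemma1}, whose proof is comparatively heavy --- a Zorn's lemma maximality argument, a compactness argument over unit vectors in a two-dimensional subspace, and a truncation at level $ M $, producing a coercive bound of the form $ \left\langle \mathbf{h}, E\left[ \mathbf{s}^{(1)}\left( \mathbf{X} - \mathbf{X}' \right) \right]\left( \mathbf{h} \right) \right\rangle \ge \left( l_0 / 2M \right) \left\| \mathbf{h} \right\|^2 $ --- followed by a matrix computation writing $ \mathbf{U}_0 $ through the idempotent matrix $ \mathbf{I}_K - \mathbf{1}_K \mathbf{L}^t $ whose null space is spanned by $ \mathbf{1}_K $. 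Your route is more elementary on both counts: the Cauchy--Schwarz equality case in $ \left\langle \mathbf{B} \mathbf{h}, \mathbf{h} \right\rangle = E\left[ \left\| \mathbf{X} - \mathbf{X}' \right\|^{-1} \left( \left\| \mathbf{h} \right\|^2 - \left\langle \mathbf{s}\left( \mathbf{X} - \mathbf{X}' \right), \mathbf{h} \right\rangle^2 \right) \right] $ forces $ \mathbf{X} - \mathbf{X}' $ to lie almost surely in $ \mathrm{span}\left\{ \mathbf{h} \right\} $, hence the support of $ P_0 $ in a straight line, contradicting the hypothesis; and the observation that some $ \boldsymbol{\delta}_k \neq \boldsymbol{\delta}_l $ implies some $ \boldsymbol{\delta}_{k_0} \neq \bar{\boldsymbol{\delta}} $ replaces the matrix bookkeeping. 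What you give up is the paper's uniform (coercive) lower bound --- you only get pointwise strict positivity of $ \mathbf{B} $ --- but injectivity is all \autoref{thm:3} needs, so nothing is lost here. You are also slightly more careful than the paper on one point: under \eqref{asympower1} the group distributions change with $ n $, so invoking \autoref{thm:1} literally involves a triangular array; the paper applies it silently, whereas your justification (signs uniformly bounded, group covariance operators converging to those under $ P_0 $, hence the same limiting $ \boldsymbol{\Sigma} $) is the right way to close that gap.
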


\subsection{Description of other tests}\label{descriptionoftests}
We describe below the tests in the literature with which we compare the performance of the SS test in this paper. We compare the finite sample powers against fixed alternatives and the asymptotic powers against shrinking alternatives of these tests.
Though our testing procedures are also applicable for heteroscedastic data, several of the testing procedures with which we compare the performance of our test were presented for the homoscedastic case (see, e.g., \cite{shen2004f,zhang2007statistical,zhang2011statistical,zhang2014one}). For this reason, we restrict the comparison of performances in the simulation study to the homoscedastic setup. The real datasets we consider for the comparison of performances also do not exhibit heteroscedasticity.

The tests in the literature, which we consider here, are developed for the case where the observations are real-valued square-integrable random functions on an interval $ [ a, b ] $, i.e., $ \mathbf{X}_{k i}( \cdot ) : [ a, b ] \to \mathbb{R} $ is a random function for all $ k $ and $ i $. So, $ \mathbf{X}_{k i}( \cdot ) $ is an element of the $ L_2[ a, b ] $ space, which is a separable Hilbert space, and $ \| \mathbf{x} \| = \left( \int \| \mathbf{x}( t ) \|^2 \mathrm{d}t \right)^{1/2} $ for $ \mathbf{x} \in L_2[ a, b ] $.
Define $ \bar{\mathbf{X}}_{k \cdot}( t ) = n_k^{-1} \sum_{i = 1}^{n_k} \mathbf{X}_{k i}( t ) $ and $ \bar{\mathbf{X}}_{\cdot \cdot}( t ) = n^{-1} \sum_{k = 1}^{K} \sum_{i = 1}^{n_k} \mathbf{X}_{k i}( t ) $ for all $ t \in [ a, b ] $.

We call the test of ANOVA for functional data proposed by \cite{cuevas2004anova} as the CFF test.
From the methodology developed in \cite{zhang2007statistical}, an $ L_2 $ norm based test of ANOVA can be derived, which we call the ZC test.
In \cite{shen2004f} and \cite{zhang2011statistical}, another test for functional linear models was proposed, and we call the corresponding test of ANOVA as the F-type test.
The test of ANOVA described in \cite{zhang2014one} is denoted here as the GPF test,
and the test proposed in \cite{zhang2018new} is called the F-max test following the authors.
We call the permutation test given by \cite{gorecki2015comparison} as the GS test.
The random projection based test proposed in \cite{cuesta2010simple} is denoted here as the CAFB test.
The test by \cite{horvath2015introduction} is denoted as the HR test.
The detailed descriptions of the test statistics and the implementations of the tests are presented in \autoref{asymptoticpowerexpression}.

We shall compare the finite sample performance of the SS test with all the above tests. For the asymptotic power comparison against shrinking alternatives, we consider all the tests except the F-type test, the CAFB test and the GS test. The F-type test is excluded from the asymptotic power comparison because the F-type test and the ZC test have identical asymptotic performance. The CAFB test and the GS test are also excluded from asymptotic power comparison because the authors of these tests did not study their asymptotic distributions, and we found it difficult to derive their asymptotic distributions.

For the comparison of the asymptotic power against shrinking alternatives, we need the mathematical expressions of the asymptotic powers of these tests in this setup.
The asymptotic power of the F-max test under \eqref{asympower1} is derived from Proposition 3 in \cite{zhang2018new}, and that of the GPF test under \eqref{asympower1} is obtained from Proposition 3 in \cite{zhang2014one}.
From Theorem 4.17 and Remark 4.11 in \cite{zhang2013analysis}, we get that the ZC test and the F-type test are asymptotically equivalent, and hence have the same asymptotic power.
The expressions of the asymptotic powers of the CFF test, the ZC test and the HR test under the class of shrinking alternatives described in \eqref{asympower1}, where $ \mathcal{H} = L_2[ a, b ] $, are derived in \autoref{asymptoticpowerexpression}.

\subsection{Asymptotic power study of different tests}
\label{asympower}
We compare the asymptotic powers of the tests under the class of shrinking alternatives described in \eqref{asympower1} in probability models with 3 groups.
Note that for a random element $ \mathbf{V} $ following Gaussian process $ G $, its corresponding $ t $ process with $ k $ degrees of freedom can be obtained by dividing $ \mathbf{V} $ by $ \sqrt{\boldsymbol{\chi} / k} $, where $ \boldsymbol{\chi} $ is a chi-square random variable with $ k $ degrees of freedom independent of $ \mathbf{V} $. We denote this $ t $ process as $ t_{k, G} $.
The derivation of the asymptotic powers of the tests except the SS test requires that $ E\left[ \left\| \mathbf{X} \right\|^2 \right] < \infty $, and for $ \mathbf{X} $ following a $ t_{k, G} $ distribution with $ k \ge 3 $, $ E\left[ \left\| \mathbf{X} \right\|^2 \right] < \infty $.
We take $ \mathcal{H} = L_2[ a, b ] $, and consider the standard Brownian motion (SBM) over $ [ a, b ] $ along with its corresponding $ t_3 $ and $ t_4 $ processes as particular cases of $ P_0 $, where $ P_0 $ is as described below \eqref{setup}.
The interval $ [ a, b ] $ is taken as $ [ 0.25, 0.75 ] $, which ensures that $ \mathbf{X}_{k i}\left( t \right) $ is non-degenerate at every $ t \in [ a, b ] $ for $ P_0 $ being any of the processes mentioned above.

\begin{figure}[h]
	\centering
	\includegraphics[width=1\linewidth]{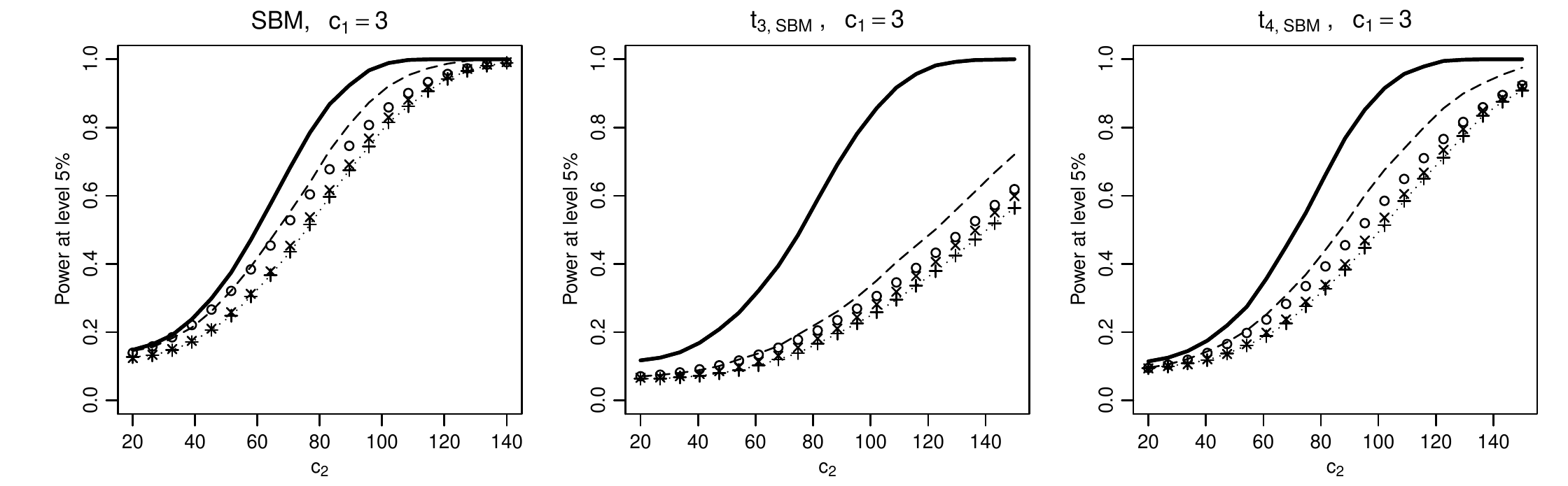}
	\caption{Asymptotic power curves of the SS test ($ \boldsymbol{-} $), CFF test (++), ZC test ($ \cdots $), F-max test (--{ }--), GPF test ($ \times \times $) and HR test ($ \circ \circ $) under nominal level 5\% in SBM and associated $ t $ processes.}
	\label{fig:anovaasym1}
\end{figure}
We take the shifts in the groups under \eqref{asympower1} as $ \boldsymbol{\delta}_1 = \mathbf{0} $, $ \boldsymbol{\delta}_2 = c_1 \boldsymbol{\eta}_1 $ and $ \boldsymbol{\delta}_3 = c_2 \boldsymbol{\eta}_2 $.
Here, $ c_1 , c_2 $ are real valued constants and $ \boldsymbol{\eta}_1, \boldsymbol{\eta}_2 \in \mathcal{H} $ are defined as $ \boldsymbol{\eta}_1( t ) = t $ and $ \boldsymbol{\eta}_2( t ) = (t - 0.25) (0.75 - t) $ for $ t \in [ 0.25, 0.75 ] $.
Also, we take $ \lambda_1 = \lambda_2 = \lambda_3 = 1/3 $.
The nominal level is fixed at 5\%.
To generate asymptotic power curves, we fix $ c_1 $ and vary $ c_2 $ in an interval.

The asymptotic powers of the tests in the setup described above are presented in \autoref{fig:anovaasym1}.
We observe that the performance of the SS test is better than all the other tests for both the Gaussian and the non-Gaussian processes.


\subsection{Finite sample performance in simulated data}
\label{simulation}
We compare the finite sample power of the SS test, the CFF test, the ZC test, the F-max test, the GPF test, the HR test, the CAFB test and the GS test in the setup \eqref{setup} with 3 groups.
We consider the same space $ \mathcal{H} = L_2[ a, b ] $ with $ [ a, b ] = [ 0.25, 0.75 ] $ as in \autoref{asympower}.
For the cases of the distribution $ P_0 $, we consider the following three collections:
\begin{itemize}
\item \textit{Gaussian and $ t $ processes:} The cases of the distribution $ P_0 $ are the SBM and its associated $ t_1 $ and $ t_3 $ processes.

\item \textit{Contaminated models:} Here, the distribution $ P_0 $ is replaced with a mixture distribution $ ( 1 - p ) P_0 + p P^{(s)} $, where $ p $ is the proportion of contamination, $ s $ is a positive number and $ P^{(s)} $ is the distribution of the random element $ s \mathbf{X} $, where $ \mathbf{X} $ follows $ P_0 $. We take $ p = 0.25 $ and $ s = 5 $. The cases of the distribution $ P_0 $, which are replaced with the respective contaminated models, are the SBM and its associated $ t_1 $ and $ t_3 $ processes as above.

\item \textit{Skewed distributions:} Three skewed distributions are considered:
\begin{enumerate}
\item Geometric Brownian motion: $ P_0 $ is the distribution of the process $ \mathbf{G}_1(t) = \exp[ \mathbf{B}(t) ] $, where $ \mathbf{B}(t) $ is an SBM.

\item Squared Brownian motion: $ P_0 $ is the distribution of the process $ \mathbf{G}_2(t) = ( \mathbf{B}(t) )^2 $, where $ \mathbf{B}(t) $ is an SBM.

\item Squared $ t $ process: $ P_0 $ is the distribution of the process $ \mathbf{G}_3(t) = ( \mathbf{T}(t) )^2 $, where $ \mathbf{T}(t) $ is the $ t_{3, \text{SBM}} $ process.
\end{enumerate}
\end{itemize}
Note that the covariance operator of $ P_0 $ exists for all the cases above except the $ t_{1, \text{SBM}} $ process and the contaminated $ t_{1, \text{SBM}} $ process. Most of the other tests, which we consider here for comparison of powers, require the existence of the covariance operator for the validity of their theoretical properties. However, in this finite sample power study, we consider in addition the heavy-tailed $ t_{1, \text{SBM}} $ and the contaminated $ t_{1, \text{SBM}} $ processes to investigate the finite-sample performance of the tests under such non-Gaussian heavy-tailed distributions.

The observations from these distributions, which are random functions, are generated on an equispaced grid of length 100 on the interval $ [ a, b ] $.

The shifts in the 3 groups are taken as
$ \boldsymbol{\mu}_1 = \mathbf{0} $, $ \boldsymbol{\mu}_2 = c_1 \boldsymbol{\eta}_1 $ and $ \boldsymbol{\mu}_3 = c_2 \boldsymbol{\eta}_2 $, where $ c_1, c_2 $ are constants and $ \boldsymbol{\eta}_1, \boldsymbol{\eta}_2 \in \mathcal{H} $ are defined by $ \boldsymbol{\eta}_1( t ) = t $ and $ \boldsymbol{\eta}_2( t ) = (t - 0.25) (0.75 - t) $ for $ t \in [ 0.25, 0.75 ] $, which are the same as considered in \autoref{asympower}.
Note that when $ c_1 = c_2 = 0 $, $ \text{H}_0' $ in \eqref{h_0l} is satisfied. So, taking $ c_1 = c_2 = 0 $, we estimate the sizes of the tests at nominal level 5\%.
To generate power curves, we fix $ c_1 $ and vary $ c_2 $ in an interval, and estimate the powers of the tests corresponding to those values of $ c_1 $ and $ c_2 $. Specifically, we choose 20 equispaced values of $ c_2 $ in an interval which spans most of the power range of the SS test. The specific values of the constants vary among the distributions.

To investigate the performances of the asymptotic and the permutation implementations of the SS test, we consider two cases of group sizes. We take $ n_1 = n_2 = n_3 = 20 $ in the first case, where the asymptotic implementation of the SS test is used, and $ n_1 = n_2 = n_3 = 4 $ in the second case, where the permutation implementation is used.
We estimate the sizes and the powers of the tests under nominal level 5\% based on 1000 independent replications.

To get observations from the chosen distributions of $ \mathbf{X}_{k i} $ in \eqref{setup}, we first generate observations $ \boldsymbol{\epsilon}_{k i} $ from $ P_0 $, then add the appropriate $ \boldsymbol{\mu}_k $ to get the $ \mathbf{X}_{k i} $ observations. For each of the cases of $ P_0 $ in the power study, we generate 1000 independent sets $ \{ \boldsymbol{\epsilon}_{k i} \midil \boldsymbol{\epsilon}_{k i} \sim P_0 \text{ are independent}, i = 1, \ldots, n_k; k = 1, 2, 3 \} $, then obtain the corresponding 1000 independent sets $ \{ \mathbf{X}_{k i} = \boldsymbol{\mu}_k + \boldsymbol{\epsilon}_{k i} \midil \boldsymbol{\mu}_1 = \mathbf{0}, \boldsymbol{\mu}_2 = c_1 \boldsymbol{\eta}_1, \boldsymbol{\mu}_3 = c_2 \boldsymbol{\eta}_2,\, i = 1, \ldots, n_k; k = 1, 2, 3 \} $. For different values of $ c_2 $, the sets of observations $ \{ \boldsymbol{\epsilon}_{k i} \} $ are kept same. This eliminates the random fluctuation in the estimated power curves arising from the random sampling process, and the estimated power curves are relatively smooth. The 1000 independent sets of the $ \boldsymbol{\epsilon}_{k i} $ observations used to estimated the sizes of the tests for different underlying distributions $ P_0 $ are generated independently from those in the power study.

\subsubsection*{Results for the Gaussian and $ t $ processes:}
The estimated sizes of the tests in the the Gaussian and $ t $ processes are presented in \autoref{table1}, and the estimated power curves of the tests are plotted in \autoref{fig:sim1}. We note that only the SS test and the GS test have sizes close to the nominal level irrespective of the group sizes or underlying distributions.
\begin{table}
\caption{Estimated sizes of the SS test, CFF test, ZC test, F-max test, GPF test, F-type test, HR test, CAFB test and GS test in the Gaussian and $ t $ processes (nominal level 5\%).}
\begin{tabular} {lcccccccccc}  
\hline
$ P_0 $	& $ ( n_1, n_2, n_3 ) $	& SS & CFF & ZC & F-max & GPF & F-type & HR & CAFB & GS \\ \hline 
SBM	& (20, 20, 20) & 0.054 & 0.054 & 0.060 & 0.064 & 0.061 & 0.051 & 0.081 & 0.032 & 0.048 \\
SBM & (4, 4, 4) & 0.056 & 0.110 & 0.125 & 0.044 & 0.102 & 0.013 & 0.305 & 0.031 & 0.052 \\
$ t_{1, \text{SBM}} $	& (20, 20, 20) & 0.053 & 0.014 & 0.016 & 0.017 & 0.015 & 0.014 & 0.025 & 0.025 & 0.055 \\
$ t_{1, \text{SBM}} $	& (4, 4, 4) & 0.058 & 0.041 & 0.080 & 0.021 & 0.045 & 0.001 & 0.183 & 0.039 & 0.057 \\
$ t_{3, \text{SBM}} $	& (20, 20, 20) & 0.049 & 0.053 & 0.054 & 0.053 & 0.051 & 0.048 & 0.081 & 0.030 & 0.052 \\
$ t_{3, \text{SBM}} $	& (4, 4, 4) & 0.055 & 0.089 & 0.095 & 0.029 & 0.090 & 0.002 & 0.270 & 0.031 & 0.054 \\
\hline
\end{tabular}
\label{table1}
\end{table}
\begin{figure}[h]
	\centering
	\includegraphics[width=1\linewidth]{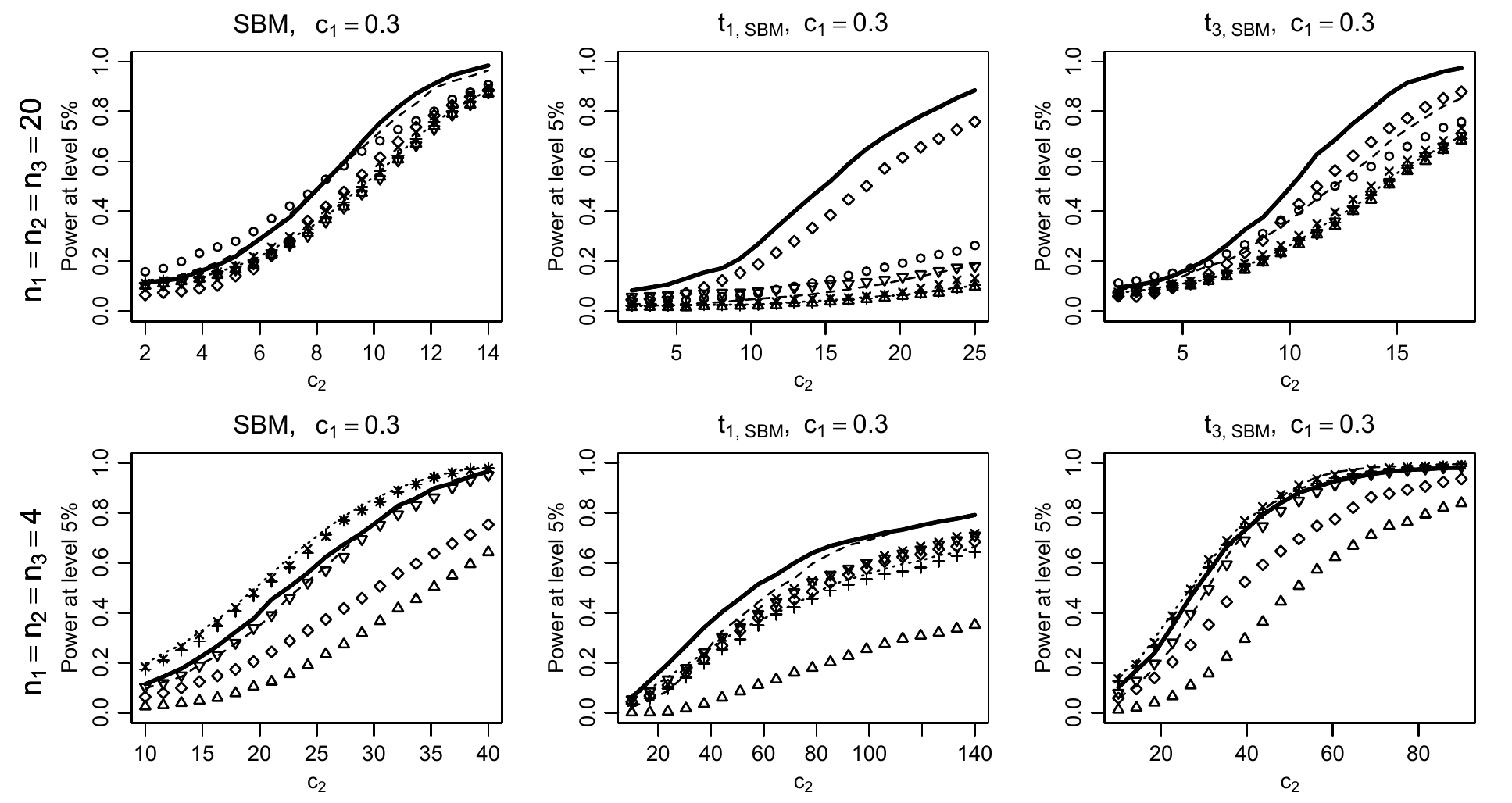}
	\caption{SS test ($ \boldsymbol{-} $), CFF test (++), ZC test ($ \cdots $), F-max test (--{ }--), GPF test ($ \times \times $), F-type test ($ \triangle $), HR test ($ \circ $), CAFB test ($ \diamond $) and GS test ($ \triangledown $) under nominal level 5\% in SBM and associated $ t $ processes: $ 1^\text{st} $ row: $ n_1 = n_2 = n_3 = 20 $, $ 2^\text{nd} $ row: $ n_1 = n_2 = n_3 = 4 $.}
	\label{fig:sim1}
\end{figure}
The F-max test, the F-type test and the CAFB test tend to have lower sizes for small group sizes or non-Gaussian heavy tail processes (i.e., $ t_{1, \text{SBM}} $ and $ t_{3, \text{SBM}} $), while the sizes of the CFF test, the ZC test and the GPF test are sometimes significantly higher than the nominal level and sometimes somewhat lower than the nominal level. The size of the HR test is always significantly higher than the nominal level for small group sizes, and so it is omitted from the power comparison for the case of $ n_1 = n_2 = n_3 = 4 $.

For $ n_1 = n_2 = n_3 = 20 $, we note in \autoref{fig:sim1} that the performance of the SS test in case of SBM is sometimes narrowly beaten by only the HR test. However, the HR test also exhibits an estimated size over 8.5\% in that case, considerably higher than the nominal level, indicating it is biased towards rejecting the null hypothesis. For the non-Gaussian heavy tail processes (i.e., $ t_{1, SBM} $ and $ t_{3, \text{SBM}} $), the performance of the SS test is significantly better than all other tests.

When the group sizes are small, i.e., $ n_1 = n_2 = n_3 = 4 $, and $ P_0 $ is the SBM, we notice that the power of the SS test is narrowly exceeded by that of the CFF test, the ZC test and the GPF test. However, in this case, the CFF test, the ZC test and the GPF test have sizes over 10\% indicating their bias towards rejecting the null hypothesis. In the case of $ P_0 $ being $ t_{1, \text{SBM}} $ here, the SS test exhibits better performance than all other tests. For $ P_0 $ being $ t_{3, \text{SBM}} $, the power curve of the SS test is very close to that of the best performing test.

\subsubsection*{Results for the contaminated models:}
Since the SS test exhibits superior performance in distributions having heavy tails, we were motivated to investigate its performance in contaminated models.
The contaminated models introduce impurities in the covariance operator of the groups but do not affect the means.
We consider the same choices of $ \boldsymbol{\eta}_1 $, $ \boldsymbol{\eta}_2 $, $ n_1 $, $ n_2 $ and $ n_3 $ as in the uncontaminated models. Under the nominal level 5\%, we estimate the sizes and the powers of the tests and present them in \autoref{table1impure} and \autoref{fig:sim2}, respectively.

\begin{table}[h]
\caption{Estimated sizes of the SS test, CFF test, ZC test, F-max test, GPF test, F-type test, HR test, CAFB test and GS test in the Gaussian and $ t $ processes under covariance impurity (nominal level 5\%).}
\begin{tabular} {lcccccccccc}  
\hline
$ P_0 $	& $ ( n_1, n_2, n_3 ) $	& SS & CFF & ZC & F-max & GPF & F-type & HR & CAFB & GS \\ \hline 
SBM with impurity	& (20, 20, 20) & 0.046 & 0.047 & 0.046 & 0.045 & 0.046 & 0.043 & 0.049 & 0.027 & 0.045 \\
SBM with impurity	& (4, 4, 4) & 0.058 & 0.034 & 0.056 & 0.012 & 0.026 & 0.000 & 0.198 & 0.032 & 0.037 \\ 
$ t_{1, \text{SBM}} $ with impurity	& (20, 20, 20) & 0.053 & 0.019 & 0.018 & 0.017 & 0.019 & 0.016 & 0.023 & 0.039 & 0.057 \\
$ t_{1, \text{SBM}} $ with impurity	& (4, 4, 4) & 0.054 & 0.024 & 0.047 & 0.008 & 0.030 & 0.000 & 0.172 & 0.030 & 0.045 \\ 
$ t_{3, \text{SBM}} $ with impurity	& (20, 20, 20) & 0.054 & 0.039 & 0.039 & 0.032 & 0.041 & 0.038 & 0.052 & 0.028 & 0.048 \\
$ t_{3, \text{SBM}} $ with impurity	& (4, 4, 4) & 0.043 & 0.031 & 0.042 & 0.014 & 0.038 & 0.000 & 0.196 & 0.027 & 0.038 \\ 
\hline
\end{tabular}
\label{table1impure}
\end{table}
\begin{figure}[h]
	\centering
	\includegraphics[width=1\linewidth]{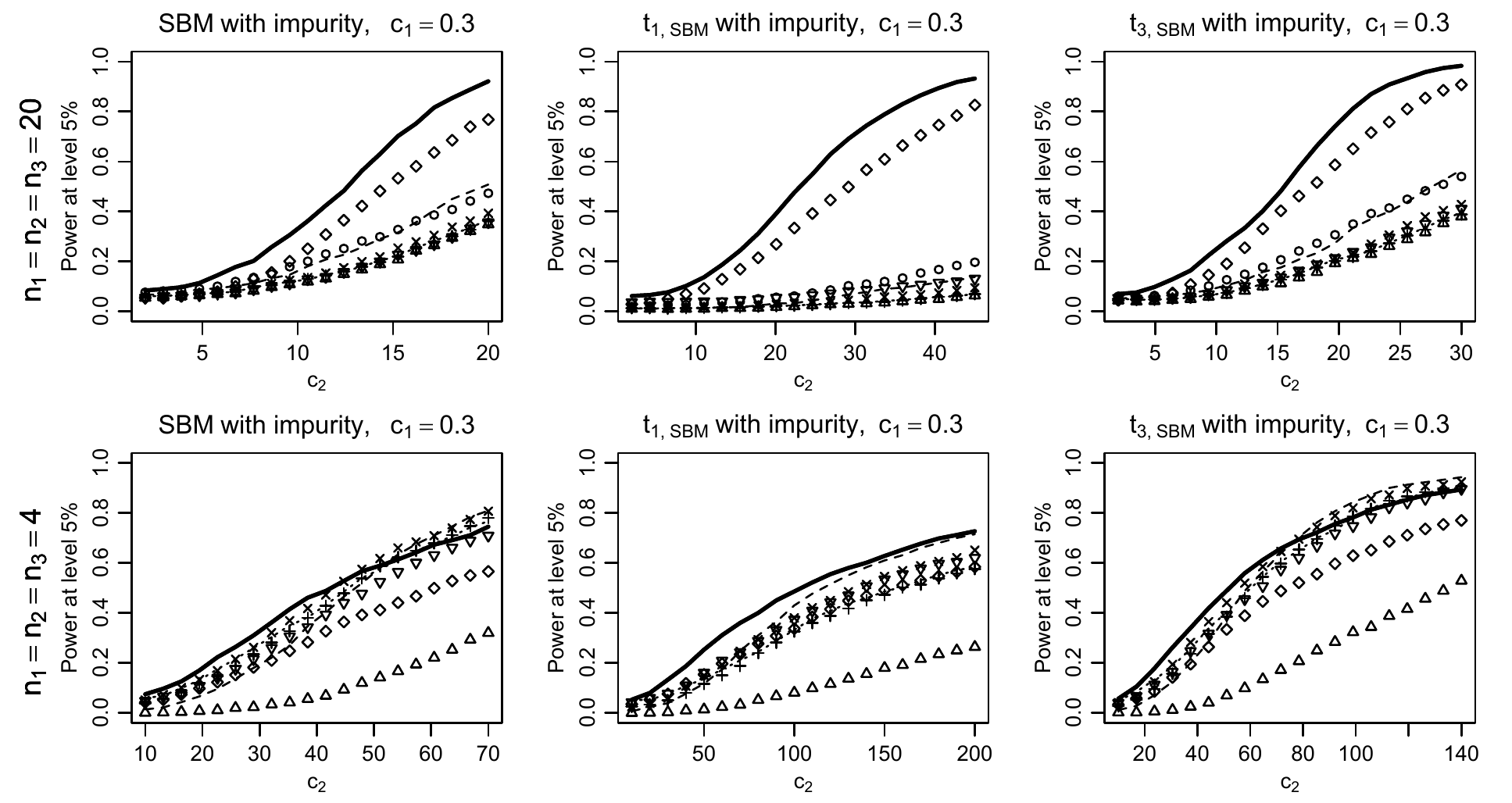}
	\caption{SS test ($ \boldsymbol{-} $), CFF test (++), ZC test ($ \cdots $), F-max test (--{ }--), GPF test ($ \times \times $), F-type test ($ \triangle $), HR test ($ \circ $), CAFB test ($ \diamond $) and GS test ($ \triangledown $) under nominal level 5\% in SBM and associated $ t $ processes with impurities: $ 1^\text{st} $ row: $ n_1 = n_2 = n_3 = 20 $, $ 2^\text{nd} $ row: $ n_1 = n_2 = n_3 = 4 $.}
	\label{fig:sim2}
\end{figure}
In \autoref{table1impure}, we note that the estimated sizes of the SS test is always close to the nominal level 5\%. The estimated sizes of the GS test is close to the nominal level most of the times. The sizes for the HR test are significantly high for small group sizes, i.e., $ n_1 = n_2 = n_3 = 4 $, and hence it is omitted in the power comparison for $ n_1 = n_2 = n_3 = 4 $ in \autoref{fig:sim2}. The sizes of the other tests are often considerably lower than the nominal level. In \autoref{fig:sim2}, we notice that the power of the SS test is nearly always higher than the powers of all other tests.
In comparison to the cases presented in \autoref{fig:sim1}, the difference between the performances of the SS test and other tests increases for SBM and $ t_{3, \text{SBM}} $ and $ n_1 = n_2 = n_3 = 20 $.

\subsubsection*{Results for the skewed distributions:}
We consider three skewed distributions as described before. The third case, which is the squared $ t $ process given by $ \mathbf{G}_3(t) = ( \mathbf{T}(t) )^2 $, $ \mathbf{T}(t) $ being the $ t_{3, \text{SBM}} $ process, can also be considered as a skewed and heavy-tailed process. However, the geometric Brownian motion also has heavy tails. We denote the case of $ P_0 $ being the geometric Brownian motion as GBM, the case of the squared Brownian motion as $ ( \text{SBM} )^2 $ and the case of the squared $ t_{3, \text{SBM}} $ process as $ ( t_{3, \text{SBM}} )^2 $. The estimated sizes under the nominal level 5\% are presented in \autoref{table1skew}. The estimated powers under the nominal level 5\% are depicted in \autoref{fig:sim3}.
\begin{table}[h]
\caption{Estimated sizes of the SS test, CFF test, ZC test, F-max test, GPF test, F-type test, HR test, CAFB test and GS test in the skewed distributions (nominal level 5\%).}
\begin{tabular} {lcccccccccc}  
\hline
$ P_0 $	& $ ( n_1, n_2, n_3 ) $	& SS & CFF & ZC & F-max & GPF & F-type & HR & CAFB & GS \\ \hline 
GBM	& (20, 20, 20) & 0.051 & 0.043 & 0.044 & 0.051 & 0.048 & 0.038 & 0.093 & 0.029 & 0.046 \\
GBM	& (4, 4, 4) & 0.047 & 0.076 & 0.092 & 0.048 & 0.082 & 0.007 & 0.281 & 0.035 & 0.054 \\ 
$ ( \text{SBM} )^2 $	& (20, 20, 20) & 0.055 & 0.035 & 0.034 & 0.041 & 0.037 & 0.031 & 0.106 & 0.034 & 0.035 \\
$ ( \text{SBM} )^2 $	& (4, 4, 4) & 0.047 & 0.061 & 0.070 & 0.049 & 0.068 & 0.002 & 0.361 & 0.029 & 0.044 \\ 
$ ( t_{3, SBM} )^2 $	& (20, 20, 20) & 0.051 & 0.020 & 0.022 & 0.031 & 0.030 & 0.022 & 0.049 & 0.030 & 0.049 \\
$ ( t_{3, SBM} )^2 $	& (4, 4, 4) & 0.044 & 0.036 & 0.041 & 0.037 & 0.041 & 0.000 & 0.285 & 0.026 & 0.043 \\ 
\hline
\end{tabular}
\label{table1skew}
\end{table}
\begin{figure}[h]
	\centering
	\includegraphics[width=1\linewidth]{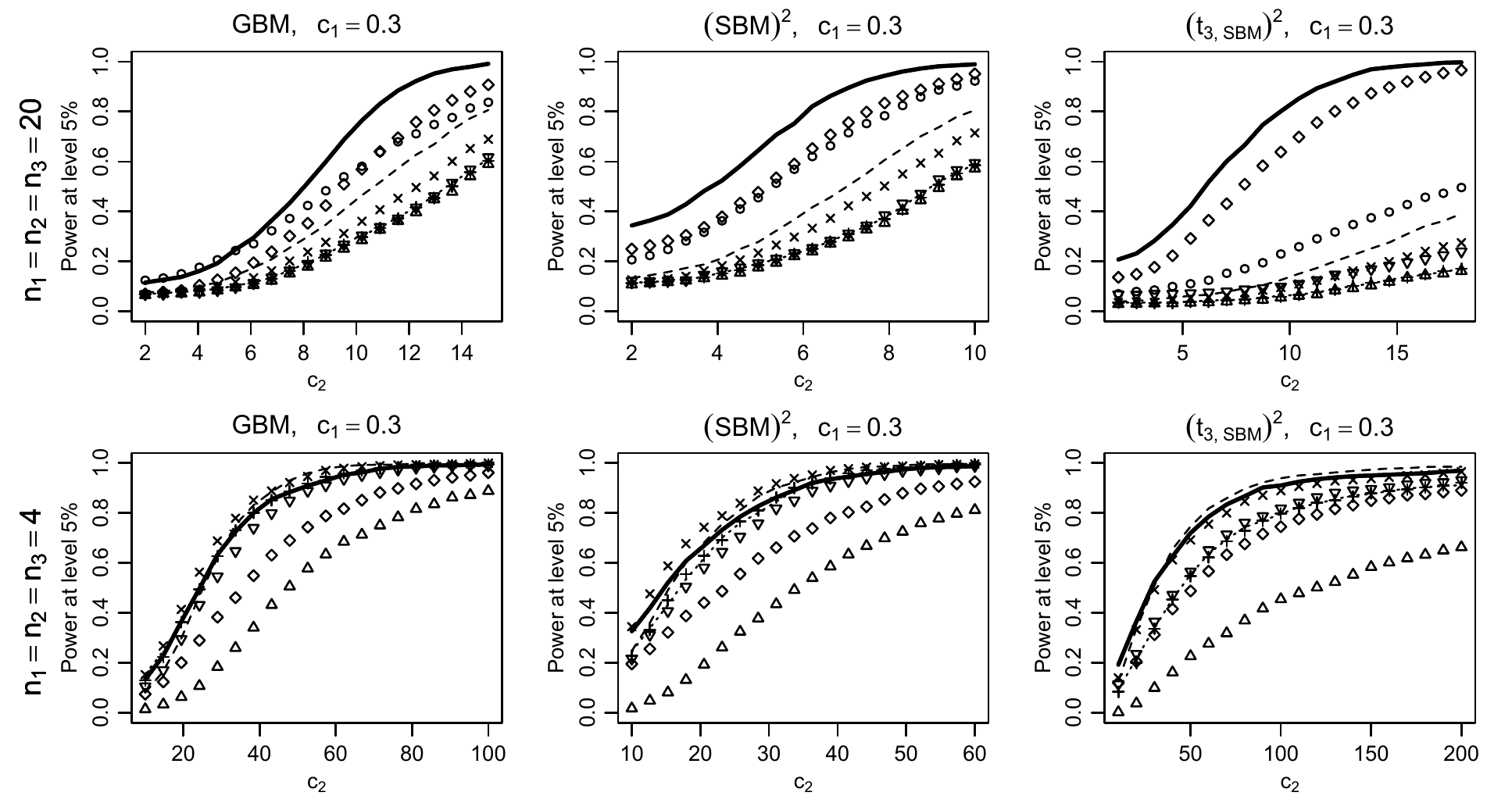}
	\caption{SS test ($ \boldsymbol{-} $), CFF test (++), ZC test ($ \cdots $), F-max test (--{ }--), GPF test ($ \times \times $), F-type test ($ \triangle $), HR test ($ \circ $), CAFB test ($ \diamond $) and GS test ($ \triangledown $) under nominal level 5\% in the skewed models: $ 1^\text{st} $ row: $ n_1 = n_2 = n_3 = 20 $, $ 2^\text{nd} $ row: $ n_1 = n_2 = n_3 = 4 $.}
	\label{fig:sim3}
\end{figure}

In \autoref{table1skew}, the estimated sizes of the SS test is again found to be always very close to the nominal level 5\%. The estimated sizes of the GS test is also close to the nominal level. The estimated sizes for the HR test are significantly high for small group sizes, i.e., $ n_1 = n_2 = n_3 = 4 $, and so we omit it in the power comparison for $ n_1 = n_2 = n_3 = 4 $ in \autoref{fig:sim3}. In \autoref{fig:sim3}, it can be seen that the power of the SS test is almost uniformly higher than the powers of all other tests for higher sample sizes, i.e., $ n_1 = n_2 = n_3 = 20 $. But for the lower sample sizes $ n_1 = n_2 = n_3 = 4 $ also, the power of the SS test is either the highest or very close to the highest among the tests.

\subsubsection*{Comparison of the asymptotic and permutation implementations:}
We compare the sizes of the asymptotic and the permutation implementations of the SS test for 3 groups with $ n_1 = n_2 = n_3 = 5, 10, 20 \text{ and } 40 $, and compare their powers with $ n_1 = n_2 = n_3 = 10 $ for the distributions considered previously in this subsection. The number of random permutations taken is 1000, and the sizes and the powers are estimated based on 1000 independent replications with nominal level 5\%.
All the results pertaining to this investigation are presented in \autoref{asym_perm}.
We notice there that the powers of the two implementations are very close to each other. However, when $ n_1 = n_2 = n_3 = 5 $, the size of the asymptotic implementation is sometimes somewhat smaller than the nominal level. The estimated sizes for the permutation implementation are always close to the nominal level.
These findings indicate that the permutation implementation has power similar to the asymptotic implementation, while the asymptotic implementation may be slightly more conservative when the group sizes are small. This points to the utility of the permutation implementation for small group sizes. We suggest using the permutation implementation when the minimum group size is less than or equal to 5 and the total sample size is less than or equal to 20.

The permutation implementation is found to be computationally expensive compared to the asymptotic implementation. We found that when $ n_1 = n_2 = n_3 = 5 $, the computation of a single p-value using the permutation implementation takes on average nearly twice the time taken by the asymptotic implementation. When $ n_1 = n_2 = n_3 = 10, 20 \text{ and } 40 $, the factors are on average around 7, 22 and 60, respectively.

The code for the SS test is written in R and the numerically heavy portions are implemented using the Rcpp package.

For univariate, multivariate and high-dimensional data, several authors investigated and demonstrated the utility of Studentizing the bootstrap or the permutation test statistic (see, e.g., \cite{janssen2003bootstrap}, \cite{delaigle2011robustness} and \cite{pauly2015asymptotic}). In \cite{delaigle2011robustness}, it was demonstrated that Studentization leads to better performance of the test particularly when the underlying distribution is non-Gaussian or heavy-tailed. However, in the setup considered in this paper, Studentization of the permutation or the bootstrap procedures would significantly increase the computational complexity of the testing method due to the involvement of the covariance operator, the calculation of which in our case is found to be quite expensive computationally. Further, the performance of our testing procedures is found to be considerably less affected by the non-Gaussian tails compared to other tests in the literature. We investigated the performance of the Studentized test statistic in all the probability models described in this subsection, and found that despite taking considerably more time to yield output, the Studentized test shows no significant gain in estimated power over the usual non-Studentized test. In fact, we found that the estimated powers and sizes of the non-Studentized test are very close to the those of the Studentized test. Based on these findings, we do not recommend using Studentization with our procedures.

\section{Analysis of real data}
\label{realdata}
We compare the performance of the tests in three real datasets: the Canadian Temperature data, the Tecator data and the Orthosis data.

The Canadian Temperature dataset, analyzed in \cite{zhang2013analysis}, contains the daily temperature records over 365 days at 35 Canadian weather stations situated in three regions: 15 in Eastern Canada, 15 in Western Canada, and 5 in Northern Canada.
One may be interested to know whether there is a statistically significant effect of location on the temperature curves over theses three regions. This leads to an ANOVA problem with three groups with a total of 35 observations, where each observation is a daily temperature curve, and the three groups are the three regions.
Since the sizes of the groups are small, we apply the permutation implementation of the SS test, and the other tests are implemented based on the advices of their authors.

The Tecator dataset, available in the `caret' package in R, contains the percentage values of moisture, fat and protein contents of 215 meat samples along with their absorbance spectra in the wavelength range 850--1050 nm. Moisture, fat and protein contents were measured by analytical chemistry, while a Tecator Infratec Food and Feed Analyzer was used to record the absorbance spectrum.
We divide the observations in three groups by the value of the protein content such that the 33\% of the observations with the lowest protein content are in the first group, the 33\% with the highest protein content are in the third group and the rest in the second group.
The group sizes are found to be 72, 69 and 74. We are interested to know whether their is a significant effect of the protein content over the absorption spectra curves, and we get an ANOVA problem with three groups.

The Orthosis dataset, analyzed in \cite{zhang2014one}, contains orthotic measurements on a volunteer under four experimental conditions. For each condition, 10 observations were recorded, and each observation depicts the moments of force over a fixed time duration at the knee of the volunteer. Each observation is recorded at a fixed grid of length 256, and we consider them as functional observations.
We are interested in whether the experimental conditions have same effect, and consequently get an ANOVA problem with 4 groups with 10 observations in each group.
On visual inspection, we found that the observations corresponding to one of the conditions are far away from the other observations, and so we discard that group from our analysis. Hence, we get an ANOVA problem with 3 groups with 10 observations in each group.

The p-values of all the tests in each of the 3 datasets are presented in \autoref{table:realdatasuppvalue}. We notice that all the tests exhibit very small p-values in every dataset.
\begin{table}[h]
\caption{P-values of the tests in the real datasets.
}
\begin{center}
\begin{tabular}{cccccccccc}
	\hline
	Data													& SS	& CFF	& ZC	& F-max		& GPF	& F-type	& HR	& CAFB	& GS	\\\hline
	Canada	& 0	& 0	& 0	& 0	& 0	& 0	& 0	& 0	& 0	\\
	Tecator													& 0		& 0.001	& 0.002	& 0.001		& 0.002	& 0.002		& 0.002	& 0		& 0.002	\\
	Orthosis												& 0		& 0		& 0		& 0			& 0		& 0			& 0		& 0		& 0		\\\hline
\end{tabular}
\end{center}
\label{table:realdatasuppvalue}
\end{table}

Since the p-values of all the tests in all the datasets are found to be statistically significant, we carry out a size and power study of the tests based on subsampling in each of the datasets.
To estimate the sizes of the tests, we divide each of the $ K $ groups randomly in $ K $ subgroups, carry out the tests on those subgroups and then take the mean of the rejection rates within the groups as an estimate of the size in this subsampling process. If a group has too few observations to draw $ K $ distinct random subgroups from it, we omit it from the size study.
To estimate the powers of the tests in a dataset, we draw the $ K $ random subgroups from the $ K $ groups separately, carry out the tests.
The nominal level is 5\% everywhere.
The sizes and the powers of the tests are estimated based on 1000 independent replications of the above procedure.
Similar procedures of size and power study in real datasets were described in \cite{chakraborty2015wilcoxon} and \cite{gretton2012kernel}.
When the number of observations in the subsamples are small, we implement the SS test based on the permutation procedure with 1000 random permutations.

In the Canadian Temperature data, since the third group has only 5 observations, it is omitted from the size study in subsampling but included in the power study. The size of each of the 3 subgroups in this dataset is fixed at 4. Due to the small size of the subgroups, the SS test is implemented based on the permutation procedures.
\begin{table}
\caption{Estimated sizes and powers of the tests in the size and power study in real datasets under nominal level 5\%. In Canadian Temperature data, $ n_1 = n_2 = n_3 = 4 $, in Tecator data, $ n_1 = n_2 = n_3 = 23 $, and in Orthosis data, $ n_1 = n_2 = 3 $ and $ n_3 = 4 $.}
\begin{center}
\begin{tabular}{ccccccccccc}
	\hline
	Data & Size/Power	& SS	& CFF	& ZC	& F-max		& GPF	& F-type	& HR	& CAFB	& GS	\\\hline
	Canada	& Size	& 0.055	& 0.069	& 0.071	& 0.038	& 0.073	& 0.009	& 0.303	& 0.032	& 0.051	\\
	Canada	& Power	& 0.997	& 0.993	& 0.993	& 0.994		& 0.983	& 0.741		& 1	& 0.921	& 0.984	\\
	Tecator	& Size	& 0.053	& 0.047	& 0.049	& 0.049		& 0.048	& 0.042		& 0.067	& 0.033	& 0.041	\\
	Tecator	& Power	& 0.501	& 0.339	& 0.347	& 0.498		& 0.326	& 0.319		& 0.359	& 0.476	& 0.322	\\
	Orthosis & Size	& 0.041	& 0.028	& 0.094	& 0.020		& 0.104	& 0			& 0.138	& 0.034	& 0.050	\\
	Orthosis & Power	& 0.535	& 0.448	& 0.613	& 0.076		& 0.678	& 0.006		& 0.573	& 0.200	& 0.539	\\\hline
\end{tabular}
\end{center}
\label{table:realdatasizepower}
\end{table}
In \autoref{table:realdatasizepower}, we present the estimated sizes and the estimated powers of the tests in the size and power study. The nominal level is 5\% everywhere.
The SS test is implemented based on the permutation procedure in those two datasets. In the Tecator dataset, the tests are implemented using their usual asymptotic procedure.

In the power study of the tests in the Canadian Temperature data, we found that the estimated powers of all the tests except the F-type test and the CAFB test are very close to 1. In the corresponding size study, we found that the estimated size of the HR test is very far from the nominal level. On the other hand, the estimated size of the F-type test is rather low compared to the nominal level.
In the size study in the Tecator data, we found that all the tests have their estimated sizes close to the nominal level. In the corresponding power study, the SS test demonstrates the highest power, though the F-max test and the CAFB test have powers close to that of the SS test.
In the power study in the Orthosis data, we see that the ZC test, the GPF test, the HR test and the GS test exhibit higher estimated powers than the SS test. However, the estimated sizes of the ZC test, the GPF test and the HR test in the corresponding size study are found to be significantly higher than the nominal level, which indicates that their powers may be overestimated. The GS test has estimated size very close to the nominal level and slightly higher power than the SS test in this dataset.

In the case of the Tecator data, the second derivatives of the curves may also be considered as observations and a similar size and power study may be carried out. However, when we carried out this study, we found that all the p-values of the tests are 0, and the all the powers to be 1.

\section{Concluding remarks}
In this paper, we have developed a test for multisample comparison using spatial signs. The sample is assumed to consist of Hilbert space valued random elements. We described three implementations for this test: one is based on the asymptotic distribution of the test statistic, one is based on a bootstrap procedure and the other is based on a permutation procedure. The asymptotic implementation is faster, and it is recommended for moderate to large sample sizes. But for small sample sizes, its size tends to be lower than the nominal level. We observed the bootstrap and the permutation procedures to behave almost identically, and so choose the permutation procedure for the implementation in suitable scenarios. The permutation procedure, though computationally costly than the asymptotic procedure, exhibits satisfactory performance under the null hypothesis irrespective of the sample size unlike the asymptotic procedure. Further, its power is very similar to the asymptotic implementation. So, the permutation procedure is recommended for small sample sizes.

In our simulation study, we found that the test exhibits substantially superior performance than other mean based ANOVA for functional data in the literature when the underlying distribution is non-Gaussian with heavy tails or skewed.
In the Gaussian models considered by us, we found that our test beats most of the other mean based tests, if not all, and the performance of our test is almost always very close to the best among the mean based tests.
Our test also exhibits better performance in contaminated models than the other tests.

The above discussion implies the advantage of using our test in a general situation, where it is unknown whether the data are truly Gaussian and without any contamination.

\begin{appendix}

\section{Description of the other tests and asymptotic powers}\label{asymptoticpowerexpression}
The test statistic of the CFF test (\cite{cuevas2004anova}) is
$ CFF_n = \sum_{k < l} n_k \left\| \bar{\mathbf{X}}_{k \cdot} - \bar{\mathbf{X}}_{l \cdot} \right\|^2 $.
We have used the implementation of the CFF test suggested in \cite{cuevas2004anova} for homoscedastic samples because all the simulation models we consider are homoscedastic and the real data we consider do not exhibit heteroscedasticity, and we take 1000 bootstrap samples for its implementation.

The test statistic of the ZC test (\cite{zhang2007statistical}) is
$ ZC_n = \sum_{k=1}^{K} n_k \left\| \bar{\mathbf{X}}_{k \cdot} - \bar{\mathbf{X}}_{\cdot \cdot} \right\|^2 $,
and the test statistic of the F-type test (\cite{shen2004f,zhang2011statistical}) is
\begin{align*}
F_n = \frac{( K - 1 )^{-1} \sum_{k=1}^{K} n_k \left\| \bar{\mathbf{X}}_{k \cdot} - \bar{\mathbf{X}}_{\cdot \cdot} \right\|^2}{( n - K )^{-1} \sum_{k=1}^{K} \sum_{i = 1}^{n_k} \left\| \mathbf{X}_{k i} - \bar{\mathbf{X}}_{k \cdot} \right\|^2} .
\end{align*}
The parameter estimation for implementing the ZC test and the F-type test for moderate or large sample sizes was done following the so-called `naive method' (see subsection 2.1 in \cite{gorecki2015comparison}) instead of the `bias-reduced method', because we did not see any significant difference in performances for those two implementations, and the `naive method' involves less computation. For small samples sizes, these two tests are implemented based on 1000 bootstrap samples using the R package `fdANOVA' (see \cite{gorecki2019fdanova}).

The test statistic of the GPF test (\cite{zhang2014one}) is
\begin{align*}
FG_n = \int_\mathcal{T} \frac{( K - 1 )^{-1} \sum_{k=1}^{K} n_k \left( \bar{\mathbf{X}}_{k \cdot}( t ) - \bar{\mathbf{X}}_{\cdot \cdot}( t ) \right)^2}{( n - K )^{-1} \sum_{k=1}^{K} \sum_{i = 1}^{n_k} \left( \mathbf{X}_{k i}( t ) - \bar{\mathbf{X}}_{k \cdot}( t ) \right)^2} \mathrm{d}t ,
\end{align*}
and the test statistic of the F-max test (\cite{zhang2018new}) is
\begin{align*}
FM_n = \sup_{t \in [ a, b ]} \frac{( K - 1 )^{-1} \sum_{k=1}^{K} n_k \left( \bar{\mathbf{X}}_{k \cdot}( t ) - \bar{\mathbf{X}}_{\cdot \cdot}( t ) \right)^2}{( n - K )^{-1} \sum_{k=1}^{K} \sum_{i = 1}^{n_k} \left( \mathbf{X}_{k i}( t ) - \bar{\mathbf{X}}_{k \cdot}( t ) \right)^2} .
\end{align*}
The GPF test and the F-max test are implemented according to the advice of the respective authors, and the number of bootstrap samples used for the implementation of the F-max test is 1000.

The GS test (\cite{gorecki2015comparison}) is a permutation test of ANOVA based on the basis function representation of the functional observations, where the test statistic is similar to that of the F-type test.
The GS test is implemented based on the function in the R package `fdANOVA' using the Fourier basis and the BIC criterion and 1000 permutations.

The CAFB test (\cite{cuesta2010simple}) is based on random projections of the sample. To implement the test, we generate 30 random directions following standard Brownian motion on which the sample observations are projected. Then, we carry out the univariate ANOVA procedure described in section 3 in \cite{brunner1997box} on the projected observations as suggested in \cite{cuesta2010simple}. The test of ANOVA for the functional observations is then decided based on the p-values of the univariate ANOVA. The reader is referred to \cite{cuesta2010simple} for further details on the implementation of this test, including the p-value computation.

We need to define a few quantities to describe the test statistic of the HR test (\cite{horvath2015introduction}), which is based on $ d $ principal component scores of the sample in the following way.
Define $ \boldsymbol{\Omega}_{n k} : [ a, b ] \times [ a, b ] \to \mathbb{R} $ as
\begin{align*}
\boldsymbol{\Omega}_{n k}\left( s , t \right)
= n_k^{-1} \sum_{i = 1}^{n_k} \left( \mathbf{X}_{k i}( s ) - \bar{\mathbf{X}}_{k \cdot}( s ) \right) \left( \mathbf{X}_{k i}( t ) - \bar{\mathbf{X}}_{k \cdot}( t ) \right)
\end{align*}
for $ k = 1, \ldots, K $.
Let $ \boldsymbol{\Omega}_n = n^{-1} \sum_{k = 1}^{K} n_k \boldsymbol{\Omega}_{n k} $ and $ \boldsymbol{\phi}_{n 1}, \ldots, \boldsymbol{\phi}_{n d} $ be the eigenfunctions corresponding to the $ d $ largest eigenvalues of $ \boldsymbol{\Omega}_n $.
Define
\begin{align*}
& \bar{\boldsymbol{\xi}}_{k i} = \left( \left\langle \mathbf{X}_{k i}, \boldsymbol{\phi}_{n 1} \right\rangle, \ldots, \left\langle \mathbf{X}_{k i}, \boldsymbol{\phi}_{n d} \right\rangle \right)^t ,\quad
\bar{\boldsymbol{\xi}}_{k \cdot} = n_k^{-1} \sum_{i = 1}^{n_k} \bar{\boldsymbol{\xi}}_{k i} ,\\
& \boldsymbol{\Psi}_{n k} = n_k^{-1} \sum_{i = 1}^{n_k} \left( \bar{\boldsymbol{\xi}}_{k i} - \bar{\boldsymbol{\xi}}_{k \cdot} \right) \left( \bar{\boldsymbol{\xi}}_{k i} - \bar{\boldsymbol{\xi}}_{k \cdot} \right)^t ,\quad
\bar{\boldsymbol{\xi}}_{\cdot \cdot} = \allowbreak \left( \sum_{k = 1}^{K} n_k \boldsymbol{\Psi}_{n k}^{-1} \right)^{-1} \sum_{k = 1}^{K} n_k \boldsymbol{\Psi}_{n k}^{-1} \bar{\boldsymbol{\xi}}_{k \cdot} .
\end{align*}
The test statistic of the HR test is defined as
\begin{align*}
HR_n = \sum_{k = 1}^{K} n_k \left( \bar{\boldsymbol{\xi}}_{k \cdot} - \bar{\boldsymbol{\xi}}_{\cdot \cdot} \right)^t \boldsymbol{\Psi}_{n k}^{-1} \left( \bar{\boldsymbol{\xi}}_{k \cdot} - \bar{\boldsymbol{\xi}}_{\cdot \cdot} \right) .
\end{align*}
In \cite{horvath2015introduction}, there was no clear direction on the number of principal components to be used for implementing the HR test. So, we have taken the number so that the chosen principal components explain at least 90\% of the variance present in the sample. In other words, we choose the number $ d $ such that the sum of the $ d $ largest eigenvalues of the sample covariance operator $ \boldsymbol{\Omega}_n $ is at least 90\% of the sum of all its eigenvalues.

We derive here the asymptotic powers of the CFF test, the ZC test and the HR test under the class of shrinking alternatives described in \eqref{asympower1}, where $ \mathcal{H} = L_2[ a, b ] $.

For $ \mathbf{u} = \left( \mathbf{u}_1, \ldots, \mathbf{u}_p \right)^t $, $ \mathbf{v} = \left( \mathbf{v}_1, \ldots, \mathbf{v}_p \right)^t $, where $ \mathbf{u}_1, \ldots, \mathbf{u}_p, \mathbf{v}_1, \ldots, \mathbf{v}_p \in \mathcal{H} $, define $ \mathbf{u}^t \mathbf{v} $ as $ \mathbf{u}^t \mathbf{v} = \sum_{i=1}^{p} \langle \mathbf{u}_i, \mathbf{v}_i \rangle $. Matrix operations between elements in $ \mathcal{H} $ or scalars are defined analogously.
For the result on the HR test, we need to further define the following quantities.
Define $ \boldsymbol{\Omega}_{k}, \boldsymbol{\Omega} : [ a, b ] \times [ a, b ] \to \mathbb{R} $ as
$ \boldsymbol{\Omega}_{k}\left( s , t \right)
= E\left[ \left( \mathbf{X}_{k 1}( s ) - E\left[ \mathbf{X}_{k 1}( s ) \right] \right) \left( \mathbf{X}_{k 1}( t ) - E\left[ \mathbf{X}_{k 1}( t ) \right] \right) \right] $
and
$ \boldsymbol{\Omega} = \sum_{k = 1}^{K} \lambda_k \boldsymbol{\Omega}_{k} $
for $ k = 1, \ldots, K $.
Let $ \boldsymbol{\phi}_{1}, \ldots, \boldsymbol{\phi}_{d} $ be the eigenfunctions corresponding to the $ d $ largest eigenvalues of $ \boldsymbol{\Omega} $.
Define
$ \mathbf{d}_k = \left( \left\langle \boldsymbol{\delta}_k, \boldsymbol{\phi}_{1} \right\rangle, \ldots, \left\langle \boldsymbol{\delta}_k, \boldsymbol{\phi}_{d} \right\rangle \right)^t $,
$ \boldsymbol{\xi}_{k i} = \left( \left\langle \mathbf{X}_{k i}, \boldsymbol{\phi}_{1} \right\rangle, \ldots, \left\langle \mathbf{X}_{k i}, \boldsymbol{\phi}_{d} \right\rangle \right)^t $,
$ \boldsymbol{\Psi}_{k} = E\left[ \left( \boldsymbol{\xi}_{k 1} - E\left[ \boldsymbol{\xi}_{k 1} \right] \right) \left( \boldsymbol{\xi}_{k 1} - E\left[ \boldsymbol{\xi}_{k 1} \right] \right)^t \right] $
and
\begin{align*}
\mathbf{A}_{K d \times K d} = \mathbb{I}_{K d} -
\begin{bmatrix}
\sqrt{\lambda_1} \boldsymbol{\Psi}_{1}^{-1/2}\\
\vdots\\
\sqrt{\lambda_K} \boldsymbol{\Psi}_{K}^{-1/2}
\end{bmatrix}
\left( \sum_{k=1}^{K} \lambda_k \boldsymbol{\Psi}_{k}^{-1} \right)^{-1}
\left[ \sqrt{\lambda_1} \boldsymbol{\Psi}_{1}^{-1/2}, \ldots, \sqrt{\lambda_K} \boldsymbol{\Psi}_{K}^{-1/2} \right] .
\end{align*}
One can verify that $ \mathbf{A}_{K d \times K d} $ is an idempotent matrix of degree $ d ( K - 1 ) $. Let $ \mathbf{V}_{K d \times K d} $ be such that
\begin{align*}
\mathbf{A}_{K d \times K d} = \mathbf{V}_{K d \times K d}
\begin{bmatrix}
\mathbb{I}_{d ( K - 1 )} 			& \mathbf{0}_{d ( K - 1 ) \times d} \\
\mathbf{0}_{d \times d ( K - 1 )} 	& \mathbf{0}_{d \times d}
\end{bmatrix}
\mathbf{V}_{K d \times K d}^t .
\end{align*}
Define
$ \mathbf{M}_{K d \times 1} = \mathbf{V}_{K d}^t
\left[ \sqrt{\lambda_1} \boldsymbol{\Psi}_{1}^{-1/2} \mathbf{d}_1^t, \ldots, \sqrt{\lambda_K} \boldsymbol{\Psi}_{K}^{-1/2} \mathbf{d}_K^t \right]^t $,
and let $ \widetilde{\mathbf{M}}_{d ( K - 1 ) \times 1} $ be the truncated column vector obtained by dropping the bottom $ d $ elements from $ \mathbf{M}_{K d \times 1} $.
We have the following theorem on the asymptotic powers of the three tests.
\begin{theorem} \label{thm:5}
Consider the class of shrinking alternatives described in \eqref{asympower1}, and assume $ E[ \| \mathbf{X} \|^2 ] < \infty $, where $ \mathbf{X} $ has distribution $ P_0 $, and that $ n^{-1} n_k \to \lambda_k \in ( 0, 1 ) $ for all $ k $ as $ n \to \infty $.
Let $ \mathbf{Y}_1, \ldots, \mathbf{Y}_K $ be independent zero-mean Gaussian random elements with the covariance operator of $ P_0 $ denoted as $ \boldsymbol{\Gamma} $.
For $ \alpha \in \left( 0, 1 \right) $, let $ CFF( \alpha ) $, $ ZC( \alpha ) $ and $ HR( \alpha ) $ denote the asymptotic powers at level $ \alpha $  of the CFF test, the ZC test, and the HR test, respectively, under \eqref{asympower1} with $ n \to \infty $.
Then, we have the following.
\begin{enumerate}[label = (\alph*), ref = (\alph*)]
\item 
$ CFF( \alpha )
= P\left[ \sum_{k < l} \lambda_k \left\| \lambda_k \boldsymbol{\delta}_k + \mathbf{Y}_k - \lambda_l^{-1/2} \lambda_k^{1/2} \left( \lambda_l \boldsymbol{\delta}_l + \mathbf{Y}_l \right) \right\|^2 
\ge Q_1\left( 1 - \alpha \right) \right] $,
where $ Q_1\left( 1 - \alpha \right) $ is the $ \left( 1 - \alpha \right) $-quantile of the distribution of the random variable
$ \sum_{k < l} \lambda_k \left\| \mathbf{Y}_k - \lambda_l^{-1/2} \lambda_k^{1/2} \mathbf{Y}_l \right\|^2 $.

\item
Let $ \gamma_1, \gamma_2, \ldots $ be the sequence of eigenvalues of $ \boldsymbol{\Gamma} $, and $ \boldsymbol{\chi}_1^2, \boldsymbol{\chi}_2^2, \ldots $ be an infinite sequence of independent $ \chi_{K - 1}^2 $ random variables.
Further, let $ \mathbf{I}_K $ denote a $ K \times K $ matrix of identity operators, $ \mathbf{p}_0 = \left( \sqrt{\lambda_1}, \ldots, \sqrt{\lambda_K} \right)^t $, and $ \mathbf{Z} = \left( \sqrt{\lambda_1} \boldsymbol{\delta}_1 + \mathbf{Y}_1, \ldots, \sqrt{\lambda_K} \boldsymbol{\delta}_K + \mathbf{Y}_K \right)^t $.
Then,
$ ZC( \alpha )
= P[ \mathbf{Z}^t \left( \mathbf{I}_K - \mathbf{p}_0 \mathbf{p}_0^t \right) \mathbf{Z}
\ge Q_2\left( 1 - \alpha \right) ] $,
where $ Q_2\left( 1 - \alpha \right) $ is the $ \left( 1 - \alpha \right) $-quantile of the distribution of the random variable
$ \sum_{i=1}^{\infty} \gamma_i \boldsymbol{\chi}_i^2 $.

\item
Let $ \boldsymbol{\chi} $ be a random variable following a non-central chi-square distribution with $ \left( \left( K - 1 \right) d \right) $ degrees of freedom and noncentrality parameter $ \big\| \widetilde{\mathbf{M}}_{d ( K - 1 ) \times 1} \big\|^2 $.
Then,
$ HR( \alpha )
= P\left[ \boldsymbol{\chi}
\ge Q_3\left( 1 - \alpha \right) \right] $,
where $ Q_3\left( 1 - \alpha \right) $ is the $ \left( 1 - \alpha \right) $-quantile of the central chi-square distribution with $ \left( \left( K - 1 \right) d \right) $ degrees of freedom.
\end{enumerate}
\end{theorem}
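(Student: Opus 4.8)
The plan is to handle all three parts with a single engine---a functional central limit theorem for the group means followed by the continuous mapping theorem---and to read off the critical value in each case from the corresponding null limit. Under the shrinking alternatives \eqref{asympower1} write $\mathbf{X}_{ki} = n^{-1/2}\boldsymbol{\delta}_k + \boldsymbol{\epsilon}_{ki}$ with the $\boldsymbol{\epsilon}_{ki}$ i.i.d.\ from $P_0$. Since $E[\|\mathbf{X}\|^2] < \infty$, the CLT for i.i.d.\ random elements in a separable Hilbert space gives $\sqrt{n_k}\,\bar{\boldsymbol{\epsilon}}_{k\cdot} \stackrel{w}{\to} \mathbf{Y}_k$ with $\mathbf{Y}_k$ zero-mean Gaussian of covariance $\boldsymbol{\Gamma}$; independence of the $K$ groups makes this joint, and $\sqrt{n_k/n}\,\boldsymbol{\delta}_k \to \sqrt{\lambda_k}\,\boldsymbol{\delta}_k$ yields
\[
\bigl(\sqrt{n_1}\,\bar{\mathbf{X}}_{1\cdot}, \ldots, \sqrt{n_K}\,\bar{\mathbf{X}}_{K\cdot}\bigr) \stackrel{w}{\longrightarrow} \bigl(\mathbf{Z}_1, \ldots, \mathbf{Z}_K\bigr), \qquad \mathbf{Z}_k := \sqrt{\lambda_k}\,\boldsymbol{\delta}_k + \mathbf{Y}_k,
\]
i.e.\ exactly the vector $\mathbf{Z}$ of part (b). Each statistic will be expressed as a continuous functional of this limit (after disposing of nuisance estimates), so its alternative limit follows by the continuous mapping theorem, and its asymptotic power equals $P[\,\text{alternative limit}\ge Q(1-\alpha)\,]$, where $Q(1-\alpha)$ is the $(1-\alpha)$-quantile of the same functional at $\boldsymbol{\delta}=\mathbf{0}$; the convergence of the actually used critical value to this quantile comes from the consistency of the bootstrap calibration employed for CFF and ZC (and, for HR, from the explicit $\chi^2$ calibration).

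For part (a) I would write $CFF_n = \sum_{k<l}\|\sqrt{n_k}(\bar{\mathbf{X}}_{k\cdot}-\bar{\mathbf{X}}_{l\cdot})\|^2$ and use $\sqrt{n_k/n_l}\to(\lambda_k/\lambda_l)^{1/2}$, so that $\sqrt{n_k}(\bar{\mathbf{X}}_{k\cdot}-\bar{\mathbf{X}}_{l\cdot})\stackrel{w}{\to}\mathbf{Z}_k-(\lambda_k/\lambda_l)^{1/2}\mathbf{Z}_l$; the continuous mapping theorem applied to the jointly continuous quadratic map then produces the limiting random variable displayed in the statement, and evaluating at $\boldsymbol{\delta}=\mathbf{0}$ gives the law whose $(1-\alpha)$-quantile is $Q_1(1-\alpha)$. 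For part (b), $ZC_n = \sum_k\|\sqrt{n_k}(\bar{\mathbf{X}}_{k\cdot}-\bar{\mathbf{X}}_{\cdot\cdot})\|^2$, and writing $\bar{\mathbf{X}}_{\cdot\cdot}$ in terms of the $\sqrt{n_k}\,\bar{\mathbf{X}}_{k\cdot}$ shows that the centering converges to the action of the self-adjoint idempotent $\mathbf{I}_K-\mathbf{p}_0\mathbf{p}_0^t$ (note $\|\mathbf{p}_0\|^2=\sum_k\lambda_k=1$) on $\mathbf{Z}$, whence $ZC_n\stackrel{w}{\to}\mathbf{Z}^t(\mathbf{I}_K-\mathbf{p}_0\mathbf{p}_0^t)\mathbf{Z}$. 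Expanding each $\mathbf{Y}_k=\sum_i\sqrt{\gamma_i}\,g_{ki}\mathbf{e}_i$ in the orthonormal eigenbasis $\{\mathbf{e}_i\}$ of $\boldsymbol{\Gamma}$ diagonalizes the null form into $\sum_i\gamma_i\,\mathbf{g}_i^t(\mathbf{I}_K-\mathbf{p}_0\mathbf{p}_0^t)\mathbf{g}_i$ with $\mathbf{g}_i\sim N(\mathbf{0},\mathbf{I}_K)$; as $\mathbf{I}_K-\mathbf{p}_0\mathbf{p}_0^t$ is a rank-$(K-1)$ orthogonal projection, each summand is an independent $\boldsymbol{\chi}_i^2$ on $K-1$ degrees of freedom, giving the series $\sum_i\gamma_i\boldsymbol{\chi}_i^2$ whose quantile is $Q_2(1-\alpha)$.

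Part (c) carries the real work, because $HR_n$ depends on the data-driven eigenfunctions $\boldsymbol{\phi}_{n1},\ldots,\boldsymbol{\phi}_{nd}$ of the pooled operator $\boldsymbol{\Omega}_n$ and on the estimated score covariances $\boldsymbol{\Psi}_{nk}$, so the continuous mapping argument is not immediate. I would proceed in stages. First, since the group-centered sums defining $\boldsymbol{\Omega}_{nk}$ and $\boldsymbol{\Psi}_{nk}$ are invariant to the within-group shift $\boldsymbol{\mu}_k$, the shrinking means drop out and $\boldsymbol{\Omega}_n\to\boldsymbol{\Omega}$ in operator norm; assuming the eigen-gap $\gamma_d>\gamma_{d+1}$ of the top-$d$ eigenvalues of $\boldsymbol{\Omega}$, perturbation theory for self-adjoint compact operators gives $\boldsymbol{\phi}_{nj}\to\boldsymbol{\phi}_j$ up to sign. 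Second, this yields $\boldsymbol{\Psi}_{nk}\to\boldsymbol{\Psi}_k$ and the joint asymptotic normality $\sqrt{n_k}\,\bar{\boldsymbol{\xi}}_{k\cdot}\stackrel{w}{\to}N_d(\sqrt{\lambda_k}\,\mathbf{d}_k,\boldsymbol{\Psi}_k)$ across $k$, the scores being projections onto $\{\boldsymbol{\phi}_j\}$. Third, substituting into $HR_n$, replacing $\boldsymbol{\Psi}_{nk}^{-1}$ by $\boldsymbol{\Psi}_k^{-1}$ and $\bar{\boldsymbol{\xi}}_{\cdot\cdot}$ by its generalized-least-squares limit via Slutsky, and whitening through $\boldsymbol{\zeta}_k=\sqrt{n_k}\,\boldsymbol{\Psi}_k^{-1/2}\bar{\boldsymbol{\xi}}_{k\cdot}$, the statistic becomes $\|\mathbf{A}_{Kd\times Kd}\boldsymbol{\zeta}\|^2$ with the idempotent $\mathbf{A}_{Kd\times Kd}$ of rank $d(K-1)$ from \autoref{asymptoticpowerexpression}. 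Since the stacked $\boldsymbol{\zeta}$ is asymptotically Gaussian with identity covariance and mean carried by $\mathbf{M}_{Kd\times 1}$, rotating by $\mathbf{V}_{Kd\times Kd}^t$ reduces the form to the sum of the top $d(K-1)$ squared coordinates, which is non-central $\boldsymbol{\chi}$ on $(K-1)d$ degrees of freedom with noncentrality $\big\|\widetilde{\mathbf{M}}_{d(K-1)\times 1}\big\|^2$; at $\boldsymbol{\delta}=\mathbf{0}$ it is central, giving $Q_3(1-\alpha)$.

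I expect the eigenfunction step of part (c) to be the main obstacle: one must control the convergence of the random eigenfunctions $\boldsymbol{\phi}_{nj}$ and the induced scores through perturbation (sin-$\theta$ type) bounds under the eigen-gap assumption, and must verify that the sign and within-eigenspace rotation ambiguity of the empirical basis leaves the final quadratic form unchanged so that the limit is well defined. By contrast, parts (a) and (b) reduce to direct applications of the continuous mapping theorem to fixed continuous functionals of $\mathbf{Z}$, together with the routine diagonalization used to identify the null laws.
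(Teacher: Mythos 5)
Your proposal is correct and, for parts (a) and (b), follows essentially the paper's own route: the Hilbert-space CLT (the paper invokes Theorem 1.1 of \cite{kundu2000central}) applied to the centered group means, then Slutsky and the continuous mapping theorem applied to the same quadratic functionals — the paper encodes your pairwise-difference map as $\mathbf{g}_n$, $\mathbf{g}$ for the CFF statistic and uses exactly your projection form $ZC_n = \mathbf{Z}_n^t\left( \mathbf{I}_K - \mathbf{p}_n \mathbf{p}_n^t / n \right) \mathbf{Z}_n$ for the ZC statistic. The differences are ones of self-containedness, and they favor you. For the ZC null law the paper simply cites \cite{zhang2013analysis} for $ZC_n \stackrel{w}{\longrightarrow} \sum_{i=1}^{\infty} \gamma_i \boldsymbol{\chi}_i^2$ under $\text{H}_0$, whereas you derive it by expanding each $\mathbf{Y}_k$ in the eigenbasis of $\boldsymbol{\Gamma}$ and using that $\mathbf{I}_K - \mathbf{p}_0 \mathbf{p}_0^t$ is a rank-$(K-1)$ orthogonal projection; this is a short, correct argument that removes the external dependence. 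For part (c), the paper is terse at precisely the point you single out as the main obstacle: it asserts, via a strong-approximation statement from \cite{kundu2000central} and ``arguments similar to those used in the proof of Theorem 2.1 in \cite{horvath2015introduction}'', that $HR_n$ equals the quadratic form in the limiting scores up to $o_P(1)$, and then carries out the same whitening, idempotent-matrix and rotation algebra that you describe, arriving at the noncentral chi-square with noncentrality $\big\| \widetilde{\mathbf{M}}_{d ( K - 1 ) \times 1} \big\|^2$. Your plan to supply that missing reduction directly — operator-norm convergence of $\boldsymbol{\Omega}_n$ (using invariance of the group-centered sums to the shifts), an eigen-gap condition $\gamma_d > \gamma_{d+1}$, sin-$\theta$ perturbation bounds for the empirical eigenfunctions, consistency of $\boldsymbol{\Psi}_{nk}$, and a check that the sign and within-eigenspace rotation ambiguity leaves the quadratic form unchanged — is exactly the content of the argument the paper outsources, and your explicit flagging of the eigen-gap hypothesis makes visible an assumption the paper leaves implicit (inherited from the HR test's own conditions). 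One cosmetic remark: the alternative limit you obtain for the CFF statistic, $\sum_{k < l} \left\| \sqrt{\lambda_k} \boldsymbol{\delta}_k + \mathbf{Y}_k - \lambda_l^{-1/2} \lambda_k^{1/2} \left( \sqrt{\lambda_l} \boldsymbol{\delta}_l + \mathbf{Y}_l \right) \right\|^2$, agrees with the paper's proof but not with the displayed statement of the theorem (which carries prefactors $\lambda_k$ and shifts $\lambda_k \boldsymbol{\delta}_k$, $\lambda_l \boldsymbol{\delta}_l$); that discrepancy lies in the paper's statement, not in your argument.
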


\section{Proof of mathematical results} \label{sec:4}
\begin{proof}[Proof of \autoref{thm:1}]
Define $ \tilde{\mathbf{s}}\left( \mathbf{X} , \mathbf{Y} \right)
= \mathbf{s}\left( \mathbf{X} - \mathbf{Y} \right) - E\left[ \mathbf{s}\left( \mathbf{X} - \mathbf{Y} \right) \right] $, where $ \mathbf{X} $ and $ \mathbf{Y} $ are independent random elements.
Define
\begin{align*}
& \mathbf{S}_k
= \frac{1}{n} \sum_{l=1}^{K} n_l \left[ \frac{1}{n_k} \sum_{i_k = 1}^{n_k} E\left[ \tilde{\mathbf{s}}\left( \mathbf{X}_{k i_k} , \mathbf{X}_{l 1} \right) \mid \mathbf{X}_{k i_k} \right]
+ \frac{1}{n_l} \sum_{i_l = 1}^{n_l} E\left[ \tilde{\mathbf{s}}\left( \mathbf{X}_{k 1} , \mathbf{X}_{l i_l} \right) \mid \mathbf{X}_{l i_l} \right] \right] \\
& \text{and }
\mathbf{V}_{k l}
= \frac{1}{n_k n_l} \sum_{i_k = 1}^{n_k} \sum_{i_l = 1}^{n_l} \left[ \tilde{\mathbf{s}}\left( \mathbf{X}_{k i_k} , \mathbf{X}_{l i_l} \right) 
- E\left[ \tilde{\mathbf{s}}\left( \mathbf{X}_{k i_k} , \mathbf{X}_{l 1} \right) \mid \mathbf{X}_{k i_k} \right]
- E\left[ \tilde{\mathbf{s}}\left( \mathbf{X}_{k 1} , \mathbf{X}_{l i_l} \right) \mid \mathbf{X}_{l i_l} \right] \right] ,
\end{align*}
where $ i_k = 1, \cdots, n_k $, $ i_l = 1, \cdots, n_l $ and $ k, l = 1, \cdots, K $.
We have
$ \bar{\mathbf{R}}_k - E\left[ \bar{\mathbf{R}}_k \right]
= \mathbf{S}_k + n^{-1} \sum_{l=1}^{K} n_l \mathbf{V}_{k l} $.
Also, define
$ \mathbf{W}_n = \left( \sqrt{n_1} \mathbf{S}_1, \ldots, \sqrt{n_K} \mathbf{S}_K \right) $,
and
$ \bar{\mathbf{s}}\left( \mathbf{X} , \mathbf{Y} \right)
= \tilde{\mathbf{s}}\left( \mathbf{X} , \mathbf{Y} \right) 
- E\left[ \tilde{\mathbf{s}}\left( \mathbf{X} , \mathbf{Y} \right) \mid \mathbf{X} \right]
- E\left[ \tilde{\mathbf{s}}\left( \mathbf{X} , \mathbf{Y} \right) \mid \mathbf{Y} \right] $.
Note that
\begin{align}
& E\left[ \bar{\mathbf{s}}\left( \mathbf{X} , \mathbf{Y} \right) \mid \mathbf{X} \right]
= E\left[ \bar{\mathbf{s}}\left( \mathbf{X} , \mathbf{Y} \right) \mid \mathbf{Y} \right]
= E\left[ \bar{\mathbf{s}}\left( \mathbf{X} , \mathbf{Y} \right) \right]
= \mathbf{0}
\label{thm1:eqcond}
\\
& \text{and } \left\| \bar{\mathbf{s}}\left( \mathbf{X} , \mathbf{Y} \right) \right\| 
\le 6 .
\label{thm1:eqbound}
\end{align}
From \eqref{thm1:eqcond}, we get that for every case other than $ ( k, i_k, l_1, i_{l_1} ) = ( k, j_k, l_2, j_{l_2} ) $ and $ ( k, i_k, l_1, i_{l_1} ) = ( l_2, j_{l_2}, k, j_k ) $,
\begin{align}
& E\left[ \left\langle \bar{\mathbf{s}}\left( \mathbf{X}_{k i_k}, \mathbf{X}_{l_1 i_{l_1}} \right), \; \bar{\mathbf{s}}\left( \mathbf{X}_{k j_k}, \mathbf{X}_{l_2 j_{l_2}} \right) \right\rangle \right] = 0 .
\label{thm1:eqinner}
\end{align}
Since $ \mathbf{V}_{k l} = 
( n_k n_l )^{-1} \sum_{i_k = 1}^{n_k} \sum_{i_l = 1}^{n_l} \bar{\mathbf{s}}\left( \mathbf{X}_{k i_k} , \mathbf{X}_{l i_l} \right) $, from \eqref{thm1:eqbound} and \eqref{thm1:eqinner} we get that for all $ k $,
\begin{align}
& E\left[ \left\| \sqrt{n_k} \frac{1}{n} \sum_{l=1}^{K} n_l \mathbf{V}_{k l} \right\|^2 \right]
\longrightarrow 0
\, \text{ as } n \to \infty ,
\nonumber\\
& \text{which implies }
\left\| \left[ \mathbf{U}_n - E\left[ \mathbf{U}_n \right] \right] - \mathbf{W}_n \right\|
= \sqrt{ \sum_{k = 1}^{K} \left\| \sqrt{n_k} \frac{1}{n} \sum_{l=1}^{K} n_l \mathbf{V}_{k l} \right\|^2 }
\stackrel{P}{\longrightarrow} 0
\label{thm1:eq4}
\end{align}
as $ n \to \infty $.
Next, we consider $ \mathbf{S}_k $.
It can be verified that
\begin{align}
\text{Cov}\left( \sqrt{n_{k_1}} \mathbf{S}_{k_1},\, \sqrt{n_{k_2}} \mathbf{S}_{k_2} \right) 
& = \frac{\sqrt{n_{k_1} n_{k_2}}}{n} \sum_{l=1}^{K} \frac{n_l}{n} \left[ \mathbf{C}( k_1, k_2, l ) - \mathbf{C}( k_1, l, k_2 ) - \mathbf{C}( l, k_2, k_1 ) \right] \nonumber\\
& \quad
+ \sum_{l_1 = 1}^{K} \sum_{l_2 = 1}^{K} \frac{n_{l_1} n_{l_2}}{n^2}
\mathbf{C}( l_1, l_2, k_1 ) \mathbb{I}( k_1 = k_2 ) .
\label{thm1:eq7}
\end{align}
Since $ n^{-1} n_k \to \lambda_k \in ( 0, 1 ) $ for all $ k $ as $ n \to \infty $, from \eqref{thm1:eq7}, we have for all $ k_1 $ and $ k_2 $,
\begin{align}
& \text{Cov}\left( \sqrt{n_{k_1}} \mathbf{S}_{k_1},\, \sqrt{n_{k_2}} \mathbf{S}_{k_2} \right)
\longrightarrow \boldsymbol{\sigma}_{k_1 k_2} \quad \text{as } n \to \infty .
\label{thm1:eq8}
\end{align}
Since $ E\left[ \mathbf{W}_n \right] = \mathbf{0} $,
from \eqref{thm1:eq8} and an application of Theorem 1.1 in \cite{kundu2000central}, we get
$ \mathbf{W}_n \stackrel{w}{\longrightarrow} G\left( \mathbf{0}, \boldsymbol{\Sigma} \right) $ as $ n \to \infty $, which, along with \eqref{thm1:eq4}, implies that
$ \left[ \mathbf{U}_n - E\left[ \mathbf{U}_n \right] \right]
\stackrel{w}{\longrightarrow} G\left( \mathbf{0}, \boldsymbol{\Sigma} \right) $ as $ n \to \infty $.
\end{proof}

\begin{proof}[Proof of \autoref{coro:1}]
When $ \text{H}_0 $ in \eqref{h_0} is true, $ \mathbf{X}_{k 1} $ is an independent copy of $ \mathbf{X}_{l 1} $ for $ k \neq l $, which, along with the fact that $ \mathbf{s}( \mathbf{0} ) = \mathbf{0} $ by definition, gives $ E[ \mathbf{s}( \mathbf{X}_{k 1} - \mathbf{X}_{l 1} ) ] = \mathbf{0} $ for all $ k, l $. So, $ E\left[ \sqrt{n_k} \bar{\mathbf{R}}_k \right]
= \sqrt{n_k} \sum_{l=1}^{K} n^{-1} n_l E\left[ \mathbf{s}\left( \mathbf{X}_{k 1} - \mathbf{X}_{l 1} \right) \right] = \mathbf{0} $ for all $ k $, and consequently, $ E\left[ \mathbf{U}_n \right] = \mathbf{0} $. Therefore from \autoref{thm:1}, we have $ \mathbf{U}_n \stackrel{w}{\longrightarrow} G\left( \mathbf{0}, \boldsymbol{\Sigma} \right) $ as $ n \to \infty $. Hence, from an application of the mapping theorem \cite[Theorem 2.7, p.~21]{billingsley2013convergence}, we get
$ \| \mathbf{U}_n \|^2 \stackrel{w}{\longrightarrow} \| \mathbf{W} \|^2 $ as $ n \to \infty $, where $ \mathbf{W} $ is a random element having distribution $ G\left( \mathbf{0}, \boldsymbol{\Sigma} \right) $.
\end{proof}

\begin{proof}[Proof of \autoref{thm:cov}]
Note that the class of compact operators on a separable Hilbert space is closed \cite[p.~164]{bhatia2009notes} and separable \cite[p.~168]{bhatia2009notes} with respect to the operator norm. So, the class of compact operators on a separable Hilbert space is a separable Banach space with respect to the operator norm, and we denote it as $ \mathcal{C} $.
We have
\begin{align}
&\mathbf{C}_n( k_1, k_2, l )
= \frac{n_l}{n_l - 1} \mathbf{C}_n^{(1)}( k_1, k_2, l ) - \mathbf{C}_n^{(2)}( k_1, l ) \otimes \mathbf{C}_n^{(2)}( k_2, l ) ,
\label{eq:cov0}
\\
& \text{where }
\mathbf{C}_n^{(1)}( j, k, l )
= \frac{1}{n_l n_{j} n_{k}} \sum_{i_l = 1}^{n_l} \sum_{i_{j} = 1}^{n_{j}} \sum_{i_{k} = 1}^{n_{k}} \mathbf{s}\left( \mathbf{X}_{j i_{j}} - \mathbf{X}_{l i_l} \right) \otimes \mathbf{s}\left( \mathbf{X}_{k i_{k}} - \mathbf{X}_{l i_l} \right)
\nonumber\\
& \text{and } \mathbf{C}_n^{(2)}( k, l )
= \frac{1}{n_k n_l} \sum_{i_k = 1}^{n_k} \sum_{i_l = 1}^{n_l} \mathbf{s}\left( \mathbf{X}_{k i_k} - \mathbf{X}_{l i_l} \right) \; \text{for } j, k, l \in \{ 1, \ldots, K \} .\nonumber
\end{align}
First, we shall consider the term $ \mathbf{C}_n^{(1)}( k_1, k_2, l ) $ and the following cases of the triplet $ ( k_1, k_2, l ) $ separately: (a) $ k_1, k_2, l $ are all distinct, (b) $ k_1 = k_2 \neq l $, (c) $ k_1 = l \neq k_2 $, (d) $ k_2 = l \neq k_1 $ and (e) $ k_1 = k_2 = l $. These five cases cover all possible cases of $ ( k_1, k_2, l ) $.

Assume that $ k_1, k_2, l $ are all distinct. Define $ \boldsymbol{\Phi}_1\left( \mathbf{x}, \mathbf{y}, \mathbf{z} \right) = \mathbf{s}\left( \mathbf{x} - \mathbf{z} \right) \otimes \mathbf{s}\left( \mathbf{y} - \mathbf{z} \right) $ for $ \mathbf{x}, \mathbf{y}, \mathbf{z} \in \mathcal{H} $. Since $ \boldsymbol{\Phi}_1\left( \mathbf{x}, \mathbf{y}, \mathbf{z} \right) $ is a finite-rank operator, it belongs to the separable Banach space $ \mathcal{C} $.
Therefore,
\begin{align*}
\mathbf{C}_n^{(1)}( k_1, k_2, l )
= \frac{1}{n_l n_{k_1} n_{k_2}} \sum_{i_l = 1}^{n_l} \sum_{i_{k_1} = 1}^{n_{k_1}} \sum_{i_{k_2} = 1}^{n_{k_2}} \boldsymbol{\Phi}_1\left( \mathbf{X}_{k_1 i_{k_1}} , \mathbf{X}_{k_2 i_{k_2}} , \mathbf{X}_{l i_l} \right)
\end{align*}
is a 3-sample UB-statistic \cite[p.~15]{borovskikh1996u}. Since $ \left\| \boldsymbol{\Phi}_1 \right\| \le 1 $ and $ n^{-1} n_k \to \lambda_k \in ( 0, 1 ) $ for all $ k $ as $ n \to \infty $, from Theorem 3.2.1 in \cite[p.~79]{borovskikh1996u}, we get
\begin{align}
\mathbf{C}_n^{(1)}( k_1, k_2, l ) 
\stackrel{a.s.}{\longrightarrow}
E\left[ \mathbf{s}\left( \mathbf{X}_{k_1 1} - \mathbf{X}_{l 1} \right) \otimes \mathbf{s}\left( \mathbf{X}_{k_2 1} - \mathbf{X}_{l 1} \right) \right]
\label{eq:cov1}
\end{align}
as $ n \to \infty $ in the operator norm.
Next, assume $ k_1 = k_2 \neq l $. Define $ \boldsymbol{\Phi}_2\left( \mathbf{x}_1, \mathbf{x}_2 ; \mathbf{y} \right) = 2^{-1} [ \mathbf{s}\left( \mathbf{x}_1 - \mathbf{y} \right) \otimes \mathbf{s}\left( \mathbf{x}_2 - \mathbf{y} \right) + \mathbf{s}\left( \mathbf{x}_2 - \mathbf{y} \right) \otimes \mathbf{s}\left( \mathbf{x}_1 - \mathbf{y} \right) ] $ and $ \boldsymbol{\Phi}_3\left( \mathbf{x}, \mathbf{y} \right) = \mathbf{s}\left( \mathbf{x} - \mathbf{y} \right) \otimes \mathbf{s}\left( \mathbf{x} - \mathbf{y} \right) $. Clearly, $ \boldsymbol{\Phi}_2\left( \mathbf{x}_1, \mathbf{x}_2 ; \mathbf{y} \right), \allowbreak \boldsymbol{\Phi}_3\left( \mathbf{x}, \mathbf{y} \right) \in \mathcal{C} $, and since $ \left\| \boldsymbol{\Phi}_2 \right\| \le 1 $ and $ \left\| \boldsymbol{\Phi}_3 \right\| \le 1 $, from Theorem 3.2.1 in \cite[p.~79]{borovskikh1996u}, we get
\begin{align}
\mathbf{C}_n^{(1)}( k_1, k_2, l )
& = \frac{n_{k_1} - 1}{n_{k_1}} \frac{1}{n_l {n_{k_1} \choose 2}} \sum_{i_l = 1}^{n_l} \sum_{1 \le i_{k_1} < j_{k_1} \le n_{k_1}} \boldsymbol{\Phi}_2\left( \mathbf{X}_{k_1 i_{k_1}} , \mathbf{X}_{k_1 j_{k_1}} ; \mathbf{X}_{l i_l} \right) \nonumber\\
& \quad
+ \frac{1}{n_{k_1}} \frac{1}{n_l n_{k_1}} \sum_{i_l = 1}^{n_l} \sum_{i_{k_1} = 1}^{n_{k_1}} \boldsymbol{\Phi}_3\left( \mathbf{X}_{k_1 i_{k_1}} , \mathbf{X}_{l i_l} \right)
\nonumber\\
& \stackrel{a.s.}{\longrightarrow}
E\left[ \mathbf{s}\left( \mathbf{X}_{k_1 1} - \mathbf{X}_{l 1} \right) \otimes \mathbf{s}\left( \mathbf{X}_{k_1 2} - \mathbf{X}_{l 1} \right) \right]
\label{eq:cov2}
\end{align}
as $ n \to \infty $ in the operator norm.
Next, assume $ k_1 = l \neq k_2 $. Define $ \boldsymbol{\Phi}_4\left( \mathbf{x}_1, \mathbf{x}_2 ; \mathbf{y} \right) = 2^{-1} [ \mathbf{s}\left( \mathbf{x}_1 - \mathbf{x}_2 \right) \otimes \mathbf{s}\left( \mathbf{y} - \mathbf{x}_2 \right) + \mathbf{s}\left( \mathbf{x}_2 - \mathbf{x}_1 \right) \otimes \mathbf{s}\left( \mathbf{y} - \mathbf{x}_1 \right) ] $. Since $ \left\| \boldsymbol{\Phi}_4 \right\| \le 1 $, we get
\begin{align}
\mathbf{C}_n^{(1)}( k_1, k_2, l ) 
& = \frac{n_{k_1} - 1}{n_{k_1}} \frac{1}{{n_{k_1} \choose 2} n_{k_2}} \sum_{1 \le i_{k_1} < j_{k_1} \le n_{k_1}} \sum_{i_{k_2} = 1}^{n_{k_2}} \boldsymbol{\Phi}_4\left( \mathbf{X}_{k_1 i_{k_1}} , \mathbf{X}_{k_1 j_{k_1}} ; \mathbf{X}_{k_2 i_{k_2}} \right)
\nonumber\\
& \stackrel{a.s.}{\longrightarrow}
E\left[ \mathbf{s}\left( \mathbf{X}_{k_1 1} - \mathbf{X}_{k_1 2} \right) \otimes \mathbf{s}\left( \mathbf{X}_{k_2 1} - \mathbf{X}_{k_1 2} \right) \right]
\label{eq:cov3}
\end{align}
as $ n \to \infty $ in the operator norm.
Next, assume $ k_2 = l \neq k_1 $. Arguing similarly as in the case $ k_1 = l \neq k_2 $, we get
\begin{align}
\mathbf{C}_n^{(1)}( k_1, k_2, l )
& \stackrel{a.s.}{\longrightarrow}
E\left[ \mathbf{s}\left( \mathbf{X}_{k_1 1} - \mathbf{X}_{k_2 2} \right) \otimes \mathbf{s}\left( \mathbf{X}_{k_2 1} - \mathbf{X}_{k_2 2} \right) \right]
\label{eq:cov4}
\end{align}
as $ n \to \infty $ in the operator norm.
Finally, assume $ k_1 = k_2 = l $. Define
\begin{align*}
\boldsymbol{\Phi}_5\left( \mathbf{x}_1, \mathbf{x}_2, \mathbf{x}_3 \right) = (3!)^{-1} \sum_{\sigma} \mathbf{s}\left( \mathbf{x}_{\sigma(1)} - \mathbf{x}_{\sigma(3)} \right) \otimes \mathbf{s}\left( \mathbf{x}_{\sigma(2)} - \mathbf{x}_{\sigma(3)} \right) ,
\end{align*}
where the sum is carried out over all 6 permutations of $ ( 1, 2, 3 ) $, and
\begin{align*}
\boldsymbol{\Phi}_6\left( \mathbf{x}_1, \mathbf{x}_2 \right) = 2^{-1} \left[ \mathbf{s}\left( \mathbf{x}_1 - \mathbf{x}_2 \right) \otimes \mathbf{s}\left( \mathbf{x}_1 - \mathbf{x}_2 \right) + \mathbf{s}\left( \mathbf{x}_2 - \mathbf{x}_1 \right) \otimes \mathbf{s}\left( \mathbf{x}_2 - \mathbf{x}_1 \right) \right] .
\end{align*}
Since $ \left\| \boldsymbol{\Phi}_5 \right\| \le 1 $ and $ \left\| \boldsymbol{\Phi}_6 \right\| \le 1 $, from Theorem 3.1.1 in \cite[p.~73]{borovskikh1996u}, we get
\begin{align}
& \mathbf{C}_n^{(1)}( k_1, k_2, l ) \nonumber\\
& = \frac{1}{n_{k_1}^3} \sum_{h_{k_1} = 1}^{n_{k_1}} \sum_{i_{k_1} = 1}^{n_{k_1}} \sum_{j_{k_1} = 1}^{n_{k_1}} \mathbf{s}\left( \mathbf{X}_{k_1 i_{k_1}} - \mathbf{X}_{k_1 h_{k_1}} \right) \otimes \mathbf{s}\left( \mathbf{X}_{k_1 j_{k_1}} - \mathbf{X}_{k_1 h_{k_1}} \right) 
\nonumber\\
& = \frac{(n_{k_1} - 1) (n_{k_1} - 2)}{n_{k_1}^2} \frac{1}{{n_{k_1} \choose 3}} \sum_{1 \le h_{k_1} < i_{k_1} < j_{k_1} \le n_{k_1}} \boldsymbol{\Phi}_5\left( \mathbf{X}_{k_1 i_{k_1}} , \mathbf{X}_{k_1 j_{k_1}} , \mathbf{X}_{k_1 h_{k_1}} \right) \nonumber\\
& \quad 
+ \frac{n_{k_1} - 1}{n_{k_1}^2} \frac{1}{{n_{k_1} \choose 2}} \sum_{1 \le i_{k_1} < j_{k_1} \le n_{k_1}} \boldsymbol{\Phi}_3\left( \mathbf{X}_{k_1 i_{k_1}} , \mathbf{X}_{k_1 j_{k_1}} \right) \nonumber\\
& \stackrel{a.s.}{\longrightarrow}
E\left[ \mathbf{s}\left( \mathbf{X}_{k_1 1} - \mathbf{X}_{k_1 3} \right) \otimes \mathbf{s}\left( \mathbf{X}_{k_1 2} - \mathbf{X}_{k_1 3} \right) \right]
\label{eq:cov5}
\end{align}
as $ n \to \infty $ in the operator norm.
Therefore, from \eqref{eq:cov1}, \eqref{eq:cov2}, \eqref{eq:cov3}, \eqref{eq:cov4} and \eqref{eq:cov5}, we get that for all $ k_1, k_2, l \in \{ 1, \ldots, K \} $,
\begin{align}
\mathbf{C}_n^{(1)}( k_1, k_2, l )
& \stackrel{a.s.}{\longrightarrow}
E\left[ \mathbf{s}\left( \mathbf{X}_{k_1 1} - \mathbf{X}_{l 1} \right) \otimes \mathbf{s}\left( \mathbf{X}_{k_2 1} - \mathbf{X}_{l 1} \right) \right] \nonumber\\
& = E\left[ E\left[ \mathbf{s}( \mathbf{X}_{k_1 1} - \mathbf{X}_{l 1} ) \midil \mathbf{X}_{l 1} \right] \otimes E\left[ \mathbf{s}( \mathbf{X}_{k_2 1} - \mathbf{X}_{l 1} ) \midil \mathbf{X}_{l 1} \right] \right]
\label{eq:cov6}
\end{align}
as $ n \to \infty $ in the operator norm, where, if $ k_1 = l $ or $ k_2 = l $, $ \mathbf{X}_{k_1 1} $ or $ \mathbf{X}_{k_2 1} $ are considered to be independent copies of $ \mathbf{X}_{l 1} $, respectively.
Next, we consider the term $ \mathbf{C}_n^{(2)}( k, l ) $.
Note that if $ k = l $, $ \mathbf{C}_n^{(2)}( k, l ) = \mathbf{0} $ and $ E\left[ \mathbf{s}( \mathbf{X}_{k_1 1} - \mathbf{X}_{l 1} ) \right] = \mathbf{0} $. When $ k \neq l $, $ \mathbf{C}_n^{(2)}( k, l ) $ is a 2-sample UB statistic, and we have
\begin{align}
\mathbf{C}_n^{(2)}( k, l ) \stackrel{a.s.}{\longrightarrow} E\left[ \mathbf{s}( \mathbf{X}_{k_1 1} - \mathbf{X}_{l 1} ) \right]
\label{eq:cov7}
\end{align}
as $ n \to \infty $ in the operator norm. Therefore, from \eqref{eq:cov0}, \eqref{eq:cov6} and \eqref{eq:cov7}, we have
\begin{align}
\mathbf{C}_n( k_1, k_2, l )
\stackrel{a.s.}{\longrightarrow}
\mathbf{C}( k_1, k_2, l ) .
\label{eq:cov8}
\end{align}
as $ n \to \infty $ in the operator norm. Since $ n^{-1} n_k \to \lambda_k $ for all $ k $ as $ n \to \infty $, from \eqref{eq:cov8} we have
$ \boldsymbol{\sigma}^{(n)}_{k_1 k_2} \stackrel{a.s.}{\longrightarrow} \boldsymbol{\sigma}_{k_1 k_2} $ as $ n \to \infty $ in the operator norm for all $ k_1 , k_2 $, and hence $ \widehat{\boldsymbol{\Sigma}}_n \stackrel{a.s.}{\longrightarrow} \boldsymbol{\Sigma} $ as $ n \to \infty $ in the operator norm.
\end{proof}

\begin{proof}[Proof of \autoref{thm:bootstrapperm}]
The proof for the test based on the asymptotic procedure follows from \autoref{coro:1}.

We now prove the theorem for the bootstrap implementation.
Let $ \mathcal{S}{( n_1, \ldots, n_K )} $ denote the original sample $ \{ \mathbf{X}_{k i} \midil k = 1, \ldots, K ; i = 1, \ldots, n_k \} $ and $ \mathcal{S} = \cup \{ \mathcal{S}_{( n_1, \ldots, n_K )} \midil n_1 \ge 1, \ldots, n_K \ge 1 \} $. Let $ \{ \mathbf{X}_{k i}^* \midil k = 1, \ldots, K ; i = 1, \ldots, n_k \} $ denote a bootstrap sample.
Define
$ \mathbf{R}^*( \mathbf{x} ) = n^{-1} \sum_{k=1}^{K} \sum_{i=1}^{n_k} \mathbf{s}( \mathbf{x} - \mathbf{X}_{k i}^* ) $,
$ \bar{\mathbf{R}}_k^* = n_k^{-1} \sum_{i_k = 1}^{n_k} \mathbf{R}^*\left( \mathbf{X}_{k i_k}^* \right) $,
$ \mathbf{U}_n^* = \left( \sqrt{n_1} \bar{\mathbf{R}}_1^*, \ldots, \sqrt{n_K} \bar{\mathbf{R}}_K^* \right) $
and $ SS_n^* = \sum_{k=1}^{K} n_k \left\| \bar{\mathbf{R}}_k^* \right\|^2 = \| \mathbf{U}_n^* \|^2 $.
Define $ P^*[\, \cdot \,] = P[\, \cdot \midil \mathcal{S} ] $, $ E^*[\, \cdot \,] = E[\, \cdot \midil \mathcal{S} ] $, $ \text{Cov}^*[\, \cdot, \cdot \,] = \text{Cov}[\, \cdot, \cdot \midil \mathcal{S} ] $ and $ E^*[\, \cdot \midil \mathbf{Z} ] = E[\, \cdot \midil \mathbf{Z}, \mathcal{S} ] $ for any random element $ \mathbf{Z} $.
Also, let $ `` \stackrel{P^*}{\longrightarrow} " $ imply convergence in probability given $ \mathcal{S} $, and $ `` \stackrel{w^*}{\longrightarrow} " $ imply convergence in distribution given $ \mathcal{S} $.
Define $ \tilde{\mathbf{s}}^*\left( \mathbf{X}_{k i_k}^* , \mathbf{X}_{l i_l}^* \right)
= \mathbf{s}\left( \mathbf{X}_{k i_k}^* - \mathbf{X}_{l i_l}^* \right) - E^*\left[ \mathbf{s}\left( \mathbf{X}_{k i_k}^* - \mathbf{X}_{l i_l}^* \right) \right] $.
Note that
$ E^*\left[ \tilde{\mathbf{s}}^*\left( \mathbf{X}_{k i_k}^* , \mathbf{X}_{l i_l}^* \right) \mid \mathbf{X}_{k i_k}^* \right]
\allowbreak = E^*\left[ \tilde{\mathbf{s}}^*\left( \mathbf{X}_{k i_k}^* , \mathbf{X}_{l 1}^* \right) \mid \mathbf{X}_{k i_k}^* \right] $ and
$ E\left[ \tilde{\mathbf{s}}^*\left( \mathbf{X}_{k i_k}^* , \mathbf{X}_{l i_l}^* \right) \mid \mathbf{X}_{l i_l}^* \right]
= \; \allowbreak E\left[ \tilde{\mathbf{s}}^*\left( \mathbf{X}_{k 1}^* , \mathbf{X}_{l i_l}^* \right) \mid \mathbf{X}_{l i_l}^* \right] $.
Also, define
\begin{align*}
& \mathbf{S}_k^*
= \frac{1}{n} \sum_{l=1}^{K} n_l \left[ \frac{1}{n_k} \sum_{i_k = 1}^{n_k} E^*\left[ \tilde{\mathbf{s}}^*\left( \mathbf{X}_{k i_k}^* , \mathbf{X}_{l 1}^* \right) \mid \mathbf{X}_{k i_k}^* \right]
+ \frac{1}{n_l} \sum_{i_l = 1}^{n_l} E\left[ \tilde{\mathbf{s}}^*\left( \mathbf{X}_{k 1}^* , \mathbf{X}_{l i_l}^* \right) \mid \mathbf{X}_{l i_l}^* \right] \right] \\
& \text{and }
\mathbf{V}_{k l}^*
= \frac{1}{n_k n_l} \sum_{i_k = 1}^{n_k} \sum_{i_l = 1}^{n_l} \left[ \tilde{\mathbf{s}}^*\left( \mathbf{X}_{k i_k}^* , \mathbf{X}_{l i_l}^* \right) 
- E^*\left[ \tilde{\mathbf{s}}^*\left( \mathbf{X}_{k i_k}^* , \mathbf{X}_{l 1}^* \right) \mid \mathbf{X}_{k i_k}^* \right]
- E^*\left[ \tilde{\mathbf{s}}^*\left( \mathbf{X}_{k 1}^* , \mathbf{X}_{l i_l}^* \right) \mid \mathbf{X}_{l i_l}^* \right] \right] .
\end{align*}
We have
$ \bar{\mathbf{R}}_k^* - E^*\left[ \bar{\mathbf{R}}_k^* \right]
= \mathbf{S}_k^* + n^{-1} \sum_{l=1}^{K} n_l \mathbf{V}_{k l}^* $.
Let
$ \mathbf{W}_n^* = \left( \sqrt{n_1} \mathbf{S}_1^*, \ldots, \sqrt{n_K} \mathbf{S}_K^* \right) $.
Using virtually identical arguments used while proving \eqref{thm1:eq4}, one can show that
\begin{align}
\left\| \left[ \mathbf{U}_n^* - E^*\left[ \mathbf{U}_n^* \right] \right] - \mathbf{W}_n^* \right\|
& = \sqrt{ \sum_{k = 1}^{K} \left\| \sqrt{n_k} \frac{1}{n} \sum_{l=1}^{K} n_l \mathbf{V}_{k l}^* \right\|^2 }
\stackrel{P^*}{\longrightarrow} 0
\quad\text{as } n \to \infty .
\label{thm1:booteq4}
\end{align}
Note that for any $ k, l, i_k, i_l $,
$ E^*\left[ \mathbf{s}\left( \mathbf{X}_{k i_k}^* - \mathbf{X}_{l i_l}^* \right) \right] 
= n^{-2} \sum_{q = 1}^{K} \sum_{r = 1}^{K} \sum_{i_q = 1}^{n_q} \sum_{i_r = 1}^{n_r} \mathbf{s}\left( \mathbf{X}_{q i_q} - \mathbf{X}_{r i_r} \right) 
\allowbreak= \mathbf{0} $,
which implies $ E^*\left[ \bar{\mathbf{R}}_k^* \right] = \mathbf{0} $ for all $ k $, and hence $ E^*\left[ \mathbf{U}_n^* \right] = \mathbf{0} $. Hence, from \eqref{thm1:booteq4}, we get
\begin{align}
\left\| \mathbf{U}_n^* - \mathbf{W}_n^* \right\|
& = \sqrt{ \sum_{k = 1}^{K} \left\| \sqrt{n_k} \frac{1}{n} \sum_{l=1}^{K} n_l \mathbf{V}_{k l}^* \right\|^2 }
\stackrel{P^*}{\longrightarrow} 0
\quad\text{as } n \to \infty .
\label{thm1:booteq4a}
\end{align}
Next,
define
$ F_k^*\left( \mathbf{X}_{l i_l}^* \right) 
= E^*\left[ \tilde{\mathbf{s}}^*\left( \mathbf{X}_{k 1}^*, \mathbf{X}_{l i_l}^* \right) \mid \mathbf{X}_{l i_l}^* \right] $ for all $ k, l $ and $ i_l $,
\begin{align*}
\mathbf{E}_n
& = \frac{1}{n^3} \sum_{m = 1}^{K} \sum_{k = 1}^{K} \sum_{l = 1}^{K} \sum_{i_m = 1}^{n_m} \sum_{i_k = 1}^{n_k} \sum_{i_l = 1}^{n_l} \mathbf{s}\left( \mathbf{X}_{k i_k} - \mathbf{X}_{m i_m} \right) \otimes \mathbf{s}\left( \mathbf{X}_{l i_l} - \mathbf{X}_{m i_m} \right) \nonumber\\
& \quad -
\left\{ \frac{1}{n^2} \sum_{k = 1}^{K} \sum_{l = 1}^{K} \sum_{i_k = 1}^{n_k} \sum_{i_l = 1}^{n_l} \mathbf{s}\left( \mathbf{X}_{k i_k} - \mathbf{X}_{l i_l} \right) \right\} \otimes \left\{ \frac{1}{n^2} \sum_{k = 1}^{K} \sum_{l = 1}^{K} \sum_{i_k = 1}^{n_k} \sum_{i_l = 1}^{n_l} \mathbf{s}\left( \mathbf{X}_{k i_k} - \mathbf{X}_{l i_l} \right) \right\} \\
\text{and }
\mathbf{E}
& = \sum_{m = 1}^{K} \sum_{k = 1}^{K} \sum_{l = 1}^{K} \lambda_k \lambda_l \lambda_m E\left[ \mathbf{s}\left( \mathbf{X}_{k 1} - \mathbf{X}_{m 1} \right) \otimes \mathbf{s}\left( \mathbf{X}_{l 1} - \mathbf{X}_{m 1} \right) \right] \nonumber\\
& \quad -
\left\{ \sum_{k = 1}^{K} \sum_{l = 1}^{K} \lambda_k \lambda_l E\left[ \mathbf{s}\left( \mathbf{X}_{k 1} - \mathbf{X}_{l 1} \right) \right] \right\} \otimes \left\{ \sum_{k = 1}^{K} \sum_{l = 1}^{K} \lambda_k \lambda_l E\left[ \mathbf{s}\left( \mathbf{X}_{k 1} - \mathbf{X}_{l 1} \right) \right] \right\} .
\end{align*}
Note that for all $ k $ and $ l $, $ E^*\left[ F_k^*\left( \mathbf{X}_{l i_l}^* \right) \right] = \mathbf{0} $, and one can verify that for all $ k, l_1, l_2 $,
\begin{align}
& E^*\left[ F_{l_1}^*\left( \mathbf{X}_{k i_k}^* \right) \otimes F_{l_2}^*\left( \mathbf{X}_{k i_k}^* \right) \right]
= \mathbf{E}_n .
\label{thm1:booteq5}
\end{align}
Using the fact that $ \tilde{\mathbf{s}}^*\left( \mathbf{X}_{k i_k}^*, \mathbf{X}_{l i_l}^* \right) = - \tilde{\mathbf{s}}^*\left( \mathbf{X}_{l i_l}^*, \mathbf{X}_{k i_k}^* \right) $, we derive
\begin{align}
\mathbf{S}_k^*
& = - \frac{1}{n} \sum_{\substack{l=1 \\ l \neq k}}^{K} \frac{n_l}{n_k} \sum_{i_k = 1}^{n_k} F_l^*\left( \mathbf{X}_{k i_k}^* \right)
+ \frac{1}{n} \sum_{\substack{l=1 \\ l \neq k}}^{K} \sum_{i_l = 1}^{n_l} F_k^*\left( \mathbf{X}_{l i_l}^* \right) .
\label{thm1:booteq6}
\end{align}
Since $ \text{Cov}^*\left( F_{l_1}^*\left( \mathbf{X}_{k i_k}^* \right) , F_{l_2}^*\left( \mathbf{X}_{l i_l}^* \right) \right) = \mathbf{0} $ for $ ( k, i_k ) \neq ( l, i_l ) $ and for all $ l_1, l_2 $, from \eqref{thm1:booteq5} and \eqref{thm1:booteq6}, we get
\begin{align}
\text{Cov}^*\left( \sqrt{n_{k_1}} \mathbf{S}_{k_1}^* , \sqrt{n_{k_2}} \mathbf{S}_{k_2}^* \right)
= \left( \mathbb{I}\left( k_1 = k_2 \right) - \frac{\sqrt{n_{k_1} n_{k_2}}}{n} \right) \mathbf{E}_n .
\label{thm1:booteq7}
\end{align}
Using arguments similar to those in the proof of \autoref{thm:cov}, we get that $ \mathbf{E}_n \stackrel{a.s.}{\longrightarrow} \mathbf{E} $ as $ n \to \infty $ in the operator norm, and hence, from \eqref{thm1:booteq7}, we get
\begin{align}
\text{Cov}^*\left( \sqrt{n_{k_1}} \mathbf{S}_{k_1}^* , \sqrt{n_{k_2}} \mathbf{S}_{k_2}^* \right)
\stackrel{a.s.}{\longrightarrow}
\left( \mathbb{I}\left( k_1 = k_2 \right) - \sqrt{\lambda_{k_1} \lambda_{k_2}} \right) \mathbf{E}
\quad\text{as } n \to \infty .
\label{thm1:booteq8}
\end{align}
Define $ \boldsymbol{\Sigma}^* = \left( \boldsymbol{\sigma}_{k_1 k_2}^* \right)_{K \times K} $, where
$ \boldsymbol{\sigma}_{k_1 k_2}^*
= \left( \mathbb{I}\left( k_1 = k_2 \right) - \sqrt{\lambda_{k_1} \lambda_{k_2}} \right) \mathbf{E} $
for $ k_1, k_2 = 1, \ldots, K $.
Since $ E^*\left[ \mathbf{S}_k^* \right] = \mathbf{0} $ for all $ k $, we have $ E\left[ \mathbf{W}_n^* \right] = \mathbf{0} $, which, along with \eqref{thm1:booteq8} and an application of Theorem 1.1 in \cite{kundu2000central}, implies that
$ \mathbf{W}_n^* \stackrel{w^*}{\longrightarrow} G\left( \mathbf{0}, \boldsymbol{\Sigma}^* \right) $ as $ n \to \infty $. Therefore, from \eqref{thm1:booteq4a}, we get
$ \mathbf{U}_n^*
\stackrel{w^*}{\longrightarrow} G\left( \mathbf{0}, \boldsymbol{\Sigma}^* \right) $
as $ n \to \infty $, and since $ \boldsymbol{\Sigma}^* $ is independent of $ \mathcal{S} $, we have the unconditional convergence
\begin{align}
\mathbf{U}_n^*
\stackrel{w}{\longrightarrow} G\left( \mathbf{0}, \boldsymbol{\Sigma}^* \right)
\quad\text{as } n \to \infty .
\label{thm1:booteq9}
\end{align}

It can be verified that when the null hypothesis $ \text{H}_0 $ in \eqref{h_0} is true, we have $ \boldsymbol{\Sigma}^* = \boldsymbol{\Sigma} $, where $ \boldsymbol{\Sigma} $ is as defined in \autoref{thm:1}.
From \autoref{coro:1} and using arguments similar to those used in its proof, we get that under $ \text{H}_0 $ in \eqref{h_0}, the asymptotic distribution of $ SS_n^* $ is identical to that of $ SS_n $.
Hence under $ \text{H}_0 $ in \eqref{h_0}, for every $ 0 < \alpha < 1 $, the size of a level $ \alpha $ test based on the bootstrap procedure converges to $ \alpha $ as $ n \to \infty $ and $ M_b \to \infty $.

Next, we prove the theorem for the permutation method.
Let $ \boldsymbol{\sigma}_n $ denote a random permutation of $ ( 1, \ldots, n ) $, and
$ \{ \mathbf{X}_{k i}^\# \midil k = 1, \ldots, K ; i = 1, \ldots, n_k \} $ denote the permuted sample obtained by applying $ \boldsymbol{\sigma}_n $ to the original sample.
Define
$ \mathbf{R}^\#( \mathbf{x} ) = n^{-1} \sum_{k=1}^{K} \sum_{i=1}^{n_k} \mathbf{s}( \mathbf{x} - \mathbf{X}_{k i}^\# ) $ and
$ \bar{\mathbf{R}}_k^\# = n_k^{-1} \sum_{i_k = 1}^{n_k} \mathbf{R}^\#\left( \mathbf{X}_{k i_k}^\# \right) $.
So, $ SS_n^\# = \sum_{k=1}^{K} n_k \left\| \bar{\mathbf{R}}_k^\# \right\|^2 $.
Given the original sample, all possible realizations of the permuted sample $ \{ \mathbf{X}_{k i}^\# \midil k = 1, \ldots, K ; i = 1, \ldots, n_k \} $ are equally likely. Also, under the null hypothesis $ \text{H}_0 $ in \eqref{h_0}, the underlying distributions of all the classes are identical. So, the joint distribution of $ \{ \mathbf{X}_{k i}^\# \midil k = 1, \ldots, K ; i = 1, \ldots, n_k \} $ is the same as that of the original sample.
Hence, under $ \text{H}_0 $ in \eqref{h_0}, the distribution of $ SS_n^\# $ is the same as the distribution of the original test statistic $ SS_n $ for all $ n $.
Therefore, under $ \text{H}_0 $ in \eqref{h_0}, the asymptotic distribution of $ SS_n^\# $ is the same as that of $ SS_n $, which is given in \autoref{coro:1}.
So, as $ n \to \infty $ and $ M \to \infty $, the empirical distribution of $ SS_n^\# $ converges to the asymptotic null distribution of $ SS_n $.
Hence under $ \text{H}_0 $ in \eqref{h_0}, for every $ 0 < \alpha < 1 $, the size of a level $ \alpha $ test based on the permutation procedure converges to $ \alpha $ as $ n \to \infty $ and $ M \to \infty $.
\end{proof}

\begin{proof}[Proof of \autoref{thm:2}]
Since for all $ k $, $ E[ \bar{\mathbf{R}}_k ] \to \sum_{l=1}^{K} \lambda_l E[ \mathbf{s}( \mathbf{X}_{k 1} - \mathbf{X}_{l 1} ) ] $ as $ n \to \infty $, if $ \sum_{l=1}^{K} \lambda_l E[ \mathbf{s}( \mathbf{X}_{k 1} - \mathbf{X}_{l 1} ) ] \neq \mathbf{0} $ for any $ k $, we have $ \left\| E\left[ \sqrt{n_k} \bar{\mathbf{R}}_k \right] \right\| \to \infty $ as $ n \to \infty $ for that $ k $, and this implies $ \left\| E\left[ \mathbf{U}_n \right] \right\| \to \infty $ as $ n \to \infty $.
Consequently, from \autoref{thm:1}, we have $ \left\| \mathbf{U}_n \right\| \ge \left\| E\left[ \mathbf{U}_n \right] \right\| - \left\| \mathbf{U}_n - E\left[ \mathbf{U}_n \right] \right\| \stackrel{P}{\longrightarrow} \infty $ as $ n \to \infty $.
Therefore,
\begin{align}
SS_n = \left\| \mathbf{U}_n \right\|^2 \stackrel{P}{\longrightarrow} \infty
\quad\text{as } n \to \infty .
\label{thm2:eq1}
\end{align}

From \eqref{thm2:eq1} and \autoref{coro:1}, we get that for every $ 0 < \alpha < 1 $, the power of a level $ \alpha $ test based on the asymptotic procedure converges to $ 1 $ as $ n \to \infty $.

Next, from \eqref{thm1:booteq9}, we get that $ SS_n^* $ is stochastically bounded whether the null or the alternative is true. Hence, from \eqref{thm2:eq1}, we get that for every $ 0 < \alpha < 1 $, the power of a level $ \alpha $ test based on the bootstrap procedure converges to $ 1 $ as $ n \to \infty $ and $ M_b \to \infty $.

Next we prove the theorem for the permutation procedure. Recall $ \bar{\mathbf{R}}_k^\# $ defined in the proof of \autoref{thm:bootstrapperm}.
We shall show that $ n_k \| \bar{\mathbf{R}}_k^\# \|^2 $ is stochastically bounded for all $ k $, and
this would imply that $ SS_n^\# $ is stochastically bounded under the null as well as the alternative hypotheses. Then the proof would follow from \eqref{thm2:eq1}.
It can be verified that
\begin{align*}
& n_k \left\| \bar{\mathbf{R}}_k^\# \right\|^2 \\
& = n_k \mathop{\sum_{l_1 = 1}^{K}}_{l_1 \neq k} \mathop{\sum_{l_2 = 1}^{K}}_{l_2 \neq k} \frac{n_{l_1}}{n} \frac{n_{l_2}}{n} \frac{1}{n_k^2 n_{l_1} n_{l_2}} \mathop{\sum_{i_1 = 1}^{n_k} \sum_{i_2 = 1}^{n_k}}_{i_1 \neq i_2} \mathop{\sum_{j_1 = 1}^{n_{l_1}} \sum_{j_2 = 1}^{n_{l_2}}}_{( l_1, j_1 ) \neq ( l_2, j_2 )} \left\langle \mathbf{s}\left( \mathbf{X}_{k i_1}^\# - \mathbf{X}_{l_1 j_1}^\# \right), \; \mathbf{s}\left( \mathbf{X}_{k i_2}^\# - \mathbf{X}_{l_2 j_2}^\# \right) \right\rangle \\
& \quad + n_k \mathop{\sum_{l = 1}^{K}}_{l \neq k} \frac{n_{l}^2}{n^2} \frac{1}{n_k^2 n_{l}^2} \mathop{\sum_{i_1 = 1}^{n_k} \sum_{i_2 = 1}^{n_k}}_{i_1 \neq i_2} \sum_{j = 1}^{n_l} \left\langle \mathbf{s}\left( \mathbf{X}_{k i_1}^\# - \mathbf{X}_{l j}^\# \right), \; \mathbf{s}\left( \mathbf{X}_{k i_2}^\# - \mathbf{X}_{l j}^\# \right) \right\rangle \\
& \quad + n_k \mathop{\sum_{l_1 = 1}^{K}}_{l_1 \neq k} \mathop{\sum_{l_2 = 1}^{K}}_{l_2 \neq k} \frac{n_{l_1}}{n} \frac{n_{l_2}}{n} \frac{1}{n_k^2 n_{l_1} n_{l_2}} \sum_{i = 1}^{n_k} \mathop{\sum_{j_1 = 1}^{n_{l_1}} \sum_{j_2 = 1}^{n_{l_2}}}_{( l_1, j_1 ) \neq ( l_2, j_2 )} \left\langle \mathbf{s}\left( \mathbf{X}_{k i}^\# - \mathbf{X}_{l_1 j_1}^\# \right), \; \mathbf{s}\left( \mathbf{X}_{k i}^\# - \mathbf{X}_{l_2 j_2}^\# \right) \right\rangle \\
& \quad + n_k \mathop{\sum_{l = 1}^{K}}_{l \neq k} \frac{n_{l}^2}{n^2} \frac{1}{n_k^2 n_{l}^2} \sum_{i = 1}^{n_k} \sum_{j = 1}^{n_l} \left\langle \mathbf{s}\left( \mathbf{X}_{k i}^\# - \mathbf{X}_{l j}^\# \right), \; \mathbf{s}\left( \mathbf{X}_{k i}^\# - \mathbf{X}_{l j}^\# \right) \right\rangle .
\end{align*}
Let $ E^\#\left[ \cdot \right] $ denote the conditional expectation given the original sample.
We get that for $ ( k, i_1 ) $, $ ( k, i_2 ) $, $ ( l_1, j_1 ) $ and $ ( l_2, j_2 ) $ all distinct,
\begin{align*}
E^\#\left[ \left\langle \mathbf{s}\left( \mathbf{X}_{k i_1}^\# - \mathbf{X}_{l_1 j_1}^\# \right), \; \mathbf{s}\left( \mathbf{X}_{k i_2}^\# - \mathbf{X}_{l_2 j_2}^\# \right) \right\rangle \right] = \mathbf{0} ,
\end{align*}
for $ ( k, i_1 ) $, $ ( k, i_2 ) $ and $ ( l, j ) $ all distinct,
\begin{align*}
E^\#\left[ \left\langle \mathbf{s}\left( \mathbf{X}_{k i_1}^\# - \mathbf{X}_{l j}^\# \right), \; \mathbf{s}\left( \mathbf{X}_{k i_2}^\# - \mathbf{X}_{l j}^\# \right) \right\rangle \right]
& = \frac{(n - 3) !}{n !} \sum_{(i, j, k) \in \mathcal{P}_3^n} \left\langle \mathbf{s}\left( \mathbf{Y}_i - \mathbf{Y}_k \right), \; \mathbf{s}\left( \mathbf{Y}_j - \mathbf{Y}_k \right) \right\rangle ,
\end{align*}
for $ ( k, i ) $, $ ( l_1, j_1 ) $ and $ ( l_2, j_2 ) $ all distinct,
\begin{align*}
E^\#\left[ \left\langle \mathbf{s}\left( \mathbf{X}_{k i}^\# - \mathbf{X}_{l_1 j_1}^\# \right), \; \mathbf{s}\left( \mathbf{X}_{k i}^\# - \mathbf{X}_{l_2 j_2}^\# \right) \right\rangle \right]
& = \frac{(n - 3) !}{n !} \sum_{(i, j, k) \in \mathcal{P}_3^n} \left\langle \mathbf{s}\left( \mathbf{Y}_i - \mathbf{Y}_k \right), \; \mathbf{s}\left( \mathbf{Y}_j - \mathbf{Y}_k \right) \right\rangle ,
\end{align*}
and for $ ( k, i ) \neq ( l, j ) $,
\begin{align*}
E^\#\left[ \left\langle \mathbf{s}\left( \mathbf{X}_{k i}^\# - \mathbf{X}_{l j}^\# \right), \; \mathbf{s}\left( \mathbf{X}_{k i}^\# - \mathbf{X}_{l j}^\# \right) \right\rangle \right]
& = \frac{(n - 2) !}{n !} \sum_{(i, j) \in \mathcal{P}_2^n} \left\langle \mathbf{s}\left( \mathbf{Y}_i - \mathbf{Y}_j \right), \; \mathbf{s}\left( \mathbf{Y}_i - \mathbf{Y}_j \right) \right\rangle ,
\end{align*}
where $ \mathcal{P}_k^n $ is the collection of all permutations of size $ k $ drawn from $ ( 1, \ldots, n ) $, and $ \mathbf{Y}_i $, $ \mathbf{Y}_j $ and $ \mathbf{Y}_k $ are the $ i^\text{th} $, $ j^\text{th} $ and the $ k^\text{th} $ elements of the sample, respectively.
Define
\begin{align*}
& R_{n, 1}
= \frac{(n - 3) !}{n !} \sum_{(i, j, k) \in \mathcal{P}_3^n} \left\langle \mathbf{s}\left( \mathbf{Y}_i - \mathbf{Y}_k \right), \; \mathbf{s}\left( \mathbf{Y}_j - \mathbf{Y}_k \right) \right\rangle ,
\\
& \text{and }
R_{n, 2}
= \frac{(n - 2) !}{n !} \sum_{(i, j) \in \mathcal{P}_2^n} \left\langle \mathbf{s}\left( \mathbf{Y}_i - \mathbf{Y}_j \right), \; \mathbf{s}\left( \mathbf{Y}_i - \mathbf{Y}_j \right) \right\rangle .
\end{align*}
From the above discussion, we get
\begin{align}
E\left[ n_k \left\| \bar{\mathbf{R}}_k^\# \right\|^2 \right]
& = E\left[ R_{n, 1} \right] \left[ \sum_{l \neq k} \frac{(n_k + n_l - 2) n_l}{n^2} + \sum_{l_1 \neq l_2 \neq k} \frac{n_{l_1} n_{l_2}}{n^2} \right]
+ E\left[ R_{n, 2} \right] \frac{1}{n} \sum_{l \neq k} \frac{n_l}{n} ,
\label{thm2:eq2}
\\
\text{where } & R_{n, 1}
= \frac{(n - 3) !}{n !} \sum_{(i, j, k) \in \mathcal{P}_3^n} \left\langle \mathbf{s}\left( \mathbf{Y}_i - \mathbf{Y}_k \right), \; \mathbf{s}\left( \mathbf{Y}_j - \mathbf{Y}_k \right) \right\rangle 
\nonumber\\
\text{and } &
R_{n, 2}
= \frac{(n - 2) !}{n !} \sum_{(i, j) \in \mathcal{P}_2^n} \left\langle \mathbf{s}\left( \mathbf{Y}_i - \mathbf{Y}_j \right), \; \mathbf{s}\left( \mathbf{Y}_i - \mathbf{Y}_j \right) \right\rangle .\nonumber
\end{align}
Now, from an application of the Cauchy-Schwarz inequality, we get that $ | R_{n, 1} | \le 1 $ and $ | R_{n, 2} | \le 1 $. Therefore, from \eqref{thm2:eq2}, we have
$ E[ n_k \| \bar{\mathbf{R}}_k^\# \|^2 ] \le 2 K $,
and consequently $ n_k \| \bar{\mathbf{R}}_k^\# \|^2 $ is stochastically bounded for all $ k $. Hence, $ SS_n^\# $ is stochastically bounded.

Hence, from \eqref{thm2:eq1}, we get that for every $ 0 < \alpha < 1 $, the power of a level $ \alpha $ test based on the permutation procedure converges to $ 1 $ as $ n \to \infty $ and $ M \to \infty $.
\end{proof}

\begin{proof}[Proof of \autoref{thm:2linear}]
Under $ \text{H}_0' $ in \eqref{h_0l}, $ \text{H}_0 $ in \eqref{h_0} holds, and from \autoref{thm:bootstrapperm}, we get that for every $ 0 < \alpha < 1 $, the sizes of a level $ \alpha $ test based on the asymptotic and the permutation procedures converge to $ \alpha $ as $ n \to \infty $ and $ M \to \infty $.

Next, let $ \mathbf{X} $ and $ \mathbf{X}' $ be independent random elements having distribution $ P_0 $. Consider a random element $ \mathbf{Z} $ such that $ \mathbf{Z} = \left( \mathbf{X} + \boldsymbol{\mu}_k \right) - \mathbf{X}' $ with probability $ \lambda_k $ for $ k = 1, \ldots, K $.
The spatial distribution of $ \mathbf{Z} $ at a point $ \mathbf{z} $ is defined as $ \mathbf{S}\left( \mathbf{z} \right) = E\left[ \mathbf{s}\left( \mathbf{z} - \mathbf{Z} \right) \right] $ \cite[p.~1205]{chakraborty2014spatial}.
Since the support of $ P_0 $ is not contained in a straight line in $ \mathcal{H} $, it follows from the proof of Theorem 3.1 in \cite{chakraborty2014spatial} that $ \mathbf{S}\left( \mathbf{z}_1 \right) \neq \mathbf{S}\left( \mathbf{z}_2 \right) $ for $ \mathbf{z}_1 \neq \mathbf{z}_2 $.
So, $ \mathbf{S}\left( \boldsymbol{\mu}_k \right) = \mathbf{0} $ for all $ k $ implies $ \boldsymbol{\mu}_k = \boldsymbol{\mu}_l $ for all $ k \neq l $.
Note that $ \mathbf{S}\left( \boldsymbol{\mu}_k \right) = \sum_{l=1}^{K} \lambda_l E\left[ \mathbf{s}\left( \mathbf{X}_{k 1} - \mathbf{X}_{l 1} \right) \right] $, and $ E\left[ \bar{\mathbf{R}}_k \right] \to \mathbf{S}\left( \boldsymbol{\mu}_k \right) $ as $ n \to \infty $ since $ n^{-1} n_k \to \lambda_k $ as $ n \to \infty $ for all $ k $.
When $ \text{H}_0' $ is false, $ \boldsymbol{\mu}_k \neq \boldsymbol{\mu}_l $ for some $ k \neq l $, which implies $ \mathbf{S}\left( \boldsymbol{\mu}_k \right) \neq \mathbf{0} $ for some $ k $, and the proof follows from \autoref{thm:2}.
\end{proof}

We need the following lemma for the proof of \autoref{thm:3}. The arguments used in its proof are similar to those in the proof of Proposition 2.1 in \cite{cardot2013efficient}.
\begin{lemma} \label{lemma:lemma1}
Define $ \mathbf{f} : \mathcal{H} \to \mathcal{H} $ by $ \mathbf{f}( \mathbf{a} ) = E\left[ \mathbf{s}\left( \mathbf{X} - \mathbf{X}' + \mathbf{a} \right) \right] $.
Assume that the probability measure $ P_0 $ is non-atomic and not contained in any straight line in $ \mathcal{H} $,
and $ E\left[ \left\| \mathbf{X} - \mathbf{X}' \right\|^{-1} \right] < \infty $, where $ \mathbf{X} $ and $ \mathbf{X}' $ are independent random elements having identical distribution $ P_0 $.
Then, the Frech\'{e}t derivative of $ \mathbf{f}( \cdot ) $ at $ \mathbf{0} $ is $ E\left[ \mathbf{s}^{(1)}\left( \mathbf{X} - \mathbf{X}' \right) \right] $.
Further, for any $ \mathbf{h} \neq \mathbf{0} $, $ E\left[ \mathbf{s}^{(1)}\left( \mathbf{X} - \mathbf{X}' \right) \right]\left( \mathbf{h} \right) \neq \mathbf{0} $.
\end{lemma}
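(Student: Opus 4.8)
The plan is to make the derivative explicit, verify the candidate operator is well defined, establish Fréchet differentiability at $\mathbf{0}$ by a dominated-convergence argument on the first-order remainder, and finally prove non-degeneracy through a quadratic-form computation that invokes the non-collinearity hypothesis; the argument parallels the one for Proposition 2.1 in \cite{cardot2013efficient}. First I would compute, for $\mathbf{x} \neq \mathbf{0}$, the Fréchet derivative of $\mathbf{s}(\mathbf{x}) = \|\mathbf{x}\|^{-1}\mathbf{x}$, namely
\begin{align*}
\mathbf{s}^{(1)}(\mathbf{x})(\mathbf{h}) = \|\mathbf{x}\|^{-1}\bigl( \mathbf{h} - \langle \mathbf{h}, \mathbf{s}(\mathbf{x}) \rangle\, \mathbf{s}(\mathbf{x}) \bigr).
\end{align*}
The operator in parentheses is the orthogonal projection of $\mathbf{h}$ onto $\{\mathbf{s}(\mathbf{x})\}^\perp$, so its operator norm is at most $1$, whence $\|\mathbf{s}^{(1)}(\mathbf{x})\|_{op} \le \|\mathbf{x}\|^{-1}$. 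Since $P_0$ is non-atomic, $\mathbf{X} \neq \mathbf{X}'$ almost surely, so $\mathbf{s}^{(1)}(\mathbf{X} - \mathbf{X}')$ is defined a.s., and the bound $E\bigl[ \|\mathbf{s}^{(1)}(\mathbf{X} - \mathbf{X}')\|_{op} \bigr] \le E\bigl[ \|\mathbf{X} - \mathbf{X}'\|^{-1} \bigr] < \infty$ shows that $E\bigl[ \mathbf{s}^{(1)}(\mathbf{X} - \mathbf{X}') \bigr]$ is a well-defined bounded operator (as a Bochner integral).

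For the differentiability, I would write the remainder $\mathbf{g}(\mathbf{x}, \mathbf{h}) = \mathbf{s}(\mathbf{x} + \mathbf{h}) - \mathbf{s}(\mathbf{x}) - \mathbf{s}^{(1)}(\mathbf{x})(\mathbf{h})$, so that $\mathbf{f}(\mathbf{h}) - \mathbf{f}(\mathbf{0}) - E[\mathbf{s}^{(1)}(\mathbf{X} - \mathbf{X}')](\mathbf{h}) = E[\mathbf{g}(\mathbf{X} - \mathbf{X}', \mathbf{h})]$. Using $\|E[\,\cdot\,]\| \le E[\|\cdot\|]$, it suffices to show $\|\mathbf{h}\|^{-1} E[\|\mathbf{g}(\mathbf{X} - \mathbf{X}', \mathbf{h})\|] \to 0$ as $\mathbf{h} \to \mathbf{0}$. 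Pointwise, for each fixed $\mathbf{x} \neq \mathbf{0}$, the definition of $\mathbf{s}^{(1)}(\mathbf{x})$ gives $\|\mathbf{g}(\mathbf{x}, \mathbf{h})\|/\|\mathbf{h}\| \to 0$. For a dominating function I would invoke the elementary inequality $\|\mathbf{s}(\mathbf{a}) - \mathbf{s}(\mathbf{b})\| \le 2\|\mathbf{a} - \mathbf{b}\|/\|\mathbf{b}\|$ together with $\|\mathbf{s}^{(1)}(\mathbf{x})(\mathbf{h})\| \le \|\mathbf{h}\|/\|\mathbf{x}\|$ to obtain $\|\mathbf{g}(\mathbf{x}, \mathbf{h})\|/\|\mathbf{h}\| \le 3/\|\mathbf{x}\|$, a majorant independent of $\mathbf{h}$ that is integrable by hypothesis. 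Applying the dominated convergence theorem along an arbitrary sequence $\mathbf{h}_n \to \mathbf{0}$ then yields Fréchet differentiability of $\mathbf{f}$ at $\mathbf{0}$ with derivative $E[\mathbf{s}^{(1)}(\mathbf{X} - \mathbf{X}')]$.

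For the non-degeneracy, fix $\mathbf{h} \neq \mathbf{0}$ and compute the quadratic form
\begin{align*}
\bigl\langle E[\mathbf{s}^{(1)}(\mathbf{X} - \mathbf{X}')](\mathbf{h}), \mathbf{h} \bigr\rangle = E\Bigl[ \|\mathbf{X} - \mathbf{X}'\|^{-1} \bigl( \|\mathbf{h}\|^2 - \langle \mathbf{s}(\mathbf{X} - \mathbf{X}'), \mathbf{h} \rangle^2 \bigr) \Bigr].
\end{align*}
Since $\|\mathbf{s}(\mathbf{X} - \mathbf{X}')\| = 1$ a.s., the Cauchy--Schwarz inequality makes the integrand non-negative (and integrable, being dominated by $\|\mathbf{h}\|^2/\|\mathbf{X} - \mathbf{X}'\|$). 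If this expectation vanished, equality in Cauchy--Schwarz would hold a.s., forcing $\mathbf{X} - \mathbf{X}'$ to lie in $\mathrm{span}(\mathbf{h})$ a.s.; projecting onto $\mathbf{h}^\perp$ via the orthogonal projection $Q$ gives $Q\mathbf{X} = Q\mathbf{X}'$ a.s., and since $\mathbf{X}, \mathbf{X}'$ are i.i.d.\ this forces $Q\mathbf{X}$ to be a.s.\ constant, so the support of $P_0$ is contained in a straight line, contradicting the hypothesis. Hence the quadratic form is strictly positive and $E[\mathbf{s}^{(1)}(\mathbf{X} - \mathbf{X}')](\mathbf{h}) \neq \mathbf{0}$.

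I expect the main obstacle to be the dominated-convergence step: a naive second-order Taylor majorant fails because $\mathbf{s}^{(2)}$ blows up like $\|\mathbf{x}\|^{-2}$ near the origin and the segment $[\mathbf{x}, \mathbf{x} + \mathbf{h}]$ may pass arbitrarily close to $\mathbf{0}$, so the construction of an $\mathbf{h}$-independent integrable majorant must instead rest on the first-order Lipschitz-type bound for $\mathbf{s}$ rather than on curvature. The i.i.d.-implies-constant argument underpinning the non-degeneracy is routine but should be stated carefully in the infinite-dimensional Hilbert setting.
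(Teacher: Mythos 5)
Your proposal is correct, and it diverges from the paper's own proof in an interesting way. For the differentiability part, both arguments start from the same formula for $\mathbf{s}^{(1)}(\mathbf{x})$ and the bound $\|\mathbf{s}^{(1)}(\mathbf{x})\| \le \|\mathbf{x}\|^{-1}$; the paper then simply asserts that integrability of $\|\mathbf{X}-\mathbf{X}'\|^{-1}$ lets one differentiate under the expectation, whereas you supply the missing justification (the Lipschitz-type bound $\|\mathbf{s}(\mathbf{a})-\mathbf{s}(\mathbf{b})\| \le 2\|\mathbf{a}-\mathbf{b}\|/\|\mathbf{b}\|$, the $\mathbf{h}$-independent majorant $3/\|\mathbf{x}\|$, and dominated convergence along sequences), which is a genuine strengthening of the write-up rather than a deviation. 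The real difference is in the non-degeneracy step. The paper, following Proposition 2.1 of the Cardot et al.\ reference, proves the stronger coercivity statement: it extracts (via Zorn's lemma) a maximal subspace on which the spatial sign of $\mathbf{X}-\mathbf{X}'$ has degenerate projections, finds two orthonormal vectors $\mathbf{v}_1,\mathbf{v}_2$ in its orthogonal complement, uses continuity of $t \mapsto \mathrm{Var}\left( \left\langle \cos(t)\mathbf{v}_1+\sin(t)\mathbf{v}_2, \mathbf{s}(\mathbf{X}-\mathbf{X}') \right\rangle \right)$ on a compact set to get a uniform lower bound $l_0$, and after truncation at level $M$ concludes $\left\langle \mathbf{h}, E\left[\mathbf{s}^{(1)}(\mathbf{X}-\mathbf{X}')\right](\mathbf{h}) \right\rangle \ge (l_0/2M)\|\mathbf{h}\|^2$, i.e., the operator is bounded below. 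You instead start from the same quadratic-form identity but argue by the equality case of Cauchy--Schwarz: vanishing of the form forces $\mathbf{X}-\mathbf{X}' \in \mathrm{span}(\mathbf{h})$ a.s., and the elementary fact that two i.i.d.\ random elements that agree a.s.\ must be degenerate (applied to the projections onto $\mathbf{h}^\perp$) pushes the support of $P_0$ onto a straight line, contradicting the hypothesis. Your route is shorter and avoids Zorn's lemma, compactness, and truncation entirely; what it gives up is the uniform lower bound $\|E[\mathbf{s}^{(1)}(\mathbf{X}-\mathbf{X}')](\mathbf{h})\| \ge c\|\mathbf{h}\|$, which the paper's method delivers. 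Since the lemma as stated, and its only use (in the proof of the shrinking-alternatives theorem, to conclude $\mathbf{U}_0 \neq \mathbf{0}$), require only pointwise non-vanishing, your weaker conclusion fully suffices.
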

\begin{proof}
The Frech\'{e}t derivative of $ \mathbf{s}( \mathbf{x} ) = \| \mathbf{x} \|^{-1} \mathbf{x} $ at $ \mathbf{x} $, denoted as $ \mathbf{s}^{(1)}( \mathbf{x} ) $, exists for all $ \mathbf{x} \neq \mathbf{0} $, and it is a bounded linear operator on $ \mathcal{H} $ defined by
\begin{align*}
\mathbf{s}^{(1)}\left( \mathbf{x} \right)\left( \mathbf{h} \right)
= \frac{1}{\left\| \mathbf{x} \right\|} \left( \mathbf{h} - \left\langle \mathbf{h}, \frac{\mathbf{x}}{\left\| \mathbf{x} \right\|} \right\rangle \frac{\mathbf{x}}{\left\| \mathbf{x} \right\|} \right) ,
\text{ where } \mathbf{h} \in \mathcal{H} .
\end{align*}
Note that
$ \left\| \mathbf{s}^{(1)}\left( \mathbf{x} \right) \right\| \le \left\| \mathbf{x} \right\|^{-1} $.
Since by assumption $ E\left[ \left\| \mathbf{X} - \mathbf{X}' \right\|^{-1} \right] < \infty $,
which implies $ P\left[ \mathbf{X} - \mathbf{X}' = \mathbf{0} \right] \allowbreak = 0 $,
we get that
$ E\left[ \mathbf{s}^{(1)}\left( \mathbf{X} - \mathbf{X}' \right) \right] $ exists and is the Frech\'{e}t derivative of $ \mathbf{f}( \mathbf{a} ) = E\left[ \mathbf{s}\left( \mathbf{X} - \mathbf{X}' + \mathbf{a} \right) \right] $ at $ \mathbf{a} = \mathbf{0} $.

Next, we shall show that the operator $ E\left[ \mathbf{s}^{(1)}\left( \mathbf{X} - \mathbf{X}' \right) \right] $ is bounded below.
The arguments in this part of the proof is similar to those in the proof of Proposition 2.1 in \cite{cardot2013efficient}.
Let $ \mathbf{P}_{\mathbf{h}}\left( \cdot \right) $ be the orthogonal projection operator on the orthogonal complement of $ \mathbf{h} $.
Note that
\begin{align*}
\left\langle \mathbf{h}, E\left[ \mathbf{s}^{(1)}\left( \mathbf{X} - \mathbf{X}' \right) \right]\left( \mathbf{h} \right) \right\rangle
= E\left[ \frac{1}{\left\| \mathbf{X} - \mathbf{X}' \right\|} \left\| \mathbf{P}_{\mathbf{h}}\left( \frac{\mathbf{X} - \mathbf{X}'}{\left\| \mathbf{X} - \mathbf{X}' \right\|} \right) \right\|^2 \right] .
\end{align*}
Consider the set of all subspaces $ \mathcal{K} \subseteq \mathcal{H} $ such that $ \text{Var}\left( \left\langle \mathbf{v},\, \left\| \mathbf{X} - \mathbf{X}' \right\|^{-1} \left( \mathbf{X} - \mathbf{X}' \right) \right\rangle \right) = 0 $ for all $ \mathbf{v} \in \mathcal{K} $.
Suppose this set is non-empty. Then, by Zorn's lemma, it has a maximal element, which we denote as $ \mathcal{K}_0 $.
The maximality of $ \mathcal{K}_0 $ implies that $ \text{Var}\left( \left\langle \mathbf{v},\, \left\| \mathbf{X} - \mathbf{X}' \right\|^{-1} \left( \mathbf{X} - \mathbf{X}' \right) \right\rangle \right) > 0 $ for all $ \mathbf{v} \in \mathcal{K}_0^\perp $, the orthogonal complement of $ \mathcal{K}_0 $.
Since $ \mathbf{X} $ and $ \mathbf{X}' $ are independent random elements having distribution $ P_0 $, which is not contained in a straight line, $ \mathcal{K}_0^\perp $ must have dimension at least 2.
Let $ \mathbf{v}_1 , \mathbf{v}_2 $ be two orthogonal unit vectors in $ \mathcal{K}_0^\perp $.
Define $ g\left( t \right) = \text{Var}\left( \left\langle \cos\left( t \right) \mathbf{v}_1 + \sin\left( t \right) \mathbf{v}_2 ,\, \left\| \mathbf{X} - \mathbf{X}' \right\|^{-1} \left( \mathbf{X} - \mathbf{X}' \right) \right\rangle \right) $ for $ 0 \le t \le 2 \pi $.
Note that $ g\left( t \right) > 0 $ for all $ t $, because $ \cos\left( t \right) \mathbf{v}_1 + \sin\left( t \right) \mathbf{v}_2 \in \mathcal{K}_0^\perp $.
Since $ g\left( \cdot \right) $ is continuous on a compact set, we get that there is $ l_0 > 0 $ such that for all unit vectors in $ \text{span}\left\{ \mathbf{v}_1, \mathbf{v}_2 \right\} $, $ \text{Var}\left( \left\langle \mathbf{v} ,\, \left\| \mathbf{X} - \mathbf{X}' \right\|^{-1} \left( \mathbf{X} - \mathbf{X}' \right) \right\rangle \right) \ge l_0 $.
The orthogonal complement of $ \mathbf{h} $ and $ \text{span}\left\{ \mathbf{v}_1, \mathbf{v}_2 \right\} $ must have nonempty intersection, which implies that there is a unit vector $ \mathbf{v}_0 \in \text{span}\left\{ \mathbf{v}_1, \mathbf{v}_2 \right\} $ such that $ \left\langle \mathbf{v}_0, \mathbf{h} \right\rangle = 0 $.
Hence, for all $ \mathbf{y} \in \mathcal{H} $, $ \left\| \mathbf{P}_{\mathbf{h}}\left( \mathbf{y} \right) \right\|^2 \ge \left\langle \mathbf{v}_0 , \mathbf{y} \right\rangle^2 $.
We take $ M $ large enough such that $ \text{Var}\left[ \left\langle \mathbf{v}_0 , \left\| \mathbf{X} - \mathbf{X}' \right\|^{-1} \left( \mathbf{X} - \mathbf{X}' \right) \mathbb{I}\left( \left\| \mathbf{X} - \mathbf{X}' \right\| \le M \right) \right\rangle \right] > \left( l_0 / 2 \right) $.
Therefore,
\begin{align}
& \left\langle \mathbf{h}, E\left[ \mathbf{s}^{(1)}\left( \mathbf{X} - \mathbf{X}' \right) \right]\left( \mathbf{h} \right) \right\rangle
\nonumber\\
& \ge E\left[ \frac{1}{\left\| \mathbf{X} - \mathbf{X}' \right\|} \left\| \mathbf{P}_{\mathbf{h}}\left( \frac{\mathbf{X} - \mathbf{X}'}{\left\| \mathbf{X} - \mathbf{X}' \right\|} \right) \right\|^2 \mathbb{I}\left( \left\| \mathbf{X} - \mathbf{X}' \right\| \le M \right) \right]
\nonumber\\
& \ge \frac{1}{M} E\left[ \left\langle \mathbf{v}_0 , \left\| \mathbf{X} - \mathbf{X}' \right\|^{-1} \left( \mathbf{X} - \mathbf{X}' \right) \mathbb{I}\left( \left\| \mathbf{X} - \mathbf{X}' \right\| \le M \right) \right\rangle^2 \right]
\nonumber\\
& \ge \frac{1}{M} \text{Var}\left[ \left\langle \mathbf{v}_0 , \left\| \mathbf{X} - \mathbf{X}' \right\|^{-1} \left( \mathbf{X} - \mathbf{X}' \right) \mathbb{I}\left( \left\| \mathbf{X} - \mathbf{X}' \right\| \le M \right) \right\rangle \right]
\ge \frac{l_0}{2 M} > 0 .
\label{lemma1:eq2}
\end{align}
From \eqref{lemma1:eq2}, we get that for every $ \mathbf{h} \in \mathcal{H} $,
\begin{align}
& 0 < \frac{l_0}{2 M} \left\| \mathbf{h} \right\|^2
\le \left\langle \mathbf{h}, E\left[ \mathbf{s}^{(1)}\left( \mathbf{X} - \mathbf{X}' \right) \right]\left( \mathbf{h} \right) \right\rangle
\le \left\| E\left[ \mathbf{s}^{(1)}\left( \mathbf{X} - \mathbf{X}' \right) \right]\left( \mathbf{h} \right) \right\| \left\| \mathbf{h} \right\| ,
\nonumber\\
& \text{which implies} \quad
\left\| E\left[ \mathbf{s}^{(1)}\left( \mathbf{X} - \mathbf{X}' \right) \right]\left( \mathbf{h} \right) \right\| 
\ge \frac{l_0}{2 M} \left\| \mathbf{h} \right\| > 0 .
\label{lemma1:eq3}
\end{align}
Hence, from \eqref{lemma1:eq3}, we get that for any $ \mathbf{h} \neq \mathbf{0} $, $ E\left[ \mathbf{s}^{(1)}\left( \mathbf{X} - \mathbf{X}' \right) \right]\left( \mathbf{h} \right) \neq \mathbf{0} $.
\end{proof}

\begin{proof}[Proof of \autoref{thm:3}]
Define $ \mathbf{f} : \mathcal{H} \to \mathcal{H} $ by $ \mathbf{f}( \mathbf{a} ) = E\left[ \mathbf{s}\left( \mathbf{X} - \mathbf{X}' + \mathbf{a} \right) \right] $.
From \autoref{lemma:lemma1},
we get that the Frech\'{e}t derivative of $ \mathbf{f}( \cdot ) $ at $ \mathbf{0} $ is $ E\left[ \mathbf{s}^{(1)}\left( \mathbf{X} - \mathbf{X}' \right) \right] $.
Define $ \mathbf{X}_k = \mathbf{X}_{k 1} - \boldsymbol{\mu}_k $ for $ k = 1, \ldots, K $. Then, $ \mathbf{X}_1, \ldots, \mathbf{X}_K $ are independent and identically distributed random elements with common distribution $ P_0 $.
Since $ \mathbf{s}\left( \mathbf{x} - \mathbf{y} \right)
= - \mathbf{s}\left( \mathbf{y} - \mathbf{x} \right) $, we get
$ E\left[ \mathbf{s}\left( \mathbf{X}_k - \mathbf{X}_l \right) \right] = - E\left[ \mathbf{s}\left( \mathbf{X}_k - \mathbf{X}_l \right) \right] = \mathbf{0} $ for all $ k, l $.
Using this fact and a Taylor expansion of $ \mathbf{f}( \cdot ) $ at $ \mathbf{0} $, we have
\begin{align}
E\left[ \mathbf{s}\left( \mathbf{X}_{k 1} - \mathbf{X}_{l 1} \right) \right]
& = E\left[ \mathbf{s}^{(1)}\left( \mathbf{X} - \mathbf{X}' \right) \right]\left( \frac{1}{\sqrt{n}} \left( \boldsymbol{\delta}_k - \boldsymbol{\delta}_l \right) \right) 
+ o\left( \frac{1}{\sqrt{n}} \left\| \boldsymbol{\delta}_k - \boldsymbol{\delta}_l \right\| \right) .
\label{thm3:eq1}
\end{align}
Since $ n^{-1} n_k \to \lambda_k \in ( 0, 1 ) $ for all $ k $ as $ n \to \infty $, from \eqref{thm3:eq1} we get
\begin{align}
E\left[ \sqrt{n_k} \bar{\mathbf{R}}_k \right]
= \sqrt{n_k} \sum_{l=1}^{K} \frac{n_l}{n} E\left[ \mathbf{s}\left( \mathbf{X}_{k 1} - \mathbf{X}_{l 1} \right) \right]
\to E\left[ \mathbf{s}^{(1)}\left( \mathbf{X} - \mathbf{X}' \right) \right] \left( \sqrt{\lambda_k} \left( \boldsymbol{\delta}_k - \bar{\boldsymbol{\delta}} \right) \right)
\label{thm3:eq2}
\end{align}
as $ n \to \infty $.
From \eqref{thm3:eq2}, we have $ E\left[ \mathbf{U}_n \right] \to \mathbf{U}_0 = \left( \mathbf{u}_1, \ldots, \mathbf{u}_K \right) $ as $ n \to \infty $, where
\begin{align*}
\mathbf{u}_k = E\left[ \mathbf{s}^{(1)}\left( \mathbf{X} - \mathbf{X}' \right) \right]\left( \sqrt{\lambda_k} \left( \boldsymbol{\delta}_k - \bar{\boldsymbol{\delta}} \right) \right)
\end{align*}
for all $ k = 1, \ldots, K $.
From Slutsky's theorem and \autoref{thm:1}, we get $ \mathbf{U}_n \stackrel{w}{\longrightarrow} G\left( \mathbf{U}_0, \boldsymbol{\Sigma} \right) $ as $ n \to \infty $. Consequently, from an application of the Karhunen-Loeve expansion and the mapping theorem \cite[Theorem 2.7, p.~21]{billingsley2013convergence}, we get
$ \| \mathbf{U}_n \|^2 \stackrel{w}{\longrightarrow} \left\| \mathbf{U}_0 - \sum_{i=1}^{\infty} \langle \mathbf{U}_0, \boldsymbol{\beta}_i \rangle \right\|^2 + \sum_{i=1}^{\infty} \left( \langle \mathbf{U}_0, \boldsymbol{\beta}_i \rangle + \sqrt{\alpha_i} \mathbf{Z}_i \right)^2 $ as $ n \to \infty $.

Let $ \boldsymbol{\delta} = \left( \boldsymbol{\delta}_1, \ldots, \boldsymbol{\delta}_K \right)^t $, $ \mathbf{D} = \text{Diag}\left( \lambda_1, \ldots, \lambda_K \right) $, $ \mathbf{L} = \left( \lambda_1, \ldots, \lambda_K \right)^t $, $ \mathbf{I}_K $ be the $ K \times K $ identity matrix and $ \mathbf{1}_K $ be the column vector of length $ K $ with all elements equal to 1.
Then,
$ \mathbf{U}_0 = E\left[ \mathbf{s}^{(1)}\left( \mathbf{X} - \mathbf{X}' \right) \right]\left( \sqrt{\mathbf{D}} \left( \mathbf{I}_K - \mathbf{1}_K \mathbf{L}^t \right) \boldsymbol{\delta} \right) $.
Note that $ \left( \mathbf{I}_K - \mathbf{1}_K \mathbf{L}^t \right) $ is a $ K \times K $ idempotent matrix with trace $ ( K - 1 ) $ and its null space contains $ \mathbf{1}_K $.
So, for any $ \boldsymbol{\delta} $ such that $ \boldsymbol{\delta}_k \neq \boldsymbol{\delta}_l $ for some $ k $ and $ l $, $ \left( \mathbf{I}_K - \mathbf{1}_K \mathbf{L}^t \right) \boldsymbol{\delta} \neq \mathbf{0} $, which implies
$ \mathbf{U}_0 = E\left[ \mathbf{s}^{(1)}\left( \mathbf{X} - \mathbf{X}' \right) \right]\left( \sqrt{\mathbf{D}} \left( \mathbf{I}_K - \mathbf{1}_K \mathbf{L}^t \right) \boldsymbol{\delta} \right) \neq \mathbf{0} $
since $ \sqrt{\mathbf{D}} $ is non-singular, and $ E\left[ \mathbf{s}^{(1)}\left( \mathbf{X} - \mathbf{X}' \right) \right]\left( \mathbf{h} \right) \neq \mathbf{0} $ for any $ \mathbf{h} \neq \mathbf{0} $ from \autoref{lemma:lemma1}.
\end{proof}

\begin{proof}[Proof of \autoref{thm:5}]
\emph{Proof of (a):}
Since $ E[ \| \mathbf{X} \|^2 ] < \infty $ and $ \sqrt{n_k} n^{-\frac{1}{2}} \to \sqrt{\lambda_k} $ as $ n \to \infty $ for all $ k $, from Theorem 1.1 in \cite{kundu2000central} and Slutsky's Theorem, we get
\begin{align}
& \left( \sqrt{n_1} \left( \bar{\mathbf{X}}_{1 \cdot} - E[ \mathbf{X} ] \right), \ldots, \sqrt{n_K} \left( \bar{\mathbf{X}}_{K \cdot} - E[ \mathbf{X} ] \right) \right) \nonumber\\
& \stackrel{w}{\longrightarrow}
\left( \sqrt{\lambda_1} \boldsymbol{\delta}_1 + \mathbf{Y}_1, \ldots, \sqrt{\lambda_K} \boldsymbol{\delta}_K + \mathbf{Y}_K \right)
\label{thm5eq1}
\end{align}
as $ n \to \infty $.
Define $ \mathbf{g}\left( \cdot \right) ,\, \mathbf{g}_n\left( \cdot \right) : \mathcal{H}^K \to \mathbb{R} $ as
$ \mathbf{g}( \mathbf{x}_1, \ldots, \mathbf{x}_K ) 
= \sum_{k < l} \left\| \mathbf{x}_k - \lambda_l^{-1/2} \lambda_k^{1/2} \mathbf{x}_l \right\|^2 $ and $ \mathbf{g}_n( \mathbf{x}_1, \ldots, \mathbf{x}_K ) =
\sum_{k < l} \left\| \mathbf{x}_k - n_l^{-1/2} n_k^{1/2} \mathbf{x}_l \right\|^2 $ .
Since $ \mathbf{g}\left( \cdot \right) $ is a continuous function and for every $ \mathbf{x}_1, \ldots, \mathbf{x}_K $, $ \left| \mathbf{g}_n( \mathbf{x}_1, \ldots, \mathbf{x}_K ) - \mathbf{g}( \mathbf{x}_1, \ldots, \mathbf{x}_K ) \right| \longrightarrow 0 $ as $ n \to \infty $, from \eqref{thm5eq1}, Theorem 2.7 in \cite[p.~21]{billingsley2013convergence} and an application of Slutsky's Theorem, we get
\begin{align*}
CFF_n
& = \mathbf{g}_n\left( \sqrt{n_1} \left( \bar{\mathbf{X}}_{1 \cdot} - E[ \mathbf{X} ] \right), \ldots, \sqrt{n_K} \left( \bar{\mathbf{X}}_{K \cdot} - E[ \mathbf{X} ] \right) \right)
\\
& \stackrel{w}{\longrightarrow}
\mathbf{g}\left( \sqrt{\lambda_1} \boldsymbol{\delta}_1 + \mathbf{Y}_1, \ldots, \sqrt{\lambda_K} \boldsymbol{\delta}_K + \mathbf{Y}_K \right) \\
& \quad\quad= \sum_{k < l} \left\| \sqrt{\lambda_k} \boldsymbol{\delta}_k + \mathbf{Y}_k - \lambda_l^{-1/2} \lambda_k^{1/2} \left( \sqrt{\lambda_l} \boldsymbol{\delta}_l + \mathbf{Y}_l \right) \right\|^2
\end{align*}
as $ n \to \infty $.
Since under $ \text{H}_0 $ described in \eqref{h_0} in the main paper, $ CFF_n \stackrel{w}{\longrightarrow} \sum_{k < l} \left\| \mathbf{Y}_k - \lambda_l^{-1/2} \lambda_k^{1/2} \mathbf{Y}_l \right\|^2 $ as $ n \to \infty $, the proof of part \emph{(a)} of the theorem follows.

\emph{Proof of (b):}
From the arguments in \cite[p.~151]{zhang2013analysis}, we get that under the null hypothesis $ \text{H}_0 $ in \eqref{h_0} in the main paper,
\begin{align}
ZC_n \stackrel{w}{\longrightarrow} \sum_{i=1}^{\infty} \gamma_i \boldsymbol{\chi}_i^2
\quad\text{as } n \to \infty .
\label{thm6eq1}
\end{align}

Let $ \mathbf{Z}_n = \left( \sqrt{n_1} \left( \bar{\mathbf{X}}_{1 \cdot} - E[ \mathbf{X} ] \right), \ldots, \sqrt{n_K} \left( \bar{\mathbf{X}}_{K \cdot} - E[ \mathbf{X} ] \right) \right)^t $.
Since $ E[ \| \mathbf{X} \|^2 ] < \infty $ and $ \sqrt{n_k} n^{-\frac{1}{2}} \to \sqrt{\lambda_k} $ for all $ k $ as $ n \to \infty $, from Theorem 1.1 in \cite{kundu2000central} and Slutsky's Theorem, we get that under the shrinking alternatives described in \eqref{asympower1} in the main paper,
$ \mathbf{Z}_n \stackrel{w}{\longrightarrow} \mathbf{Z} $ as $ n \to \infty $.
Let $ \mathbf{p}_n = \left( \sqrt{n_1}, \ldots, \sqrt{n_K} \right)^t $.
We have
$ \left( \mathbf{I}_K - \left( \mathbf{p}_n \mathbf{p}_n^t \right) / n \right) \to \left( \mathbf{I}_K - \mathbf{p}_0 \mathbf{p}_0^t \right) $ as $ n \to \infty $.
Since
$ ZC_n = \sum_{k=1}^{K} n_k \left\| \bar{\mathbf{X}}_{k \cdot} - \bar{\mathbf{X}}_{\cdot \cdot} \right\|^2
= \mathbf{Z}_n^t \left( \mathbf{I}_K - \left( \mathbf{p}_n \mathbf{p}_n^t \right) / n \right) \mathbf{Z}_n $, from Theorem 2.7 in \cite[p.~21]{billingsley2013convergence} and an application of Slutsky's theorem, we get that under the shrinking alternatives described in \eqref{asympower1},
\begin{align}
ZC_n \stackrel{w}{\longrightarrow}
\mathbf{Z}^t \left( \mathbf{I}_K - \mathbf{p}_0 \mathbf{p}_0^t \right) \mathbf{Z}
\quad\text{as } n \to \infty .
\label{thm6eq2}
\end{align}
The proof follows from \eqref{thm6eq1} and \eqref{thm6eq2}.

\emph{Proof of (c):}
Under the condition of the theorem, from \cite{kundu2000central} we get that there exist independent Gaussian processes $ \mathbf{W}_k $ with mean $ \mathbf{0} $ and covariance operator $ \boldsymbol{\Omega}_{k} $, $ k = 1, \ldots, K $, such that
\begin{align*}
\max_{1 \le k \le K} \left\| n_k^{-1/2} \sum_{i = 1}^{n_k} \left[ \mathbf{X}_{k i} - \boldsymbol{\mu}_k \right] - \mathbf{W}_k \right\| \stackrel{P}{\longrightarrow} 0
\end{align*}
as $ n \to \infty $.
Let $ \boldsymbol{\phi}_{1}, \ldots, \boldsymbol{\phi}_{d} $ be the eigenfunctions corresponding to the $ d $ largest eigenvalues of $ \boldsymbol{\Omega} $.
Define $ \mathbf{Z}_k = \mathbf{W}_k - \sqrt{\lambda_k} \boldsymbol{\delta}_k $ for $ k = 1, \ldots, K $.
Define
\begin{align*}
& \mathbf{z}_k = \left( \langle \mathbf{Z}_k, \boldsymbol{\phi}_1 \rangle, \ldots, \langle \mathbf{Z}_k, \boldsymbol{\phi}_K \rangle \right)^t ,
\quad
\mathbf{z}_{\cdot\cdot} = \left( \sum_{l=1}^{K} \lambda_l \boldsymbol{\Psi}_{l}^{-1} \right)^{-1} \sum_{l=1}^{K} \sqrt{\lambda_l} \boldsymbol{\Psi}_{l}^{-1} \mathbf{z}_l \\
& \text{and}\quad
\mathbf{y}_k = \boldsymbol{\Psi}_{k}^{-1/2} \mathbf{z}_k .
\end{align*}
Note that $ \mathbf{z}_k $'s are independent $ d $-variate normal random vectors with dispersion matrix $ \boldsymbol{\Psi}_{k} $ and mean vector $ \sqrt{\lambda_k} \mathbf{d}_k $. Consequently, $ \mathbf{y}_k $'s are independent $ d $-variate normal random vectors with the identity matrix $ \mathbf{I}_{d} $ as dispersion and mean vector $ \sqrt{\lambda_k} \boldsymbol{\Psi}_{k}^{-1/2} \mathbf{d}_k $.
Using arguments similar to those used in the proof of Theorem 2.1 in \cite{horvath2015introduction}, one can show that
\begin{align}
HR_n
= \sum_{k = 1}^{K} \left( \mathbf{z}_k - \sqrt{\lambda_k} \mathbf{z}_{\cdot\cdot} \right)^t \boldsymbol{\Psi}_{k}^{-1} \left( \mathbf{z}_k - \sqrt{\lambda_k} \mathbf{z}_{\cdot\cdot} \right) + o_P\left( 1 \right)
\quad\text{as } n \to \infty .
\label{eq:hr1}
\end{align}
Now,
\begin{align}
& \sum_{k = 1}^{K} \left( \mathbf{z}_k - \sqrt{\lambda_k} \mathbf{z}_{\cdot\cdot} \right)^t \boldsymbol{\Psi}_{k}^{-1} \left( \mathbf{z}_k - \sqrt{\lambda_k} \mathbf{z}_{\cdot\cdot} \right) 
\nonumber\\
& = \sum_{k = 1}^{K} \mathbf{y}_k^t \mathbf{y}_k - \left( \sum_{k=1}^{K} \sqrt{\lambda_k} \boldsymbol{\Psi}_{k}^{-1/2} \mathbf{y}_k \right)^t \left( \sum_{k=1}^{K} \lambda_k \boldsymbol{\Psi}_{k}^{-1} \right)^{-1} \left( \sum_{k=1}^{K} \sqrt{\lambda_k} \boldsymbol{\Psi}_{k}^{-1/2} \mathbf{y}_k \right) 
\nonumber\\
& = \left[ \mathbf{y}_1^t, \ldots, \mathbf{y}_K^t \right] \left[ \mathbf{y}_1^t, \ldots, \mathbf{y}_K^t \right]^t \nonumber\\
& \quad
- \left[ \mathbf{y}_1^t, \ldots, \mathbf{y}_K^t \right]
\begin{bmatrix}
\sqrt{\lambda_1} \boldsymbol{\Psi}_{1}^{-1/2}\\
\vdots\\
\sqrt{\lambda_K} \boldsymbol{\Psi}_{K}^{-1/2}
\end{bmatrix}
\left( \sum_{k=1}^{K} \lambda_k \boldsymbol{\Psi}_{k}^{-1} \right)^{-1}
\left[ \sqrt{\lambda_1} \boldsymbol{\Psi}_{1}^{-1/2}, \ldots, \sqrt{\lambda_K} \boldsymbol{\Psi}_{K}^{-1/2} \right]
\begin{bmatrix}
\mathbf{y}_1\\
\vdots\\
\mathbf{y}_K
\end{bmatrix}
\nonumber\\
& = \left[ \mathbf{y}_1^t, \ldots, \mathbf{y}_K^t \right] \mathbf{A}_{K d \times K d} \left[ \mathbf{y}_1^t, \ldots, \mathbf{y}_K^t \right]^t .
\label{eq:hr2}
\end{align}
Since $ \mathbf{A}_{K d \times K d} $ is an idempotent matrix of rank $ d ( K - 1 ) $, from \eqref{eq:hr2}, we get that the random variable $ \sum_{k = 1}^{K} \left( \mathbf{z}_k - \sqrt{\lambda_k} \mathbf{z}_{\cdot\cdot} \right)^t \boldsymbol{\Psi}_{k}^{-1} \left( \mathbf{z}_k - \sqrt{\lambda_k} \mathbf{z}_{\cdot\cdot} \right) $ follows a noncentral chi square distribution with $ d (K - 1) $ degrees of freedom and noncentrality parameter $ \big\| \widetilde{\mathbf{M}}_{d ( K - 1 ) \times 1} \big\|^2 $. The proof is complete from \eqref{eq:hr1}.
\end{proof}

Recall that in \autoref{sec:2}, we have estimated the covariance operator $ \boldsymbol{\Sigma} $ in \autoref{thm:1} and \autoref{coro:1} by $ \widehat{\boldsymbol{\Sigma}}_n $.
To generate observations from $ G\left( \mathbf{0}, \widehat{\boldsymbol{\Sigma}}_n \right) $, the estimate $ \widehat{\boldsymbol{\Sigma}}_n $ needs to be a non-negative definite operator. The non-negative definiteness of $ \widehat{\boldsymbol{\Sigma}}_n $ is established in the following theorem under the condition that each of the $ K $ groups has at least two distinct observations. Clearly, this condition is satisfied almost surely when the underlying distributions of the groups are non-atomic.
\begin{theorem} \label{thm:covnnd}
The operator $ \widehat{\boldsymbol{\Sigma}}_n $ is non-negative definite whenever each of the $ K $ groups has at least two distinct observations.
\end{theorem}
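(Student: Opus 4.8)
The plan is to prove that $\widehat{\boldsymbol{\Sigma}}_n$ is non-negative definite by exhibiting it as a genuine empirical covariance operator on $\mathcal{H}^K$, that is, as a finite, non-negatively weighted sum of rank-one operators of the form $\mathbf{u}\otimes\mathbf{u}$. Under the paper's convention $(\mathbf{x}\otimes\mathbf{y})(\mathbf{w}) = \langle\mathbf{w},\mathbf{x}\rangle\mathbf{y}$, extended blockwise to $\mathcal{H}^K$ so that $\mathbf{u}\otimes\mathbf{u}$ acts as $\mathbf{w}\mapsto\langle\mathbf{w},\mathbf{u}\rangle\mathbf{u}$, one has for every $\mathbf{u} = (\mathbf{u}_1,\ldots,\mathbf{u}_K)$ and $\mathbf{v} = (\mathbf{v}_1,\ldots,\mathbf{v}_K)$ in $\mathcal{H}^K$ that $\langle\mathbf{v}, (\mathbf{u}\otimes\mathbf{u})\mathbf{v}\rangle = \big(\sum_{k}\langle\mathbf{v}_k,\mathbf{u}_k\rangle\big)^2 \ge 0$. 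Hence any non-negatively weighted sum of such operators is non-negative definite, and it suffices to produce the representation.

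The influence vectors are read off from the H\'ajek-type decomposition in the proof of \autoref{thm:1}, where $\mathbf{W}_n = (\sqrt{n_1}\mathbf{S}_1,\ldots,\sqrt{n_K}\mathbf{S}_K)$ and each $\mathbf{S}_k$ is a sum of single-observation conditional-expectation terms, so that $\boldsymbol{\Sigma}$ is the covariance of a sum of independent contributions. First I would form the empirical analogue of each contribution. Writing $\widehat{\mathbf{g}}_k(\mathbf{x}) = n_k^{-1}\sum_{j=1}^{n_k}\mathbf{s}(\mathbf{x} - \mathbf{X}_{kj})$ for the empirical conditional direction from group $k$ to $\mathbf{x}$, and recalling the spatial rank $\mathbf{R}(\mathbf{x}) = n^{-1}\sum_l n_l\widehat{\mathbf{g}}_l(\mathbf{x})$, I would attach to each observation $\mathbf{X}_{lr}$ the vector $\widehat{\mathbf{T}}_{l,r} = (\sqrt{n_1}\widehat{\boldsymbol{\phi}}_{1,l}(\mathbf{X}_{lr}),\ldots,\sqrt{n_K}\widehat{\boldsymbol{\phi}}_{K,l}(\mathbf{X}_{lr})) \in \mathcal{H}^K$, where $\widehat{\boldsymbol{\phi}}_{k,l}(\mathbf{x}) = \mathbb{I}(l = k)\,n_k^{-1}\mathbf{R}(\mathbf{x}) - n^{-1}\widehat{\mathbf{g}}_k(\mathbf{x})$. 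This mirrors exactly the population influence obtained by conditioning a single observation inside $\mathbf{S}_k$.

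The central step is to verify the algebraic identity
\begin{align*}
\widehat{\boldsymbol{\Sigma}}_n = \sum_{l=1}^K \frac{1}{n_l - 1}\sum_{r=1}^{n_l}\big(\widehat{\mathbf{T}}_{l,r} - \overline{\widehat{\mathbf{T}}}_{l}\big)\otimes\big(\widehat{\mathbf{T}}_{l,r} - \overline{\widehat{\mathbf{T}}}_{l}\big),
\end{align*}
where $\overline{\widehat{\mathbf{T}}}_{l} = n_l^{-1}\sum_r \widehat{\mathbf{T}}_{l,r}$, by checking it block by block: the $(k_1,k_2)$ operator block of the right-hand side must reproduce $\boldsymbol{\sigma}^{(n)}_{k_1 k_2}$. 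Expanding the inner products defining $\widehat{\boldsymbol{\phi}}_{k,l}$ turns each block into a within-group-$l$ sample covariance of empirical direction averages, which I would match term by term against the three pieces $\mathbf{C}_n(k_1,k_2,l)$, $\mathbf{C}_n(l,k_2,k_1)$, $\mathbf{C}_n(k_1,l,k_2)$ and the indicator piece in the definition of $\boldsymbol{\sigma}^{(n)}_{k_1 k_2}$. Equivalently, one may compute the scalar quadratic form $\langle\mathbf{v}, \widehat{\boldsymbol{\Sigma}}_n\mathbf{v}\rangle$ directly and recognise it as a sum $\sum_l (n_l - 1)^{-1}\sum_r c_{l,r}^2$ of squares of real numbers $c_{l,r}$, which makes non-negativity transparent.

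I expect the index bookkeeping in this matching to be the main obstacle. The definition of $\mathbf{C}_n(i,j,k)$ carries the divisor $n_k - 1$ together with a product-of-means correction that is not the textbook $\tfrac{n_k}{n_k-1}$ centering, and the three pieces distribute their ``middle-index'' (conditioning) observations differently; keeping track of which observation plays the conditioning role, and of the self-pairings where $\mathbf{s}(\mathbf{0}) = \mathbf{0}$, is where the computation is delicate. It is precisely these self-pairings that make the hypothesis relevant: requiring that each group have at least two distinct observations guarantees $n_l \ge 2$, so every divisor $n_l - 1$ is a positive integer, and distinctness ensures that within group $l$ exactly the diagonal terms $\mathbf{s}(\mathbf{X}_{lr} - \mathbf{X}_{lr}) = \mathbf{0}$ drop out, so that $n_l - 1$ is the correct effective count and the empirical-covariance representation is exact. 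Once the identity is established, non-negative definiteness follows immediately from the rank-one remark in the first paragraph.
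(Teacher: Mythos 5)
There is a genuine gap: the identity at the heart of your plan is false, so the rank-one argument never gets started. Two things go wrong. First, a weighting error: $\sqrt{n_k}\,\mathbf{S}_k$ aggregates $n_l$ independent contributions from group $l$, so the group-$l$ sample covariance of your influence vectors must enter with weight $n_l$; your right-hand side uses $\frac{1}{n_l-1}\sum_r$ where it would need $\frac{n_l}{n_l-1}\sum_r$. Concretely, writing $\widehat{\mathrm{Cov}}_l(f,g) = \frac{1}{n_l-1}\sum_{r}\bigl(f(\mathbf{X}_{lr})-\bar f_l\bigr)\otimes\bigl(g(\mathbf{X}_{lr})-\bar g_l\bigr)$, the $(k,k)$ block of your right-hand side contains $\frac{1}{n_k}\widehat{\mathrm{Cov}}_k(\mathbf{R},\mathbf{R})$, whereas $\boldsymbol{\sigma}^{(n)}_{kk}$ contains $\widehat{\mathrm{Cov}}_k(\mathbf{R},\mathbf{R})$ with coefficient one, coming from $\sum_{l_1,l_2}n_{l_1}n_{l_2}n^{-2}\mathbf{C}_n(l_1,l_2,k)$. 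Second, and fatally even after the weights are repaired: with $m_i^{(k)} = n_k^{-1}\sum_{c}\widehat{\mathbf{g}}_i(\mathbf{X}_{kc})$, the non-textbook centering you yourself flagged means exactly that
\begin{align*}
\mathbf{C}_n(i,j,k) = \widehat{\mathrm{Cov}}_k\bigl(\widehat{\mathbf{g}}_i,\widehat{\mathbf{g}}_j\bigr) + \frac{1}{n_k-1}\, m_i^{(k)}\otimes m_j^{(k)} ,
\end{align*}
and the (weight-corrected) within-group empirical covariance of your vectors $\widehat{\mathbf{T}}_{l,r}$ reproduces precisely the $\widehat{\mathrm{Cov}}$ parts of all the $\mathbf{C}_n$'s and nothing more. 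What is left over is the operator assembled from the terms $\frac{1}{n_k-1}m_i^{(k)}\otimes m_j^{(k)}$ with the same $+,-,-$ sign pattern as in $\boldsymbol{\sigma}^{(n)}_{k_1 k_2}$; this remainder is nonzero in finite samples (the $m_i^{(k)}$ are averaged between-group directions), it carries negatively signed cross terms, and its non-negative definiteness is the genuinely non-trivial half of the theorem, which your argument never touches.

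This is exactly where the paper's proof diverges from yours: it splits $\widehat{\boldsymbol{\Sigma}}_n = \widehat{\boldsymbol{\Sigma}}_n^{(1)} + \widehat{\boldsymbol{\Sigma}}_n^{(2)}$ along the decomposition above, handles $\widehat{\boldsymbol{\Sigma}}_n^{(1)}$ by a covariance representation equivalent to your corrected one, and then constructs a second, separate covariance representation for the remainder $\widehat{\boldsymbol{\Sigma}}_n^{(2)}$, in which the randomness enters only through auxiliary scalar functions $F_k$ chosen to have within-group sample variance exactly one. That construction is also where the hypothesis of the theorem genuinely enters: if all observations in some group coincided, no function could have positive sample variance on that group, and no such $F_k$ would exist. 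Your reading of the hypothesis is off on both counts: the self-pairings $\mathbf{s}(\mathbf{0})=\mathbf{0}$ drop out whether or not observations are distinct, and mere existence of two observations already gives $n_l-1\ge 1$; distinctness is needed only for the leftover piece you have not treated. To complete your proof you would need to restore the $n_l$ weights and then supply an independent argument for the product-of-means remainder.
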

\begin{proof}
The non-negative definiteness of $ \widehat{\boldsymbol{\Sigma}}_n $ is established subject to the condition that each of the $ K $ groups, $ \{ \mathbf{X}_{k i} \midil i = 1, \ldots, n_k \} $, where $ k = 1, \ldots, K $, has at least two distinct observations. In that case, for every group of observations $ \{ \mathbf{X}_{k i} | i = 1, \ldots, n_k \} $, we can find a real-valued function $ F_k( \cdot ) $, such that
\begin{align}\label{eq:nnd1}
\frac{1}{n_k} \sum_{i=1}^{n_k} \left\{ F_k\left( \mathbf{X}_{k i} \right) - \frac{1}{n_k} \sum_{j=1}^{n_k} F_k\left( \mathbf{X}_{k j} \right) \right\}^2 = 1 .
\end{align}
For example, suppose $ \mathbf{X}_{k i_1} $ and $ \mathbf{X}_{k i_2} $ be distinct observations in the $ k $th group. Then the sample variance of the projections on $ ( \mathbf{X}_{k i_1} - \mathbf{X}_{k i_2} ) $ is positive, i.e.,
\begin{align*}
\frac{1}{n_k} \sum_{i=1}^{n_k} \left\{ \left\langle \mathbf{X}_{k i_1} - \mathbf{X}_{k i_2}, \mathbf{X}_{k i} \right\rangle - \frac{1}{n_k} \sum_{j=1}^{n_k} \left\langle \mathbf{X}_{k i_1} - \mathbf{X}_{k i_2}, \mathbf{X}_{k j} \right\rangle \right\}^2 > 0 ,
\end{align*}
and one can define
\begin{align*}
& F_k( \mathbf{x} ) \\
&= \left[ \frac{1}{n_k} \sum_{i=1}^{n_k} \left\{ \left\langle \mathbf{X}_{k i_1} - \mathbf{X}_{k i_2}, \mathbf{X}_{k i} \right\rangle - \frac{1}{n_k} \sum_{j=1}^{n_k} \left\langle \mathbf{X}_{k i_1} - \mathbf{X}_{k i_2}, \mathbf{X}_{k j} \right\rangle \right\}^2 \right]^{-\frac{1}{2}} \left\langle \mathbf{X}_{k i_1} - \mathbf{X}_{k i_2}, \mathbf{x} \right\rangle .
\end{align*}
We use these functions and the fact that every covariance operator is non-negative definite so establish the non-negative definiteness of $ \widehat{\boldsymbol{\Sigma}}_n $.

Define
\begin{align*}
& \mathbf{D}_n^{(1)}( i, j, k ) \\
& =
\frac{1}{n_k} \sum_{l_k = 1}^{n_k} \left[ \left( \frac{1}{n_{i}} \sum_{l_{i} = 1}^{n_{i}} \sqrt{\frac{n_k}{n_k - 1}} \mathbf{s}\left( \mathbf{X}_{i l_{i}} - \mathbf{X}_{k l_k} \right) \right) \otimes \left( \frac{1}{n_{j}} \sum_{l_{j} = 1}^{n_{j}} \sqrt{\frac{n_k}{n_k - 1}} \mathbf{s}\left( \mathbf{X}_{j l_{j}} - \mathbf{X}_{k l_k} \right) \right) \right] \nonumber\\
& \quad
- \left( \frac{1}{n_{i} n_k} \sum_{l_{i} = 1}^{n_{i}} \sum_{l_k = 1}^{n_k} \sqrt{\frac{n_k}{n_k - 1}} \mathbf{s}\left( \mathbf{X}_{i l_{i}} - \mathbf{X}_{k l_k} \right) \right) \otimes \left( \frac{1}{n_{j} n_k} \sum_{l_{j} = 1}^{n_{j}} \sum_{l_k = 1}^{n_k} \sqrt{\frac{n_k}{n_k - 1}} \mathbf{s}\left( \mathbf{X}_{j l_{j}} - \mathbf{X}_{k l_k} \right) \right) \\
\intertext{and}
& \mathbf{D}_n^{(2)}( i, j, k ) \\
& =
\left( \frac{1}{n_{i} n_k} \sum_{l_{i} = 1}^{n_{i}} \sum_{l_k = 1}^{n_k} \frac{\mathbf{s}\left( \mathbf{X}_{i l_{i}} - \mathbf{X}_{k l_k} \right)}{\sqrt{n_k - 1}} \right) \otimes \left( \frac{1}{n_{j} n_k} \sum_{l_{j} = 1}^{n_{j}} \sum_{l_k = 1}^{n_k} \frac{\mathbf{s}\left( \mathbf{X}_{j l_{j}} - \mathbf{X}_{k l_k} \right)}{\sqrt{n_k - 1}} \right) ,
\end{align*}
where $ i, j, k \in \{ 1, \ldots, K \} $. It can be verified that $ \mathbf{C}_n( i, j, k ) = \mathbf{D}_n^{(1)}( i, j, k ) + \mathbf{D}_n^{(2)}( i, j, k ) $ for all $ i, j, k \in \{ 1, \ldots, K \} $. Define the two operators $ \widehat{\boldsymbol{\Sigma}}_n^{(1)} $ and $ \widehat{\boldsymbol{\Sigma}}_n^{(2)} $ as
\begin{align*}
\widehat{\boldsymbol{\Sigma}}_n^{(1)} = \left( \boldsymbol{\sigma}^{(1, n)}_{k_1 k_2} \right)_{K \times K}
\quad\text{and}\quad
\widehat{\boldsymbol{\Sigma}}_n^{(2)} = \left( \boldsymbol{\sigma}^{(2, n)}_{k_1 k_2} \right)_{K \times K} ,
\end{align*}
where
\begin{align*}
\boldsymbol{\sigma}^{(1, n)}_{k_1 k_2}
& = \frac{\sqrt{n_{k_1} n_{k_2}}}{n} \sum_{l=1}^{K} \frac{n_l}{n}
\left[ \mathbf{D}_n^{(1)}( k_1, k_2, l ) - \mathbf{D}_n^{(1)}( l, k_2, k_1 ) - \mathbf{D}_n^{(1)}( k_1, l, k_2 ) \right] \\
& \quad
+ \sum_{l_1 = 1}^{K} \sum_{l_2 = 1}^{K} \frac{n_{l_1} n_{l_2}}{n^2} \mathbf{D}_n^{(1)}( l_1, l_2, k_1 ) \mathbb{I}( k_1 = k_2 ) \\
\intertext{and}
\boldsymbol{\sigma}^{(2, n)}_{k_1 k_2}
& = \frac{\sqrt{n_{k_1} n_{k_2}}}{n} \sum_{l=1}^{K} \frac{n_l}{n}
\left[ \mathbf{D}_n^{(2)}( k_1, k_2, l ) - \mathbf{D}_n^{(2)}( l, k_2, k_1 ) - \mathbf{D}_n^{(2)}( k_1, l, k_2 ) \right] \\
& \quad
+ \sum_{l_1 = 1}^{K} \sum_{l_2 = 1}^{K} \frac{n_{l_1} n_{l_2}}{n^2} \mathbf{D}_n^{(2)}( l_1, l_2, k_1 ) \mathbb{I}( k_1 = k_2 ) .
\end{align*}
Then, we have $ \widehat{\boldsymbol{\Sigma}}_n = \widehat{\boldsymbol{\Sigma}}_n^{(1)} + \widehat{\boldsymbol{\Sigma}}_n^{(2)} $. We shall show that both $ \widehat{\boldsymbol{\Sigma}}_n^{(1)} $ and $ \widehat{\boldsymbol{\Sigma}}_n^{(2)} $ are non-negative definite, which would imply that $ \widehat{\boldsymbol{\Sigma}}_n $ is also non-negative definite.

For $ k = 1, \ldots, K $, define the $ K $ independent random elements $ \tilde{\mathbf{X}}_k $ by
\begin{align*}
\tilde{\mathbf{X}}_k = \mathbf{X}_{k i} \;\text{with probability}\; \frac{1}{n_k}, \; i = 1, \ldots, n_k .
\end{align*}
Define
\begin{align*}
\tilde{\mathbf{S}}_k^{(1)}
& =
\sum_{l=1}^{K} \frac{n_l}{n} \left[ \sqrt{\frac{1}{n_k - 1}} \frac{1}{n_l} \sum_{i_l = 1}^{n_l} \mathbf{s}\left( \tilde{\mathbf{X}}_{k} - \mathbf{X}_{l i_l} \right)
+ \sqrt{\frac{1}{n_l - 1}} \frac{1}{n_k} \sum_{i_k = 1}^{n_k} \mathbf{s}\left( \mathbf{X}_{k i_k} - \tilde{\mathbf{X}}_{l} \right) \right] \\
\intertext{and}
\tilde{\mathbf{S}}_k^{(2)}
& =
\sum_{l=1}^{K} \frac{n_l}{n} \left\{ \frac{1}{n_k n_l} \sum_{i_k = 1}^{n_k} \sum_{i_l = 1}^{n_l} \mathbf{s}\left( \mathbf{X}_{k i_k} - \mathbf{X}_{l i_l} \right) \right\}
\left[ \frac{F_l\left( \tilde{\mathbf{X}}_l \right)}{\sqrt{n_l \left( n_l - 1 \right)}}
+ \frac{F_k\left( \tilde{\mathbf{X}}_k \right)}{\sqrt{n_k \left( n_k - 1 \right)}} \right] .
\end{align*}
Also define
\begin{align*}
\tilde{\mathbf{W}}_n^{(1)} = \left( \sqrt{n_1} \tilde{\mathbf{S}}_1^{(1)}, \ldots, \sqrt{n_K} \tilde{\mathbf{S}}_K^{(1)} \right)
\quad\text{and}\quad
\tilde{\mathbf{W}}_n^{(2)} = \left( \sqrt{n_1} \tilde{\mathbf{S}}_1^{(2)}, \ldots, \sqrt{n_K} \tilde{\mathbf{S}}_K^{(2)} \right) .
\end{align*}
Let $ \widetilde{\text{Cov}}( \cdot, \cdot ) $ denote the covariance between functions of $ \tilde{\mathbf{X}}_k $ with respect to the distributions of the $ K $ random elements $ \tilde{\mathbf{X}}_k $, $ k = 1, \ldots, K $, conditional on all the observations $ \mathbf{X}_{k i} $. Then, using \eqref{eq:nnd1}, it can be verified that
\begin{align*}
& \widetilde{\text{Cov}}\left( \sqrt{n_{k_1}} \tilde{\mathbf{S}}_{k_1}^{(1)}, \sqrt{n_{k_2}} \tilde{\mathbf{S}}_{k_2}^{(1)} \right)
= \boldsymbol{\sigma}^{(1, n)}_{k_1 k_2}
\quad\text{and}\quad
\widetilde{\text{Cov}}\left( \sqrt{n_{k_1}} \tilde{\mathbf{S}}_{k_1}^{(2)}, \sqrt{n_{k_2}} \tilde{\mathbf{S}}_{k_2}^{(2)} \right)
= \boldsymbol{\sigma}^{(2, n)}_{k_1 k_2}
\end{align*}
for all $ k_1, k_2 $. This implies that
\begin{align*}
\widetilde{\text{Cov}}\left( \tilde{\mathbf{W}}_n^{(1)}, \tilde{\mathbf{W}}_n^{(1)} \right)
= \widehat{\boldsymbol{\Sigma}}_n^{(1)}
\quad\text{and}\quad
\widetilde{\text{Cov}}\left( \tilde{\mathbf{W}}_n^{(2)}, \tilde{\mathbf{W}}_n^{(2)} \right)
= \widehat{\boldsymbol{\Sigma}}_n^{(2)} ,
\end{align*}
which in turn imply that $ \widehat{\boldsymbol{\Sigma}}_n^{(1)} $ and $ \widehat{\boldsymbol{\Sigma}}_n^{(2)} $ are covariance operators, and hence non-negative definite. Therefore, $ \widehat{\boldsymbol{\Sigma}}_n $ is also non-negative definite whenever each of the groups has at least two distinct observations.
\end{proof}

\section{Comparison of asymptotic and permutation implementations of the SS test} \label{asym_perm}
Here, we present the level and the power comparisons of the asymptotic and the permutation implementations of the SS test.
\begin{table}[h]
\begin{center}
\caption{Estimated sizes of the asymptotic and the permutation implementations of the SS test under nominal level 5\%; $ n_1 = n_2 = n_3 = 5, 10, 20 \text{ and } 40 $.}
\label{table:asymboot1}
\begin{tabular} {l|cc|cc}  
\hline 
& \multicolumn{2}{c|}{$ n_1 = n_2 = n_3 = 5 $} & \multicolumn{2}{c}{$ n_1 = n_2 = n_3 = 10 $} \\\hline
$ P_0 $ & Asymptotic & Permutation & Asymptotic & Permutation \\
\hline 
SBM & 0.050 & 0.055 & 0.054 & 0.051 \\
$ t_{1, \text{SBM}} $ & 0.047 & 0.047 & 0.052 & 0.045 \\
$ t_{3, \text{SBM}} $ & 0.058 & 0.057 & 0.050 & 0.050 \\
SBM with impurity & 0.046 & 0.047 & 0.053 & 0.047 \\
$ t_{1, \text{SBM}} $ with impurity & 0.046 & 0.047 & 0.048 & 0.047 \\
$ t_{3, \text{SBM}} $ with impurity & 0.058 & 0.057 & 0.055 & 0.050 \\
GBM & 0.037 & 0.041 & 0.055 & 0.049 \\
$ ( \text{SBM} )^2 $ & 0.036 & 0.047 & 0.053 & 0.057 \\
$ ( t_{3, \text{SBM}} )^2 $ & 0.034 & 0.053 & 0.042 & 0.047 \\
\hline
\hline 
& \multicolumn{2}{c|}{$ n_1 = n_2 = n_3 = 20 $} & \multicolumn{2}{c}{$ n_1 = n_2 = n_3 = 40 $} \\\hline
$ P_0 $ & Asymptotic & Permutation & Asymptotic & Permutation \\
\hline 
SBM & 0.057 & 0.043 & 0.051 & 0.053 \\
$ t_{1, \text{SBM}} $ & 0.055 & 0.053 & 0.057 & 0.052 \\
$ t_{3, \text{SBM}} $ & 0.046 & 0.043 & 0.054 & 0.046 \\
SBM with impurity & 0.055 & 0.047 & 0.049 & 0.044 \\
$ t_{1, \text{SBM}} $ with impurity & 0.055 & 0.042 & 0.059 & 0.047 \\
$ t_{3, \text{SBM}} $ with impurity & 0.054 & 0.050 & 0.045 & 0.041 \\
GBM & 0.052 & 0.044 & 0.050 & 0.046 \\
$ ( \text{SBM} )^2 $ & 0.056 & 0.051 & 0.052 & 0.047 \\
$ ( t_{3, \text{SBM}} )^2 $ & 0.047 & 0.049 & 0.053 & 0.049 \\
\hline 
\end{tabular}
\end{center}
\end{table}
We present the estimated sizes and the power curves of the two implementations at nominal level 5\% in the models considered in \autoref{simulation}. For the size study, we consider 3 groups and the underlying distribution $ P_0 $ of the groups are as in \autoref{simulation} with 4 cases of the group sizes: $ n_1 = n_2 = n_3 = 5, 10, 20 \text{ and } 40 $. For the power study, we consider the sizes of the 3 groups as 10, 10 and 10, and the underlying distribution $ P_0 $ of the groups are as in \autoref{simulation}. The shifts of the 3 groups,
$ \boldsymbol{\mu}_1 $, $ \boldsymbol{\mu}_2 $ and $ \boldsymbol{\mu}_3 $, are kept the same as in \autoref{simulation}. The data generation process is also the same.
The number of random permutations taken is 1000, and the sizes and the powers of the test procedures are estimated using 1000 independent replications.
\begin{figure}
	\centering
	\includegraphics[width=1\linewidth]{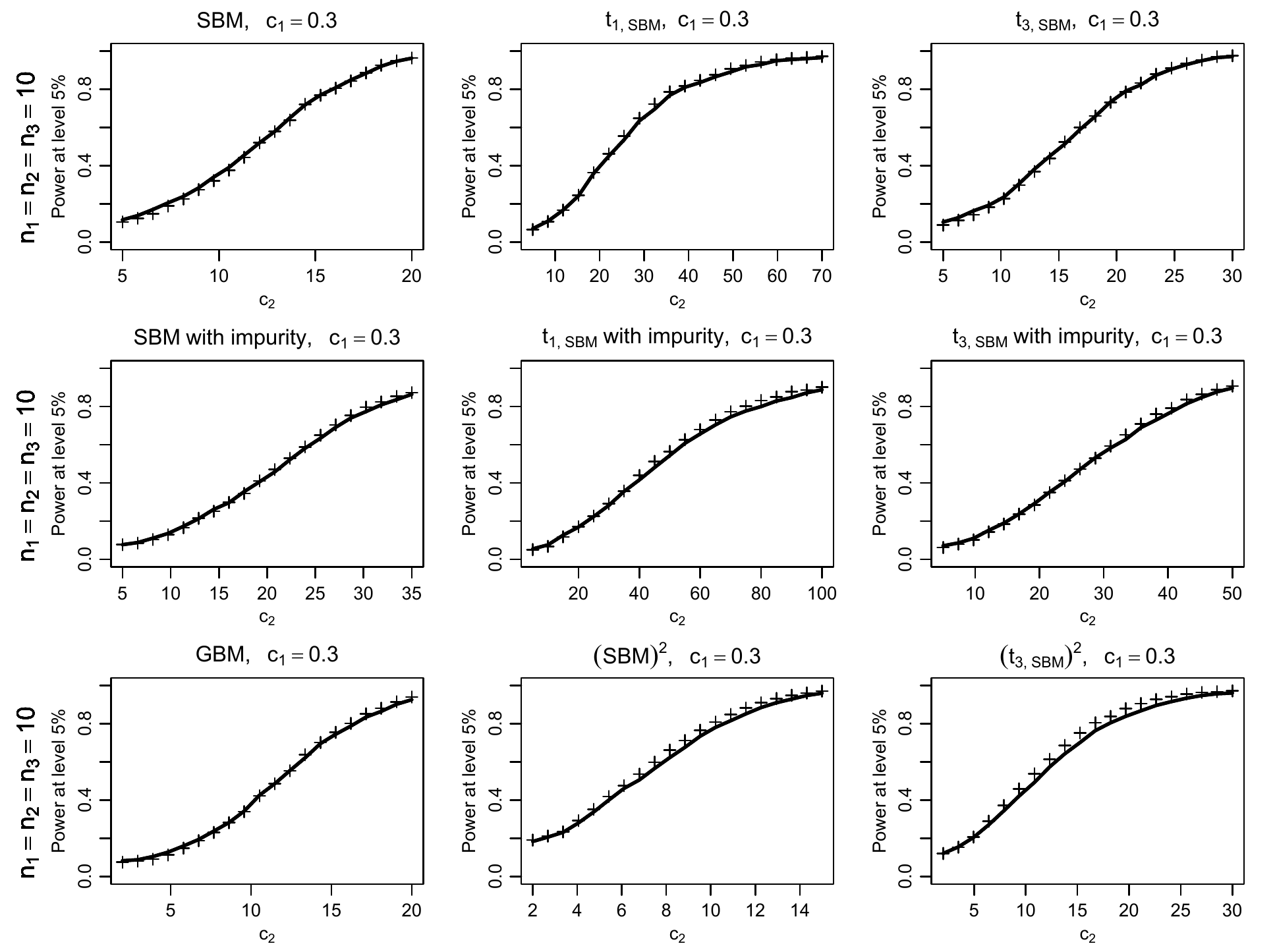}
	\caption{Estimated power curves of the asymptotic implementation ($ \boldsymbol{-} $) and the permutation implementation (++) of the SS test under nominal level 5\%.}
	\label{fig:anovaasympermcomp}
\end{figure}

The estimated sizes of the two implementations are presented in \autoref{table:asymboot1}.
The estimated power curves are presented in \autoref{fig:anovaasympermcomp}.
We note that the powers of the asymptotic and the permutation implementations are virtually identical. But the size of the asymptotic implementation is for a few cases of the underlying distribution somewhat lower than the nominal level of 5\% when the group sizes are small, i.e., 5. The estimated sizes of the permutation implementation are satisfactory for both small and larger sample sizes.

\section*{R Codes}
The R functions to compute the SS test are available in \url{https://github.com/joydeepchowdhury/spatialsign_functional_anova}.

\end{appendix}


\bibliographystyle{imsart-number} 
\bibliography{bibliography_database}       


\end{document}